\documentclass[letterpaper,11pt]{article}
\usepackage{amsmath}
\usepackage{physics}
\usepackage{amsthm}
\usepackage{mathrsfs}
\usepackage{fullpage}
\usepackage{color}
\usepackage{bbold}
\newtheorem{theorem}{Theorem}[section]
\newtheorem{corollary}{Corollary}[section]
\newtheorem{lemma}[theorem]{Lemma}
\newtheorem{fact}[theorem]{Fact}

\theoremstyle{definition}
\newtheorem{definition}{Definition}
\theoremstyle{remark}
\newtheorem{remark}[theorem]{Remark}
\newtheorem{example}{Example}
\usepackage{algorithm}
\usepackage{algorithmic}
\usepackage{graphicx}
\usepackage{url}
\usepackage{subcaption}
\definecolor{Blue}{rgb}{0.0,0.2,0.6}
\usepackage{hyperref}
\hypersetup{
    linktocpage=true,
    colorlinks=true,				
    linkcolor=Blue,				
    citecolor=Blue,				
    urlcolor=Blue,			
}

\newcommand{\argmin}{\mathop{\mathrm{arg\,min}}}
\newcommand{\dist}{\mathop{\mathrm{dist}}}
\newcommand{\dt}{\mathop{\mathrm{dist}}\nolimits_2}
\newcommand{\pj}{\mathop{\texttt{P}}}
\newcommand\lablast{\addtocounter{equation}{1}\tag{\theequation}}
\allowdisplaybreaks

\title{Function Privatization in the Local Model}

\author{
Yuting Liang\thanks{Department of Computer Science, University of Toronto. \href{mailto:yliang@cs.toronto.edu}{yliang@cs.toronto.edu}}
\and
Tian Shu\thanks{Computer Science and Engineering, Hong Kong University of Science and Technology. \href{mailto:tshu@connect.ust.hk}{tshu@cse.ust.hk}  
}
\and
Ke Yi\thanks{Computer Science and Engineering, Hong Kong University of Science and Technology. \href{mailto:yike@cse.ust.hk}{yike@cse.ust.hk}}
}
\date{}
\begin{document}
\maketitle
\begin{abstract}
We study the problem of privately releasing functions, with a particular focus on curves, which are images of continuous functions on some finite interval. Many types of data exist naturally as curves, such as trajectory data or $1$D density curves. 
We shall primarily be interested in the local model setting, where the function to be privatized captures data belonging to one individual, which is the more challenging setting with limited prior work. 

Under the standard notion of local differential privacy (DP), any two arbitrarily different functions are required to be made indistinguishable by privatization, which is too strong to allow meaningful utility; we thus work with a generalized notion of DP known as Geo-Privacy (GP), which allows functions far apart to be distinguished more easily while providing strong protection for near functions. To demonstrate the effectiveness of our framework, we provide experimental evaluation on several datasets.
\end{abstract}
\section{Introduction}
A fundamental problem in data privatization is the following: Suppose a user possesses data in the form of a sequence, where each record can be either 1D (e.g., prices over a period or ECG data) or multi-dimensional (e.g., a trajectory).  How can they privatize the data before sending it to an untrusted analyst?  More abstractly, the data can be modeled as a function $q:[0,T] \rightarrow \mathbb{R}^n$, and we thus term the problem \textit{function privatization}.

Naively, one could privatize some subset of the function values $q(t_1),\dots, q(t_k)$ separately, each using some existing privatization method.  However, this solution consumes privacy linear in $k$. 
It is also not a theoretically elegant solution as it only obtains the function values at discrete points (often chosen heuristically) while $f$ can be defined over the reals.  Importantly, this naive solution ignores the  correlations among the function values, which often exist in real-world data.  In the extreme case, suppose we know \textit{a priori} that the function is a constant.  Then privatizing the function at any one point suffices.  On the other hand, if $q$ can really be an arbitrary function, then there might be little to no correlation and the naive solution is already the best one can hope for.  Thus, the central technical problem to be addressed in this paper is: how to design a privatization method that automatically exploits the correlation if it exists in the data, thereby achieving good utility on typical real-world data.

In our setup where each user privatizes their own data, the most popular privacy definition is the local model of differential privacy (DP), which requires the output to be $\varepsilon$-indistinguishable between any two different instances.  However, this is inappropriate for functional data as it insists a uniform privacy guarantee, e.g., the level of indistuinguishability is the same between $q_1(t):=2t+3$ and $q_2(t):=2.01t+2.99$ vs between $q_1(t):=2t+3$ and $q_3(t):=2000t-3000$.  This in turn leads to a privatized output of low utility and there is no effective function privatization method in the local DP model. 
Therefore, we adopt the notion of generalized differential privacy (GP) \cite{chatzikokolakis2013broadening, liang2023concentrated}, which is often adopted in a local model setting for data in a metric space as a more attractive alternative to local DP (e.g. \cite{andres2013geo,liu2021privacy,yang2022k,liang2024smooth}). GP stipulates that the level of indistuinguishability be proportional to the distance (using an appropriate metric) between the two instances (see Section \ref{sec:prelim} for precise definitions).  For functions, we will adopt the natural $L^2$ metric.  Then in the example above, GP will make $q_1$ and $q_2$ much more indistinguishable than $q_1$ and $q_3$.  Such a more refined privacy definition not only results in good utility (as will be achieved by our methods), but also captures the natural privacy requirement in a metric space.

\subsection{Our Contributions}
In this paper, we present a series of function privatization methods, requiring different correlation assumptions on the function: We start by considering linear functions (Section \ref{sec:priv_lin}), and then generalize the method to functions that can be represented as a linear combination of a fixed number $m$ of basis functions (Section \ref{sec:finitedim_func}).   We show that our method incurs an $L^2$ noise to the privatized function that is proportional to $m$ ($\sqrt{m}$ for CGP, a variant of GP).  Since the privacy noise grows with $m$, it is sometimes beneficial to use less basis functions to reduce the noise, at the expense of some approximation error.  In Section \ref{sec:gen_func}, we develop a method that automatically and privately chooses a suitable number of basis functions from a possibly infinite collection, while trying to balance the two sources of error.  In Section \ref{sec:privcurve}, we focus on curves, which are images of continuous functions on some finite interval.  Curves are best approximated using piecewise basis functions.  We develop algorithms to determine how the curve should be divided into pieces, approximate each piece, and then privatize. 

Compared with the naive solution, our methods enjoy the following advantages: (1) It captures the correlation in the data, in the sense that stronger assumptions (e.g., linear functions or functions that can be represented by a few basis functions) result in better utility.  (2) If the assumption does not strictly hold, our method can approximate it while automatically balancing the approximation error and privacy noise, which allows us to extract the hidden correlations present in the data.  (3) Our methods output functions defined over the reals, not only at discrete points.  (4) It supports vector-valued variables as well as vector-valued functions.  Finally, we conducted an extensive set of experiments to demonstrate the practical advantage of our methods using real-world functional data.

\subsection{Related Work}
The closest work to ours is that of Hall et al.  \cite{hall2013differential}, who studied the function privatization problem in the central model of DP.   In the central model, the trusted analyst has $N$ functions $q_i(\cdot), i=1,\dots,N$, where each $q_i$ depends on the private data $x_i$ and wishes to privatize their mean ${1\over N}\sum_i q_i(\cdot)$. They also reduce the privacy noise by exploiting the correlations in each $q_i(\cdot)$.  In particular, they study the case where $q_i(t):= e^{-{(x_i-t)^2}/{2b^2}}$ for fixed $b>0$, which corresponds to the Gaussian kernel density estimation (KDE) function at point $t$, where the $x_i$'s are assumed to have been drawn from some distribution.  

The work of \cite{wagner2023fast} also privatizes the Gaussian kernel function, where a factorized form $g(x)\cdot p(t)$ is used to approximate $\kappa_t(x)=\frac{1}{N}\sum_i e^{-{\|x_i-t\|^2}}$. There, the central curator releases a privatized $\tilde{g}(x)$, which can then be used by an analyst to perform further queries without further interaction with the curator. The work in \cite{alda2017bernstein} uses Bernstein polynomials to approximate a real-valued function $q_x$ parameterized by private data $x=(x_1,\dotsb,x_N)$; their method builds a lattice of the function domain and privatizes the evaluations of $q_x(\cdot)$ on all lattice points by adding Laplace noise, which requires the function $q_x$ to have small sensitivity w.r.t. changing $x$.

The works that privatize a set of queries by approximating the query functions are also related \cite{thaler2012faster, wang2016differentially}.  In \cite{thaler2012faster}, they showed that for private data $x\in (\{0,1\}^d)^N$, it is possible to approximate a set of $k$-way marginal queries by low-degree polynomials, which leads to efficient running time and requires lower sample complexity than previous works on $k$-way marginals. 
The work of \cite{wang2016differentially} considers privatizing a set of queries of the form $x\mapsto \frac{1}{N}\sum_{i}f(x_i)$, for private data $x$ with $x_i\in [-1,1]^d$ and smooth functions $f$ that can be approximated by trigonometric polynomials, assuming bounded derivatives of some order; there, the private data vector $x\in ([-1,1]^d)^N$ is embedded into a basis and privatized, so that the query answer $\frac{1}{N}\sum_{i}f(x_i)$ can be approximated by a dot product of the coefficients of $f$ and the privatized basis.

The problem of privately releasing trajectory data has been considered in central DP \cite{chen2012differentially,he2015dpt,gursoy2018utility}, where the goal is to generate a synthetic collection of trajectories that is distributionally similar to the original input collection of trajectories. In these works, a privatized empirical distribution on sequences of locations is constructed, from which the synthetic database can then be generated.
For a comprehensive list of works on trajectory data privatization under central DP, we refer the reader to the SoK paper of \cite{miranda2023sok}, which provides a detailed discussion of the works in terms of utility and limitations. This problem has also been attempted under the local model of DP, where the goal is to publish a privatized trajectory from a single input trajectory \cite{cunningham2021real}; however, due to the strong requirement of local-DP, it is not possible for a local-DP mechanism on this problem to have any meaningful utility, unless the data universe is restricted to a very small space.

There are a couple of works using GP for privatization of various types of queries \cite{liang2023concentrated, liang2024smooth}. In particular, \cite{liang2024smooth} also considered privatizing the Gaussian kernel function; however, their mechanism is for privatizing a single query and loses utility quickly when many queries are asked. \cite{liang2023concentrated} considered releasing a collection of points in a trajectory under GP using basic composition; however, such an approach is not suitable when the goal is to preserve a general likeness of the curve, as we will also demonstrate in our experiments.

\section{Preliminaries}
\label{sec:prelim}

\subsection{Geo-Privacy}
\label{sec:gp_prelim}
We briefly recall the definitions and basic properties of GP, which is defined on metric spaces. Let $(U,\dist)$ be a metric space. For $\Lambda > 0$, and $x,x'\in U$, we write $x\sim_{\Lambda}{x'}$ if $\dist(x,x')\le \Lambda$.
\begin{definition} [Geo-privacy, {Definition $6$ in \cite{liang2023concentrated}}]
\label{def:gp}
    Fix $\varepsilon, \delta \ge 0$, $\Lambda \in \mathbb{R}_{>0}\cup \{\infty\}$. A randomized mechanism $M:U\rightarrow V$ satisfies  $(\varepsilon,\delta,\Lambda)$ geo-privacy, or simply $(\varepsilon,\delta,\Lambda)$-GP, if for all measurable $S\subseteq V$ and all $x\sim_{\Lambda}x'$
    \[
    \Pr[M(x)\in S] \le e^{\varepsilon\dist(x,x')} \Pr[M(x')\in S]+\delta.
    \]
\end{definition}
The parameter $\varepsilon$ measures privacy loss per unit distance (e.g. $\varepsilon=0.001$ per meter using the Euclidean metric). 
We denote by $\mathcal{M}(x)$ the distribution of the random variable $M(x)$, and by $m(x)(y)$ its pdf at $y\in \mathcal{M}(x)$. Concentrated geo-privacy (CGP) is defined in terms of the R\'{e}nyi divergence of order $\alpha$ \cite{renyi1961measures, van2014renyi}, defined for distributions $\mathcal{P},\mathcal{Q}$ with pdf's $p(\cdot), q(\cdot)$, respectively:
\[D_{\alpha}(\mathcal{P}\|\mathcal{Q}) :=\frac{1}{\alpha-1}\log\left(\int_V p(y)^{\alpha}q(y)^{1-\alpha} dz\right).\]
\begin{definition} [Concentrated Geo-privacy , {Definition $3$ in \cite{liang2023concentrated}}]
\label{def:cgp}
    Fix $\rho \ge 0$. $\Lambda \in \mathbb{R}_{>0}\cup \{\infty\}$. A mechanism $M:U\rightarrow V$ satisfies $(\rho,\Lambda)$ concentrated geo-privacy, or simply $(\rho,\Lambda)$-CGP, if for all $x\sim_{\Lambda}x'$, and all $\alpha > 1$
    \[
    D_{\alpha}(\mathcal{M}(x)\|\mathcal{M}(x'))\le {\alpha}\rho \dist(x,x')^2.
    \]
\end{definition}
The parameter $\rho$ is privacy loss per unit distance squared. 
Note that by setting $\Lambda=1$ in the definitions above and using the Hamming metric as $\dist(\cdot,\cdot)$, one recovers the usual $(\varepsilon,\delta)$-DP \cite{dwork2006calibrating} and $\rho$-CDP \cite{bun2016concentrated} definitions of the central model; while setting $\dist(\cdot,\cdot)$ to be the discrete metric $\dist_{01}(x,x'):=\mathbb{1}\{x\neq x'\}$ yields the same definitions in the local model. The main appeal of the GP model is in its flexibility to choose a metric suitable for the application and data being studied.

We write $\varepsilon$-GP for $(\varepsilon,0,\infty)$-GP and $\rho$-CGP for $(\rho,\infty)$-{C}GP. Note that one way to show that a mechanism $M$ satisfies $\varepsilon$-GP, is to show that $M$ induces distributions on all pair of inputs $x,x'\in U$ such that $\mathcal{M}(x)$ and $\mathcal{M}(x')$ have the same support, and the ratio of their pdf's satisfies for all $y \in \mathcal{M}(x)$: $
\frac{m(x)(y)}{m(x')(y)} \le e^{\varepsilon \dist(x,x')}.$

An example of such mechanisms is the \textit{exponential mechanism}.
\begin{lemma} [Exponential mechanism for $\varepsilon$-GP \cite{chatzikokolakis2013broadening, mcsherry2007mechanism}]
\label{lm:lap_mech_gen}
    Let $(V,\dist_V)$ be a metric space. Let $g:U\rightarrow V$ be $K$-Lipschizt where $K>0$. The mechanism that, on input $x$, draws a $y$ from a distribution with pdf $m(x)(y)=c(\varepsilon,K) e^{-\frac{\varepsilon}{K}\dist_V(g(x),y)}$, where $c(\varepsilon,K)$ is a constant independent of the input $x$, is $\varepsilon$-GP.
\end{lemma}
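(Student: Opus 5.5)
The plan is the standard density-ratio argument described just before the statement. Once the outputs $\mathcal{M}(x)$ and $\mathcal{M}(x')$ have the same support and their pdf ratio is bounded by $e^{\varepsilon \dist(x,x')}$ pointwise, we can integrate over any measurable $S$ to get Definition~\ref{def:gp} with $\delta=0$ and $\Lambda=\infty$.

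First I would settle support and normalization. The density $m(x)(y)=c(\varepsilon,K)e^{-\frac{\varepsilon}{K}\dist_V(g(x),y)}$ is strictly positive on all of $V$ for every input $x$. Hence all output distributions share the common support $V$. The assumption that $c(\varepsilon,K)$ does not depend on $x$ will be used here. I read it as saying that $\int_V e^{-\frac{\varepsilon}{K}\dist_V(v,y)}\,dy$ is finite and independent of the center $v$, as happens for translation-invariant measures on $\mathbb{R}^n$. Under this reading the same constant normalizes every $m(x)$.

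Next, fix $x,x'\in U$ and $y\in V$. The constants cancel, leaving
\[
\frac{m(x)(y)}{m(x')(y)}=\exp\!\Big(\tfrac{\varepsilon}{K}\big(\dist_V(g(x'),y)-\dist_V(g(x),y)\big)\Big).
\]
The triangle inequality in $(V,\dist_V)$ gives $\dist_V(g(x'),y)-\dist_V(g(x),y)\le \dist_V(g(x),g(x'))$. The $K$-Lipschitz property of $g$ then bounds this by $K\dist(x,x')$. So the ratio is at most $e^{\varepsilon\dist(x,x')}$. Integrating $m(x)(y)\le e^{\varepsilon\dist(x,x')}m(x')(y)$ over any measurable $S\subseteq V$ yields $\Pr[M(x)\in S]\le e^{\varepsilon\dist(x,x')}\Pr[M(x')\in S]$ for all pairs $x,x'$, i.e.\ $\varepsilon$-GP.

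The only delicate point is the first step, that a single normalizing constant works for every center. It is not automatic on an arbitrary metric space. I would treat it as part of the hypothesis, as the statement phrases it. If needed, I would note that it holds for $\mathbb{R}^n$ (or a Hilbert coefficient space) with Lebesgue measure and a norm-induced metric, by translation invariance. Everything else is just the triangle inequality.
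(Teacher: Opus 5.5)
Your proposal is correct and matches the paper's intended argument. The paper only cites this lemma, but the remark just before it reduces $\varepsilon$-GP to a common support plus the pointwise bound $m(x)(y)/m(x')(y)\le e^{\varepsilon\dist(x,x')}$, which you derive exactly by cancelling the constants, using the triangle inequality in $V$, and then the $K$-Lipschitz property of $g$. Your caveat that the $x$-independence of $c(\varepsilon,K)$ must be a hypothesis is the right one: if the normalizer depended on the center, the same triangle-inequality argument would only bound its ratio by $e^{\varepsilon\dist(x,x')}$, costing the usual factor of two.
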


We also have the following basic mechanisms for vector-valued queries.
\begin{lemma} [Canonical mechanisms for vector-valued queries \cite{liang2023concentrated}]
\label{lm:canon_mech}
    Let $g:U\rightarrow V\subseteq \mathbb{R}^m$ be $K$-Lipschitz. Let $M:U\rightarrow\mathbb{R}^m$ be the mechanism defined by $M(x)=g(x)+\frac{K}{b}Z$, where $Z$ is an $m$-dimensional random vector. Then
    \begin{enumerate}
        \item $M$ is $\varepsilon$-GP if $b={\varepsilon}$, where $Z$ is drawn from the standard $m$-dimensional spherical Laplace distribution, denoted $Z\sim \mathrm{SLap}(m)$, with pdf $h(z)\propto e^{-\|z\|}$;
        
        \item $M$ is $\rho$-CGP if $b=\sqrt{2\rho}$, where $Z$ is drawn from the standard $m$-dimensional Gaussian distribution, denoted 
        $Z\sim\mathcal{N}(0,I_{{m\times m}})$, with pdf $h(z)\propto e^{-\frac{1}{2}{\|z\|^2}}$.
    \end{enumerate}
\end{lemma}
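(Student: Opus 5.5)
For the first item, the plan is to reduce it to Lemma~\ref{lm:lap_mech_gen}. Take $V=\mathbb{R}^m$ with the Euclidean metric. Write $M(x)=g(x)+\frac{K}{\varepsilon}Z$ with $Z\sim\mathrm{SLap}(m)$. A change of variables $y=g(x)+\frac{K}{\varepsilon}z$ shows that $M(x)$ has density
\[
m(x)(y)=c(\varepsilon,K)\,e^{-\frac{\varepsilon}{K}\|y-g(x)\|}.
\]
Here $c(\varepsilon,K)=(\varepsilon/K)^m/\int e^{-\|z\|}dz$ does not depend on $x$. This is exactly the exponential mechanism of Lemma~\ref{lm:lap_mech_gen}, so $M$ is $\varepsilon$-GP.

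As a direct check, the supports coincide (all of $\mathbb{R}^m$). By the triangle inequality and the Lipschitz property, the density ratio satisfies
\[
\frac{m(x)(y)}{m(x')(y)}=e^{\frac{\varepsilon}{K}\left(\|y-g(x')\|-\|y-g(x)\|\right)}\le e^{\frac{\varepsilon}{K}\|g(x)-g(x')\|}\le e^{\varepsilon\dist(x,x')}.
\]
By the remark after Definition~\ref{def:cgp}, this pointwise ratio bound gives $\varepsilon$-GP. Integrating it over any measurable $S$ gives the bound with $\delta=0$ and $\Lambda=\infty$.

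For the second item, $M(x)\sim\mathcal{N}(g(x),\sigma^2 I)$ with $\sigma=K/\sqrt{2\rho}$. I would use the standard closed form for the R\'enyi divergence between Gaussians with equal covariance:
\[
D_\alpha\big(\mathcal{N}(\mu,\sigma^2I)\,\|\,\mathcal{N}(\mu',\sigma^2I)\big)=\frac{\alpha\|\mu-\mu'\|^2}{2\sigma^2}.
\]
To derive it, complete the square in the exponent of $p^\alpha q^{1-\alpha}$. The result is $\exp\!\big(\alpha(\alpha-1)\|\mu-\mu'\|^2/(2\sigma^2)\big)$ times a Gaussian density centered at $\alpha\mu+(1-\alpha)\mu'$, which integrates to one. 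Taking the logarithm and dividing by $\alpha-1$ gives the formula.

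Plugging in $\sigma^2=K^2/(2\rho)$ and $\|g(x)-g(x')\|\le K\dist(x,x')$ gives
\[
D_\alpha\le\frac{\alpha\rho\|g(x)-g(x')\|^2}{K^2}\le\alpha\rho\,\dist(x,x')^2
\]
for all $\alpha>1$ and all $x,x'$. This is $\rho$-CGP.

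The only step with real content is the completing-the-square computation in the Gaussian case. It is routine, and it can be reduced to one dimension by rotating so that $\mu-\mu'$ lies along a coordinate axis. The remaining coordinates then cancel, because $p$ and $q$ agree on them and the integral factors.
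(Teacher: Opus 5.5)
Your proof is correct. Item 1 is an instance of the exponential mechanism of Lemma~\ref{lm:lap_mech_gen} with the Euclidean metric on $\mathbb{R}^m$ (equivalently, the pdf-ratio check), and item 2 follows from the closed-form R\'enyi divergence $\alpha\|\mu-\mu'\|^2/(2\sigma^2)$ between equal-covariance Gaussians with $\sigma^2=K^2/(2\rho)$. The paper gives no proof of its own and simply cites \cite{liang2023concentrated}; your argument is the standard one for these two mechanisms and uses only tools the paper itself sets up.
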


Similar to DP, GP also enjoys the following composition properties, which will be helpful for developing algorithms composed of multiple steps.
\begin{lemma} [Basic compositions \cite{andres2013geo,liang2023concentrated}]
    \label{lm:basic_comp}
    Fix $k\ge 2$, let $M_j:U\rightarrow V_j$ for $j\in [k]$. Let $M=(M_1,\dotsb,M_k)$ be the $k$-fold composition, where the choice of $M_j$ may depend on the outputs of previous $M_l$'s for $l<j$. Then
    \begin{enumerate}
        \item $M$ is $\varepsilon$-GP, if each $M_j$ is $\varepsilon_j$-GP, where $\varepsilon:=\sum_{j\in[k]}\varepsilon_j$;
        \item $M$ is $\rho$-CGP, if each $M_j$ is $\rho_j$-CGP, where $\rho:=\sum_{j\in[k]}\rho_j$.
    \end{enumerate}
\end{lemma}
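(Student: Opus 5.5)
We prove the two parts of Lemma \ref{lm:basic_comp} separately, in both cases by induction on $k$. It therefore suffices to treat $k=2$ with $M_2$ chosen adaptively: write $M_2(x;y_1)$, where $y_1$ is the output of $M_1$. For every fixed $y_1$ the mechanism $M_2(\cdot\,;y_1)$ is assumed to satisfy the GP (resp. CGP) guarantee. For a general $k$, we view $(M_1,\dots,M_{k-1})$ as a single mechanism covered by the induction hypothesis and compose it with $M_k$. Throughout, fix inputs $x,x'\in U$ and set $d=\dist(x,x')$. Since $\Lambda=\infty$, no restriction on $d$ is needed.

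\textbf{Part 1 ($\varepsilon$-GP).} The $\delta=0$ condition must hold for every measurable $S\subseteq V_1\times V_2$.
\begin{itemize}
\item For a product of sets $S=S_1\times S_2$, write
\[\Pr[M(x)\in S]=\int_{S_1}\Pr[M_2(x;y_1)\in S_2]\,d\mathcal{M}_1(x)(y_1).\]
Bound the integrand by $e^{\varepsilon_2 d}\Pr[M_2(x';y_1)\in S_2]$. Then use that $\mathcal{M}_1(x)\le e^{\varepsilon_1 d}\mathcal{M}_1(x')$ as measures. This inequality holds on all measurable sets, so it also holds when integrating nonnegative functions.
\item For general $S$, use slices instead of products. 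Apply the same computation with $S_2$ replaced by the slice $S_{y_1}=\{y_2:(y_1,y_2)\in S\}$.
\end{itemize}
This yields the bound with factor $e^{(\varepsilon_1+\varepsilon_2)d}$.

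\textbf{Part 2 ($\rho$-CGP).} We use the chain rule for Rényi divergence under adaptive composition. Assume densities exist (otherwise work with Radon–Nikodym derivatives). The joint density is $m_1(x)(y_1)\,m_2(x;y_1)(y_2)$. Then
\[
\int\!\!\int p^\alpha q^{1-\alpha}
=\int m_1(x)(y_1)^\alpha m_1(x')(y_1)^{1-\alpha}\left(\int m_2(x;y_1)^\alpha m_2(x';y_1)^{1-\alpha}\,dy_2\right)dy_1 .
\]
The inner integral equals $\exp\big((\alpha-1)D_\alpha(\mathcal{M}_2(x;y_1)\|\mathcal{M}_2(x';y_1))\big)$, which is at most $\exp\big((\alpha-1)\alpha\rho_2 d^2\big)$ uniformly in $y_1$. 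Pull this factor out. What remains is $\exp\big((\alpha-1)D_\alpha(\mathcal{M}_1(x)\|\mathcal{M}_1(x'))\big)\le \exp\big((\alpha-1)\alpha\rho_1 d^2\big)$. Taking $\frac{1}{\alpha-1}\log$ gives $D_\alpha\le\alpha(\rho_1+\rho_2)d^2$ for every $\alpha>1$.

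The main obstacle is handling adaptivity rigorously. The uniform-in-$y_1$ bound is exactly what allows the factor to be pulled out of the integral. Measurability of $y_1\mapsto M_2(\cdot\,;y_1)$ has to be assumed so that the kernel integrals make sense. Everything else is routine.
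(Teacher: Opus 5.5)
Your proof is correct. The paper gives no proof of this lemma and simply cites \cite{andres2013geo,liang2023concentrated}. Your argument is the standard one behind those references: for GP, integrate the slice probabilities against $\mathcal{M}_1(x)$ and use its set-wise domination by $e^{\varepsilon_1 d}\mathcal{M}_1(x')$; for CGP, use the R\'{e}nyi chain rule with a bound on the inner divergence that is uniform in $y_1$, as in the composition proof of \cite{bun2016concentrated}.
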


A special case of the composition property where all but the first mechanism have $\varepsilon=0$ is known as the \textit{postprocessing} property, i.e., $g_{\text{pp}}(M(\cdot))$ satisfies the same GP guarantee of $M$ for any measurable function $g_{\text{pp}}(\cdot)$. 

The Sparse Vector Technique (SVT) \cite{dwork2014algorithmic}, shown in Algorithm \ref{alg:SVT}, takes in a possibly infinite sequence of queries, and returns the index corresponding (approximately) to the query that first crosses some prespecified threshold $\mathcal{T}$. The SVT is $\varepsilon$-GP \cite{liang2023concentrated} for any metric $\dist(\cdot,\cdot)$, as long as all queries are $K$-Lipschitz w.r.t. the same metric.
\begin{algorithm}
\caption{Sparse Vector Technique}
    \label{alg:SVT}
    \vspace{-8pt}
    \begin{flushleft}
    \textbf{Input}: $x\in U$; $(\varepsilon_1,\varepsilon_2) \text{\;with\;}\varepsilon=\varepsilon_1+\varepsilon_2$; $\mathcal{T}$; $K$; $g_1, g_2, \dotsb$ each $K$-Lipschitz\\
    \textbf{Output}: variable-length sequence $y_1, y_2, \dotsb$
    \end{flushleft}
     \vspace{-5pt}
    \begin{algorithmic}[1]
    \STATE draw $W\sim\mathrm{Lap}(K/\varepsilon_1)$
    \FOR{$j = 1,\dotsb$}
    \STATE draw $V_j \sim\mathrm{Lap}(2K/\varepsilon_2)$
    \IF{$g_j(x)+V_j \geq \mathcal{T}+W$}
    \STATE output $y_j = \top$ and \textbf{HALT}
    \ELSE
    \STATE output $y_j = \bot$
    \ENDIF
    \ENDFOR
    \end{algorithmic}
\end{algorithm}

The SVT has the following utility guarantee, which can be verified by standard analysis (e.g. see \cite{dong2023universal}):
\begin{lemma}
\label{lm:svt_util}
    Let $\bar{k}$ be the index $j$ when $\mathrm{SVT}(x,(\frac{\varepsilon}{3},\frac{2\varepsilon}{3}),\mathcal{T},K,g_1,g_2,\dotsb)$ halts. Suppose there is $t$ such that $g_t(x)\ge \mathcal{T}+6K\ln(2/\beta)/\varepsilon$. Then with probability at least $1-\beta$, $\bar{k}$ satisfies $\bar{k}\le t$ and $g_{\bar{k}}(x) \ge \mathcal{T}
    -\frac{3K}{\varepsilon}\left(\ln\left(\frac{2t}{\beta}\right)+\ln\left(\frac{2}{\beta}\right)\right)$. 
\end{lemma}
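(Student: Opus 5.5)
We prove Lemma~\ref{lm:svt_util} by conditioning on a single good event for the threshold noise and the per-query noises. With the parameters $(\frac{\varepsilon}{3},\frac{2\varepsilon}{3})$, Algorithm~\ref{alg:SVT} draws $W\sim\mathrm{Lap}(3K/\varepsilon)$ and $V_j\sim\mathrm{Lap}(3K/\varepsilon)$. We use the Laplace tail bound $\Pr[|\mathrm{Lap}(b)|> b\ln(1/\gamma)]=\gamma$ and split the failure probability $\beta$ into two halves.

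\textbf{The good event.} Let $E_1$ be the event $|W|\le \frac{3K}{\varepsilon}\ln(2/\beta)$; it fails with probability at most $\beta/2$. Let $E_2$ be the event that $|V_j|\le \frac{3K}{\varepsilon}\ln(2t/\beta)$ for every $j\le t$. A union bound over the $t$ indices shows $E_2$ fails with probability at most $t\cdot\frac{\beta}{2t}=\beta/2$. Hence $E_1\cap E_2$ holds with probability at least $1-\beta$, and we work on this event from now on. Because of the union bound, we only ever reason about indices $j\le t$.

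\textbf{Halting by index $t$.} On $E_1\cap E_2$, the query $t$ satisfies
\[
g_t(x)+V_t\ \ge\ \mathcal{T}+\frac{6K}{\varepsilon}\ln(2/\beta)-\frac{3K}{\varepsilon}\ln(2t/\beta).
\]
We need this to be at least $\mathcal{T}+W$, and it suffices to reach $\mathcal{T}+\frac{3K}{\varepsilon}\ln(2/\beta)$. As written, the margin $6K\ln(2/\beta)/\varepsilon$ in the hypothesis does not obviously cover both $|W|$ and $|V_t|$, because $|V_t|$ is only bounded by $\frac{3K}{\varepsilon}\ln(2t/\beta)$. This is the delicate step. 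I would first try to separate the handling of $V_t$ from the union bound: use $E_1$ for $W$, and a separate one-sided bound $V_t\ge -\frac{3K}{\varepsilon}\ln(2/\beta)$, which costs at most $\beta/2$ (in fact $\beta/4$ for the one-sided tail). The margin $6K\ln(2/\beta)/\varepsilon$ then covers both $W$ and $V_t$. The remaining probability can be reallocated by using slightly smaller constants, or by noting that the earlier queries only need one-sided bounds $V_j\le \cdots$. The constants as stated are meant to be checked against the standard analysis of \cite{dong2023universal}. So the SVT halts at some $\bar{k}\le t$.

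\textbf{Quality of the halting query.} At the halting index $\bar{k}\le t$ we have $g_{\bar{k}}(x)+V_{\bar{k}}\ge\mathcal{T}+W$. Therefore
\[
g_{\bar{k}}(x)\ \ge\ \mathcal{T}-|W|-|V_{\bar{k}}|\ \ge\ \mathcal{T}-\frac{3K}{\varepsilon}\bigl(\ln(2/\beta)+\ln(2t/\beta)\bigr),
\]
using $E_1$ for $W$ and the upper bound on $|V_j|$ for $j\le t$ (from $E_2$). This is exactly the claimed bound.

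The main obstacle is the bookkeeping of the failure probabilities. The events for $W$, for $V_t$, and for $\{V_j\}_{j<t}$ must together cost at most $\beta$, while matching the stated constants. If the constants do not close exactly, I would use one-sided tails, which halve the probabilities, to absorb the slack.
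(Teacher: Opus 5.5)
Your overall strategy matches the paper's proof: bound $|W|$, lower-bound $V_t$ to force halting by index $t$, and upper-bound the noises $V_j$ for $j\le t$ to control $g_{\bar k}(x)$. However, the proof as written does not close. Your event $E_1\cap E_2$ already uses the whole failure budget $\beta$. As you observe yourself, it does not imply halting by $t$, because the bound $\frac{3K}{\varepsilon}\ln(2t/\beta)$ on $|V_t|$ can exceed the margin $\frac{3K}{\varepsilon}\ln(2/\beta)$ that remains after covering $W$. Adding the event $V_t\ge-\frac{3K}{\varepsilon}\ln(2/\beta)$ on top brings the total failure probability to $5\beta/4$. Your fallback of ``using slightly smaller constants'' would prove a different statement, so it is not an option here.

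The fix, which you mention only in passing, is to notice that the last step uses only
$g_{\bar k}(x)\ge\mathcal{T}+W-V_{\bar k}\ge\mathcal{T}-|W|-V_{\bar k}$, that is, only an \emph{upper} bound on $V_{\bar k}$. So replace $E_2$ by the one-sided event $\max_{j\le t}V_j\le\frac{3K}{\varepsilon}\ln(2t/\beta)$. The budget then splits as follows:
\begin{itemize}
\item the one-sided event for $\max_{j\le t}V_j$ fails with probability at most $t\cdot\frac{1}{2}\cdot\frac{\beta}{2t}=\beta/4$;
\item the lower tail $V_t<-\frac{3K}{\varepsilon}\ln(2/\beta)$ has probability $\frac{1}{2}\cdot\frac{\beta}{2}=\beta/4$;
\item the event $|W|>\frac{3K}{\varepsilon}\ln(2/\beta)$ has probability $\beta/2$.
\end{itemize}
These total exactly $\beta$ with the constants as stated. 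On the complement, $g_t(x)+V_t\ge\mathcal{T}+\frac{3K}{\varepsilon}\ln(2/\beta)\ge\mathcal{T}+W$, which gives $\bar k\le t$. Your final display, with $V_{\bar k}$ in place of $|V_{\bar k}|$, then gives the claimed bound. This $\beta/2+\beta/4+\beta/4$ allocation is exactly the paper's argument.
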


\subsection{The Problem of Function Privatization}
We consider functions of the form $q:I\rightarrow \mathbb{R}^n$, where $I=I_1\times\dotsb \times I_d$ with $I_i\subseteq \mathbb{R}$ for $i\in[d]:=\{1,\dotsb,d\}$. A natural metric between two functions is the $L^2$ distance $\dist_2(q,q'):=\left(\int_I \|q(t)-q'(t)\|^2dt\right)^{1/2}$, where $\|\cdot\|$ denotes the usual Euclidean metric.
Our goal is to compute, for a given function $q$, a privatized function $\tilde{q}$ that satisfies GP (or CGP) w.r.t. the $\dist_2$ metric. We assume $q\in L^2(I)$, where $L^2(I)$ denotes the set of square-integrable functions on $I$, i.e., $ \int_I \|q(t)\|^2 dt < \infty$;
so $\dt(q,q')<\infty$ for every pair $q,q'$.

We focus primarily on the setting where $d=1$ and $I=[0,T]$ for $T>0$.  We distinguish between the restricted case $I=[0,T]$ and the general case $I$ by explicitly writing $[0,T]$ in the former case. In particular, we are interested in continuous functions on $[0,T]$, which correspond to \textit{curves} in $\mathbb{R}^n$. We use $C(I)$ to denote the set of continuous functions on $I$.  

\subsection{On the $L^2$ Metric}
\label{sec:l2_metric}
The $L^2$ metric is arguably the most
natural/important metric for physical sciences and engineering; it is preferred by
physicists and engineers because of its desirable properties such as completeness, rotation-invariance \cite{stein2011fourier, arfken2011mathematical}, relations to physical
quantities such as the energy of the signal \cite{oppenheim1997signals}, etc. 

Conceptually, the $L^2$ metric is to functions as the $\ell_2$ (Euclidean) metric is to points, and the latter is commonly used for point privatization \cite{andres2013geo,liang2023concentrated, liang2024smooth}. Concretely, consider a function $q:[0,1]\rightarrow\mathbb{R}$: The function $q$ can
be identified with its image $f([0,1])$, which is a collection of infinitely many points.
Recall we can approximate the $L^2$ distance between two curves using point evaluations at
$t_1,\dots,t_n$ for increasingly many points; in particular, we have
\[\int_0^1(q(t)-q'(t))^2\;{d}t = \lim_{n\rightarrow \infty}
\frac{1}{n}\sum_{j=1}^n (q(t_j)-q'(t_j))^2.\]
Thus, the $L^2$ distance between two functions is essentially an average of $\ell_2$
distances between the collections of points $(q(t_1),\dots,q(t_n))$ and
$(q'(t_1),\dots,q'(t_n))$ for infinitely many points.
\begin{figure}[h]
    \centering
        \includegraphics[width=0.5\textwidth]{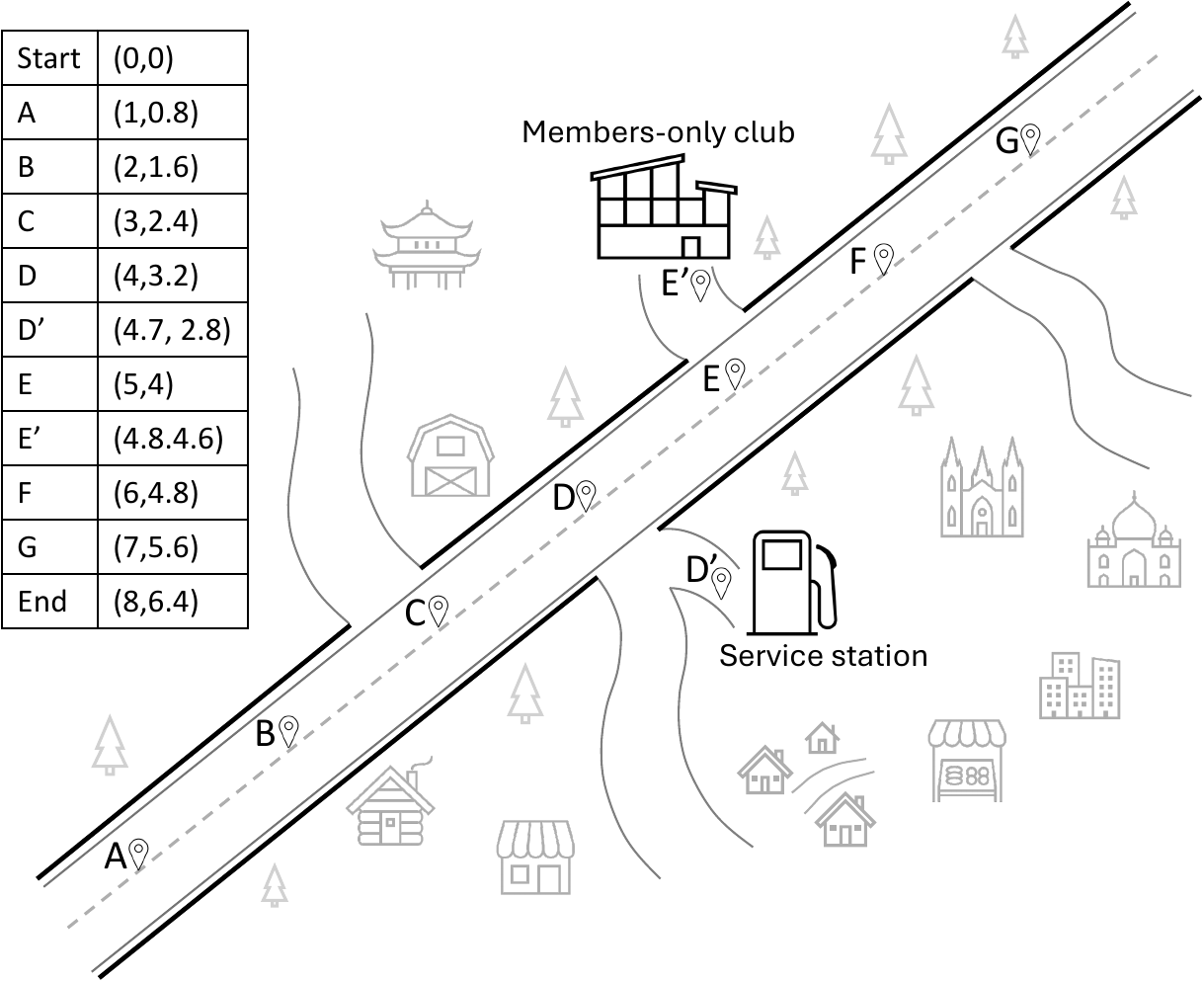}
        \caption{{Reference locations for example trajectories.}}
        \label{fig:ref_loc}
\end{figure}
\paragraph{Privacy protection via $L^2$ metric.}
The $L^2$ metric measures variations across the entire domain, it is robust against variations in small areas. An analyst observing a privatized function would not be able to infer with high confidence, whether and where the original function had a brief but otherwise distinguishing shift.
For example, when applied to trajectory data, it protects against sudden and short changes in the curve. 
Consider the following scenario: We have three trajectories, $q_1,q_2,q_3$ traveling on a straight road from time $0$ to time $1$, see Fig.~\ref{fig:ref_loc} for an illustration of the road and values of the locations. The trajectories are listed as (time, location) tuples, where the traveling speed between two tuples is assumed to be uniform; they are similar except $q_2, q_3$ will briefly stop at different locations before continuing down the road:
\begin{align*}
      q_1&: \{(0.13,A),(0.26,B),(0.4,C),(0.53,D),(0.66,E),(0.8,F),(0.9,G)\}\\
      q_2&: \{(0.12,A),(0.24,B),(0.35,C),(0.45,D),(0.52,D'),(0.54,D'),(0.6,D),(0.7,E),(0.8,F),(0.9,G)\}\\
      &\text{\;\;\;(briefly stops at $D'$)}\\
      q_3&: \{(0.12,A),(0.24,B),(0.35,C),(0.45,D),(0.55,E),(0.61,E'),(0.63,E'),(0.7,E),(0.8,F),(0.9,G)\} \\
      &\text{\;\;\;(briefly stops at $E'$)}
\end{align*}
The three trajectories have small pairwise $L^2$ distances; thus, an analyst observing a privatized trajectory would not be able to tell whether the original trajectory had been $q_1$, $q_2$ or $q_3$ with high confidence. Also, the pairwise $L^2$ distance is smaller than the pairwise maximum $\ell_2$ distance at any point, for each pair:
\begin{align*}
\dt(q_1,q_2) &= 0.40\;<\;\max\nolimits_{t\in [0,1]}\|q_1(t)-q_2(t)\|=0.85\\
\dt(q_1,q_3) &=0.48\;<\;\max\nolimits_{t\in [0,1]} \|q_1(t)-q_3(t)\|=1.08\\
\dt(q_2,q_3) &=0.50\;<\;\max\nolimits_{t\in [0,1]}\|q_2(t)-q_3(t)\|=1.54.
\end{align*}
Thus, the distinguishability level is smaller under the $L^2$ (i.e. the privacy protection is stronger for the same privacy parameter), compared to the maximum point-wise distance.

Finally, the $L^2$ metric provides protection for single points in a way that is consistent with (and at least as strong as) that provided by the Euclidean metric for single points. For trajectory data, the longer a curve stays at a single point, the more similar the privacy guarantee becomes to that offered by point privatization under the Euclidean metric used in previous works \cite{andres2013geo,liang2023concentrated,liang2024smooth}. 
Concretely, consider two constant trajectories $q, q':[0,1]\rightarrow \mathbb{R}^d$  (e.g. if a person stays at a location for the entire duration of the trajectory) such that $q(t)=p$ and $ q'(t)=p'$ for some $p,p'\in \mathbb{R}^d$ for all $t$.  
The $L^2$ distance between them is
\[
\left(\int_0^1\|q(t)-q'(t)\|^2\;{d}t\right)^{1/2} = \left(\|p-p'\|^2\right)^{1/2} = \|p-p'\|,
\]
which is the same level of distinguishability required for point privatization in the aforementioned works. 

\section{Privatization of functions}
\label{sec:priv_func}
\subsection{Linear functions}
\label{sec:priv_lin}
We start by considering the simple case of linear functions, which serves as an illustrative example to demonstrate some key ideas of our framework. Let $f_a:[0,T]\rightarrow \mathbb{R}$, defined by $f_a(t)=a_1t+a_2$ with $a=(a_1,a_2)\in \mathbb{R}^2$. Let $U_f$ be the space consisting of all such linear functions. Restating the GP definition for this special case, a mechanism $M_f$ on $(U_f,\dist_2)$ satisfies $\varepsilon$-GP, if for all pairs $f_a$ and $f_{a'}$, and all measurable subsets $S_f\subseteq V= U_f$,
\[
\Pr[M_f(f_{a})\in S_f] \le e^{\varepsilon \dist\nolimits_2\left(f_{a},f_{a'}\right)}\Pr[M_f(f_{a'})\in S_f]. 
\]

Note that we can identify $f_{a}$ with the point $a\in \mathbb{R}^2$, so to privatize $f_a$ it would suffice to privatize $a$, 
i.e., we want to find a mechanism $M:\mathbb{R}^2\rightarrow \mathbb{R}^2$ such that for all $S_f\subseteq U_f$, and all pairs $a, a'$
\begin{equation}
\Pr[M(a)\in S] \le e^{\varepsilon \dist\nolimits_2\left(f_a,f_{a'}\right)}\Pr[M(a'))\in S]
\label{eqn:gp_req_func2R2}
\end{equation}
where $S:=\{c\in\mathbb{R}^2 : f_{c}\in S_f\}$. Observing also the one-to-one correspondence between $S$ and $S_f$, we may simply write that \eqref{eqn:gp_req_func2R2} is required to hold for all measurable $S\subseteq \mathbb{R}^2$.

We first show that a natural attempt fails to solve the problem.  As stated above, for linear functions the problem reduces to privatizing a point $a\in \mathbb{R}^2$.  Thus, it is tempting to simply apply {Lemma~\ref{lm:canon_mech}} on $a$ while taking the identity function as $g$ (thus $K=1$).  For concreteness, consider the GP version of {Lemma~\ref{lm:canon_mech}} with $\varepsilon=1$, which adds $Z$ to $a$, where $Z$ follows the $2$-dimensional spherical Laplace distribution.  This outputs a $y=(y_1, y_2)$ with pdf $\propto e^{-\|y-a\|}$. Now consider the following two functions on $[0,1]$: $f_a:t\mapsto 2t+0.5$ and $f_{a'}:t\mapsto 0.5t+2$; i.e., $a=(2,0.5)$ and $a'=(0.5,2)$. Then the ratio of the respective pdf's at $y=(3,1.5)$ is 
\begin{align*}
\frac{e^{-\sqrt{(y_1-a_1)^2+(y_2-a_2)^2}}}{e^{-\sqrt{(y_1-a'_1)^2+(y_2-a'_2)^2}}}&=e^{-\sqrt{2}+\sqrt{2.5^2+0.5^2}} = e^{1.135} \\
&>e^{1.5/\sqrt{3}}=e^{\sqrt{\int_{0}^1 ((a_1-a'_1)t+(a_2-a'_2))^2 dt}}=e^{\dt(f_a,f_{a'})},
\end{align*}
which violates the GP requirement.  Critically, this naive method fails because the $L^2$ distance between $f_a$ and $f_{a'}$ is \textit{not} equal to the $\ell_2$ distance between $a$ and $a'$: two points far apart in the Euclidean space, which this method does not protect, may correspond to two functions that are very close in the function space, which are required to be indistinguishable under GP.  

To fix the issue, we should apply Lemma~\ref{lm:lap_mech_gen} with the correct metric, i.e., draw $y$ from a distribution with pdf $m({a})(y) \propto e^{-\varepsilon \dist_2\left(f_{y},f_{a}\right)}$.  This is no longer an isotropic spherical Laplace distribution. To draw from such a distribution efficiently, we rewrite
\begin{align}
\nonumber
\dist\nolimits_2\left(f_{y},f_{a}\right)^2 &= \int_{0}^T \left(f_{(y_1,y_2)}(t)- f_{(a_1,a_2)}(t)\right)^2 dt \\
\nonumber
&= \int_0^{T} \left((y_1-a_1)t+(y_2-a_2)\right)^2 dt\\
\nonumber
&= \int_0^{T} \left((y_1-a_1)^2 t^2+2(y_1-a_1)(y_2-a_2)t + (y_2-a_2)^2\right) dt\\
\nonumber
&= \left[\frac{1}{3}(y_1-a_1)^2 t^3 + (y_1-a_1)(y_2-a_2)t^2 + (y_2-a_2)^2 t\right]_{t=0}^T \\
\nonumber
&= \frac{1}{3}(y_1-a_1)^2 T^3 + (y_1-a_1)(y_2-a_2)T^2 + (y_2-a_2)^2 T\\
\nonumber
&= \begin{bmatrix}
    (y_1-a) & (y_2-a_2)
\end{bmatrix}
\underbrace{\begin{bmatrix}
    \frac{1}{3}T^3 & \frac{1}{2}T^2 \\
    \frac{1}{2}T^2 & T
\end{bmatrix}}_{\Sigma^{-1}}
\begin{bmatrix}
    (y_1-a_1)\\
    (y_2-a_2)
\end{bmatrix}\\
\label{eqn:funcdist2_to_l2}
 &= \left\| \Sigma^{-1/2} \begin{bmatrix}
    (y_1-a_1)\\
    (y_2-a_2)
\end{bmatrix}\right\|^2.
\end{align}
The matrix $\Sigma^{-1}$ can be shown to be positive definite, hence the distribution with pdf 
\begin{equation}
\label{eqn:linear_mech}
m(a)(y) \propto e^{-\varepsilon \dist_2\left(f_{y},f_{a}\right)} = e^{-\varepsilon \|\Sigma^{-1/2}[(y_1-a_1),\; (y_2-a_2)]^{\mathsf{T}}\|}
\end{equation}
precisely corresponds to outputting $\frac{1}{\varepsilon}\Sigma^{1/2} Z + a$ where $Z$ follows the standard 2D spherical Laplace distribution.  Equivalently speaking, we must add noise to $a$ in the eigen-directions of $\Sigma$.

It is important to note that the scaling matrix $\Sigma$ depends on $T$; in particular, it requires $T<\infty$.  This is not a restriction, though, since linear functions over $[0, \infty)$ is not square-integrable.  In Sections~\ref{sec:gen_func} and ~\ref{sec:curvepriv_unbounded}, we will see how square-integrable functions over an unbounded domain can still be privatized.

\begin{remark}
    The mechanism given by the pdf in ~\eqref{eqn:linear_mech} above can be extended to piecewise linear functions. Following the derivations above, one finds that the matrix $\Sigma^{-1}$ becomes a block matrix of size $2k\times 2k$, if the function is continuous within each of the $k\ge 1$ segments. We will discuss piecewise functions in more detail in Section~\ref{sec:privcurve} when we discuss the privatization of curves.
\end{remark}
\subsection{Multi-basis Functions}
\label{sec:finitedim_func}
Linear functions can be considered as linear combinations of two basis functions $\phi_1(t)=t, \phi_2(t)=1$, and our method in the previous subsection can be easily extended to functions with $m$ basis functions.  Such functions are thus parameterized by points in $\mathbb{R}^m$, i.e., an $m$-dimensional function space.   
Specifically, let $U_f=\{f_a : a\in \mathbb{R}^m\}$ where each $f_a:I\rightarrow \mathbb{R}$ is defined by $f_a(t) = \sum_{j=1}^m a_j\phi_j(t)$, for basis functions $\phi_1(\cdot),\dotsb,\phi_m(\cdot)\in L^2(I)$, with each $\phi_j:I\rightarrow \mathbb{R}$ for $j\in [m]$.  We assume the $\phi_j(\cdot)$'s are linearly independent to avoid redundancy.  A common set of basis functions are $\{\phi_j(t):=t^{m-j}\}_{j\in [m]}$, which induce the space of all  polynomials of degree up to $m-1$.  Note that linear independence among the $\phi_j(\cdot)$'s implies that $f_a(\cdot) = f_{a'}(\cdot) \iff a=a'$, thus preserving the one-to-one correspondence between the sets $S_f\subseteq U_f$ and $S:=\{c\in \mathbb{R}^m: f_c\in S_f\}$.

Next, we extend our $\varepsilon$-GP mechanism for linear functions to this more general function space. 
For $y=(y_1,\dotsb,y_m)\in \mathbb{R}^m$, we have
\begin{align}
	\nonumber
	\dist\nolimits_2(f_y, f_a)^2 &= \int_I (f_y(t)-f_a(t))^2 dt \\
	\nonumber
	&= \int_I\left((y_1-a_1)\phi_1(t)+\dotsb+(y_m-a_m)\phi_m(t)\right)^2 dt\\
	\nonumber
	&=\int_I \left(\sum_{j=1}^m \sum_{k=1}^m (y_j-a_j)\phi_j(t)\cdot (y_k-a_k)\phi_l(t)\right) dt\\
	\nonumber
	&=\sum_{j=1}^m\sum_{k=1}^m (y_j-a_j)(y_k-a_k)\cdot{\int_I\phi_j(t)\phi_l(t) dt}\\
	\nonumber
	&=\begin{bmatrix}
		(y_1-a_1) & \dotsb & (y_m-a_m)
	\end{bmatrix} \underbrace{\begin{bmatrix}
		\int_I\phi_1(t)\phi_1(t) dt & \dotsb & \int_I\phi_1(t)\phi_m(t) dt\\
		\nonumber
		\vdots & \ddots &\vdots\\
		\int_I\phi_m(t)\phi_1(t) dt & \dotsb &\int_I\phi_m(t)\phi_m(t) dt
	\end{bmatrix}}_{\Sigma^{-1}}
	\begin{bmatrix}
		y_1-a_1\\
		\vdots\\
		{y_m-a_m}
	\end{bmatrix}\\
	\label{eqn:func_dist2l2_Rd}
	&=\left\|\Sigma^{-1/2}\begin{bmatrix}
		y_1-a_1\\
		\vdots\\
		y_m-a_m
	\end{bmatrix}\right\|^2.
\end{align}
Thus, we can privatize $f_a\in U_f$ by privatizing $a$ as $M(a):=a+\frac{1}{\varepsilon}\Sigma^{1/2}Z$ where $Z$ is drawn from a $m$-dimensional spherical Laplace distribution, and $[\Sigma^{-1}]_{j,l}=\int_I\phi_j(t)\phi_l(t) dt$. Note that $\Sigma$ is positive definite, since $\Sigma^{-1}$ has full rank and is positive semi-definite as it is an outer product of a vector of linearly independent functions. {In particular, if $\{\phi_j(\cdot)\}_{j\in [m]}$ is a set of orthogonal basis such that $\int_I \phi_j(t) \phi_l(t) dt = 0$ for $j\neq l$, then all of the off-diagonal entries in $\Sigma$ vanish.} 
We summarize this result below.
\begin{theorem}
\label{thm:gp_funcRm}
    Fix a set of linearly independent functions $\{\phi_j\}_{j\in [m]}$ $\subset L^2(I)$, where each $\phi_j:I\rightarrow \mathbb{R}$. Let $U_f:=\{f_a(\cdot) := \sum_j a_j\phi_j(\cdot)\; |\; a\in \mathbb{R}^m\}$, equipped with the $L^2$ metric $\dist_2(\cdot,\cdot)$. Let $\Sigma^{-1}$ be the matrix whose $(j,l)$th entry is given by $[\Sigma^{-1}]_{j,l} = \int_I \phi_j(t)\phi_l(t) dt$ for $j,l\in [m]$. 
    Then for any $f_a\in U_f$, the mechanism that releases $f_{\tilde{a}}(\cdot)$ where $\tilde{a}=M(a):=a+\frac{1}{\varepsilon}\Sigma^{1/2}Z$, and $Z\sim \mathrm{SLap}(m)$,
    is $\varepsilon$-GP.
\end{theorem}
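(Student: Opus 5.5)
The plan is to reduce the statement to the exponential mechanism of Lemma~\ref{lm:lap_mech_gen}, using the identity \eqref{eqn:func_dist2l2_Rd}.

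First, I will check that $\Sigma^{1/2}$ is well defined. The Gram matrix $\Sigma^{-1}$ is symmetric. It is also positive definite: for $c\neq 0$,
\[
c^{\mathsf T}\Sigma^{-1}c=\int_I\Big(\sum_j c_j\phi_j(t)\Big)^2dt ,
\]
and this is strictly positive because the $\phi_j$ are linearly independent in $L^2(I)$, so $\sum_j c_j\phi_j$ is not the zero element. Its entries are finite by Cauchy--Schwarz, since each $\phi_j\in L^2(I)$. Hence $\Sigma$ exists, is symmetric positive definite, and has a unique symmetric positive definite square root $\Sigma^{1/2}$, with $\Sigma^{-1/2}=(\Sigma^{1/2})^{-1}$.

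Next, I will compute the density of $\tilde a$. Since $Z$ has density $\propto e^{-\|z\|}$, the map $z\mapsto a+\frac1\varepsilon\Sigma^{1/2}z$ is an affine bijection with constant Jacobian $\varepsilon^{m}\det\Sigma^{-1/2}$. Changing variables gives
\[
m(a)(y)=c\,e^{-\varepsilon\|\Sigma^{-1/2}(y-a)\|},
\]
where $c$ depends only on $\varepsilon$, $m$ and $\Sigma$, not on $a$. By \eqref{eqn:func_dist2l2_Rd}, the exponent equals $-\varepsilon\,\dt(f_y,f_a)$.

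Then I will verify the GP bound. For any $a,a'$, the two densities have full support $\mathbb{R}^m$. By the triangle inequality for the $L^2$ metric,
\[
\frac{m(a)(y)}{m(a')(y)}=e^{\varepsilon(\dt(f_y,f_{a'})-\dt(f_y,f_a))}\le e^{\varepsilon\,\dt(f_a,f_{a'})}.
\]
Equivalently, this is Lemma~\ref{lm:lap_mech_gen} with $V=\mathbb{R}^m$ carrying the metric $\dist_V(y,y')=\|\Sigma^{-1/2}(y-y')\|$, $g$ the identity, and $K=1$; $g$ is $1$-Lipschitz because $\dist_V(a,a')=\dt(f_a,f_{a'})$. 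Integrating the pointwise ratio bound over any measurable $S$ gives $\Pr[\tilde a\in S]\le e^{\varepsilon\dt(f_a,f_{a'})}\Pr[\tilde a'\in S]$.

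Finally, releasing $f_{\tilde a}$ is postprocessing of $\tilde a$. Linear independence makes $S_f\leftrightarrow S$ a bijection, as noted before the statement, so the guarantee transfers to the function space.

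The only genuine subtlety, and the step I expect to be the main obstacle, is the positive definiteness of $\Sigma^{-1}$ together with the constant-Jacobian normalization. The rest is routine.
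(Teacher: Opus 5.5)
Your proposal is correct and matches the paper's own argument. The paper likewise uses \eqref{eqn:func_dist2l2_Rd} to write $\dt(f_y,f_a)=\|\Sigma^{-1/2}(y-a)\|$, observes that $a+\frac{1}{\varepsilon}\Sigma^{1/2}Z$ has density $\propto e^{-\varepsilon\dt(f_y,f_a)}$, and concludes by the exponential mechanism (Lemma~\ref{lm:lap_mech_gen}); it also points to the equivalent route via the isometry $f_a\mapsto\Sigma^{-1/2}a$ and Lemma~\ref{lm:canon_mech}, while your explicit change of variables and positive-definiteness check just fill in details the paper states briefly.
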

The identity in \eqref{eqn:func_dist2l2_Rd} allows us to also privatize any $f_a\in U_f$ under CGP. In fact, the proof to the theorem below gives an alternative way to derive the GP guarantee of the mechanism mentioned above using Lemma~\ref{lm:canon_mech}.
\begin{theorem}
\label{thm:cgp_funcRm}
    Let $(U_f,\dist_2)$ and $\Sigma$ be as defined in Theorem~\ref{thm:gp_funcRm}. Then for any $f_a\in U_f$, the mechanism that releases $f_{\tilde{a}}(\cdot)$ where $\tilde{a}=M(a):=a+\frac{1}{\sqrt{2\rho}}\Sigma^{1/2}Z$, and $Z\sim\mathcal{N}(0,I_{m\times m})$, satisfies $\rho$-CGP.
\end{theorem}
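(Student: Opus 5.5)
The plan is to reduce the statement to Lemma~\ref{lm:canon_mech}(2) by a linear change of variables that turns the $L^2$ metric on $U_f$ into the Euclidean metric on $\mathbb{R}^m$.

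First, define the map $g:U_f\rightarrow\mathbb{R}^m$ by $g(f_a):=\Sigma^{-1/2}a$. This is well-defined because the $\phi_j$'s are linearly independent, so $f_a$ determines $a$ uniquely. By the identity \eqref{eqn:func_dist2l2_Rd},
\[
\|g(f_a)-g(f_{a'})\| = \|\Sigma^{-1/2}(a-a')\| = \dist\nolimits_2(f_a,f_{a'}),
\]
so $g$ is an isometry from $(U_f,\dist_2)$ into $(\mathbb{R}^m,\|\cdot\|)$, and in particular $1$-Lipschitz. Lemma~\ref{lm:canon_mech}(2) with $K=1$ and $b=\sqrt{2\rho}$ then says that the mechanism
\[
M'(f_a):=\Sigma^{-1/2}a+\frac{1}{\sqrt{2\rho}}Z,\qquad Z\sim\mathcal{N}(0,I_{m\times m}),
\]
is $\rho$-CGP with respect to $\dist_2$ on $U_f$.

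Second, apply the deterministic map $h(w):=f_{\Sigma^{1/2}w}$. We have
\[
\Sigma^{1/2}M'(f_a)=a+\frac{1}{\sqrt{2\rho}}\Sigma^{1/2}Z=\tilde a,
\]
so $h(M'(f_a))=f_{\tilde a}$, which is exactly the released function. Since $h$ is linear and hence measurable, the postprocessing property (the special case of Lemma~\ref{lm:basic_comp} noted after it) shows that releasing $f_{\tilde a}$ is also $\rho$-CGP.

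There is little real obstacle here. The only things to confirm are that $\Sigma^{1/2}$ and $\Sigma^{-1/2}$ exist and are symmetric, and that \eqref{eqn:func_dist2l2_Rd} holds. Both follow from $\Sigma^{-1}$ being a symmetric positive definite Gram matrix of linearly independent functions, which the text preceding Theorem~\ref{thm:gp_funcRm} already establishes. If one prefers a direct check, the Rényi divergence between Gaussians with common covariance is $\frac{\alpha}{2}(\mu-\mu')^{\mathsf T}C^{-1}(\mu-\mu')$. Here $C=\Sigma/(2\rho)$, which gives $\alpha\rho\,\|\Sigma^{-1/2}(a-a')\|^2=\alpha\rho\,\dist_2(f_a,f_{a'})^2$, as required. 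The same argument, using part (1) of Lemma~\ref{lm:canon_mech} instead of part (2), recovers Theorem~\ref{thm:gp_funcRm}.
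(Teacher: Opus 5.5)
Your proposal is correct and matches the paper's proof essentially step for step: the paper also defines $g(f_a):=\Sigma^{-1/2}a$, uses \eqref{eqn:func_dist2l2_Rd} to show that $g$ is $1$-Lipschitz, applies Lemma~\ref{lm:canon_mech}(2), and recovers $\tilde a$ by post-processing with $\Sigma^{1/2}$. Your extra direct check is also correct: it computes the R\'enyi divergence between Gaussians with common covariance $\Sigma/(2\rho)$, and it gives a self-contained verification that does not rely on the canonical-mechanism lemma.
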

\begin{proof}
    Let $g:U_f\rightarrow \mathbb{R}^m$ be given by $g(f_a):=\Sigma^{-1/2} a$. Then
 \begin{align*}
 \|g(f_a)-g(f_{a'})\| &= \left\|\Sigma^{-1/2}\begin{bmatrix}
     a_1\\
     \vdots\\
     a_m
 \end{bmatrix}-\Sigma^{-1/2}\begin{bmatrix}
     a'_1\\
     \vdots\\
     a'_m
 \end{bmatrix}\right\| 
 = \left\|\Sigma^{-1/2}\begin{bmatrix}
     a_1-a'_1\\
     \vdots\\
     a_m-a'_m
 \end{bmatrix}\right\| = \dist\nolimits_2(f_{a}, f_{a'}),
 \end{align*}
where the last equality is due to identity~\eqref{eqn:func_dist2l2_Rd}. Thus, $g$ is $1$-Lipschitz w.r.t. $\dist_2$ and by Lemma~\ref{lm:canon_mech}, the mechanism $g(f_a)\mapsto g(f_a)+\frac{1}{\sqrt{2\rho}}Z$ for $Z\sim \mathcal{N}(0,I_{m\times m})$ satisfies $\rho$-CGP. 
Moreover, by post-processing, we get that the mechanism $M$ given by 
\[M(a):=\Sigma^{1/2}\left(g(f_a)+\frac{1}{\sqrt{2\rho}}Z\right) =\Sigma^{1/2}\left(\Sigma^{-1/2}a+\frac{1}{\sqrt{2\rho}}Z\right) = a+\frac{1}{\sqrt{2\rho}}\Sigma^{1/2}Z\]
satisfies $\rho$-CGP.
\end{proof}

Below, we analyze the utility of our mechanism, i.e., the $L^2$ error induced by the privacy noise, both in expectation and with high probability. 

\begin{lemma}
    \label{lm:util_funcRm_slap}
    Let $\tilde{a}=M(a):=a+\frac{1}{\varepsilon}\Sigma^{1/2}Z$, where $Z\sim \mathrm{SLap}(m)$. Then $M$ is $\varepsilon$-GP, and
    \begin{enumerate}
        \item $\mathbb{E}[\dist\nolimits_2(f_{\tilde{a}},f_a)]=\frac{m}{\varepsilon}$; 
        \item $\dist\nolimits_2(f_{\tilde{a}},f_a) \le \frac{\eta(\beta;m,1)}{\varepsilon}$ with probability at least $1-\beta$,
        \end{enumerate}
        where $\eta(\beta;m,1):=\frac{e}{e-1}(m+\ln\frac{1}{\beta})$.
\end{lemma}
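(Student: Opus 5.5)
The privacy claim is already covered by Theorem~\ref{thm:gp_funcRm}, so the work is in the two utility bounds. The key step is to reduce the $L^2$ error to the Euclidean norm of $Z$. By identity~\eqref{eqn:func_dist2l2_Rd},
\[
\dist\nolimits_2(f_{\tilde a},f_a)=\left\|\Sigma^{-1/2}(\tilde a-a)\right\|=\left\|\Sigma^{-1/2}\tfrac{1}{\varepsilon}\Sigma^{1/2}Z\right\|=\tfrac{1}{\varepsilon}\|Z\|.
\]
Both claims therefore reduce to facts about $R:=\|Z\|$ for $Z\sim\mathrm{SLap}(m)$.

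Next we find the law of $R$. The density of $Z$ is proportional to $e^{-\|z\|}$, and integrating in polar coordinates gives $R$ a density proportional to $r^{m-1}e^{-r}$ on $r>0$. Hence $R\sim\mathrm{Gamma}(m,1)$, and so $\mathbb{E}[R]=m$. This gives $\mathbb{E}[\dist_2(f_{\tilde a},f_a)]=m/\varepsilon$, which is item 1.

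For item 2 we need a tail bound of the form $\Pr[R>\eta]\le\beta$ with $\eta=\frac{e}{e-1}(m+\ln\frac1\beta)$. We use a Chernoff bound with the Gamma moment generating function $\mathbb{E}[e^{\lambda R}]=(1-\lambda)^{-m}$, valid for $\lambda<1$:
\[
\Pr[R>\eta]\le (1-\lambda)^{-m}e^{-\lambda\eta}.
\]
Choose $\lambda=1-1/e$, so that $(1-\lambda)^{-m}=e^{m}$. The bound becomes $\exp\!\big(m-\tfrac{e-1}{e}\eta\big)$. Substituting $\eta=\frac{e}{e-1}(m+\ln\frac1\beta)$ gives $\exp(m-m-\ln\frac1\beta)=\beta$. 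Therefore $\dist_2(f_{\tilde a},f_a)\le\eta(\beta;m,1)/\varepsilon$ with probability at least $1-\beta$.

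The only step needing care is the polar-coordinates derivation of the radial density. The surface-area factor $r^{m-1}$ must be combined with the normalizing constant $c\propto 1/\Gamma(m)$ so that $R$ is correctly identified as $\mathrm{Gamma}(m,1)$. Once that is in place, the rest is routine.
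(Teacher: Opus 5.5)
Your proof is correct. The reduction $\dist_2(f_{\tilde a},f_a)=\|Z\|/\varepsilon$ and the identification $\|Z\|\sim\mathrm{Gamma}(m,1)$ (the paper's $\mathcal{G}(1,m,1)$) match the paper, and item 1 is the same computation $\Gamma(m+1)/\Gamma(m)=m$. (The normalizing constant needs no care: a density on $(0,\infty)$ proportional to $r^{m-1}e^{-r}$ must equal $r^{m-1}e^{-r}/\Gamma(m)$.)

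You differ from the paper on item 2. The paper proves the tail bound in Lemma~\ref{lm:slap_highprob} in three steps:
\begin{enumerate}
\item an upper-incomplete-gamma inequality $\Gamma(m,r)<e\,r^{m-1}e^{-r}$ (Lemma~\ref{lm:nat_upgamma_ineq});
\item a Stirling lower bound on $(m-1)!$;
\item the elementary estimate $(x/t)^t\le e^{x/e}$.
\end{enumerate}
You instead use a single Chernoff bound with the Gamma moment generating function at $\lambda=1-1/e$.

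Your route is shorter and self-contained. It also shows where the constant $\frac{e}{e-1}$ comes from: it is $1/\lambda$ for the $\lambda$ that makes the MGF factor exactly $e^{m}$. It also holds for every $m\ge 1$. The paper's supporting lemma is stated only for $m>1$, since its last step divides by $\sqrt{2\pi(m-1)}$, so your argument covers the one-dimensional case the paper's proof technically leaves out.

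In exchange, the paper's route gives the slightly stronger tail $\Pr[\|Z\|>\eta]\le\beta/\sqrt{2\pi(m-1)}$. The Chernoff bound gives up this polynomial-in-$m$ factor, but the statement does not need it.
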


\begin{proof}
We have
         \[
\dist\nolimits_2(f_{\tilde{a}},f_a) =\left\|\Sigma^{-1/2}\begin{bmatrix}
     \tilde{a}-a\\
 \end{bmatrix}\right\| = \left\|\Sigma^{-1/2}\begin{bmatrix}
     a+\frac{1}{\varepsilon}\Sigma^{1/2}Z-a\\
 \end{bmatrix}\right\| = \frac{1}{\varepsilon}\left\|Z\right\|.\]
The magnitude $\|Z\|$ follows a generalized gamma distribution (see Lemma~\ref{lm:gen_gamma}), where $\|Z\|\sim \mathcal{G}(1,m,1)$ for $Z$ drawn from a spherical Laplace distribution. The rest follows from Corollary~\ref{cor:expZ_slap_gauss} and Lemma~\ref{lm:slap_highprob}.
\end{proof}

Similar arguments apply for the $\rho$-CGP variant.
Notice that the magnitude of an $m$-dimensional standard Gaussian random vector has distribution $\mathcal{G}(\sqrt{2},m,2)$, then applying Corollary~\ref{cor:expZ_slap_gauss} and Lemma~\ref{lm:gauss_mag_laurent}, we get:

\begin{lemma}
    \label{lm:util_funcRm_gauss}
    Let $\tilde{a}=M(a):=a+\frac{1}{\sqrt{2\rho}}\Sigma^{1/2}Z$, where $Z\sim \mathcal{N}(0,I_{m\times m})$. Then $M$ is $\rho$-CGP, and
    \begin{enumerate}
        \item $\mathbb{E}[\dist\nolimits_2(f_{\tilde{a}},f_a)]=\sqrt{2}\frac{\Gamma(m/2+1/2)}{\Gamma(m/2)\sqrt{2\rho}}\le \frac{\sqrt{m}}{\sqrt{2\rho}}$; 
        \item $\dist\nolimits_2(f_{\tilde{a}},f_a) \le \frac{\eta(\beta;m,2)}{\sqrt{2\rho}}$ with probability at least $1-\beta$,
        \end{enumerate}
        where $\eta(\beta;m,2):={\sqrt{m+2\sqrt{m\ln(1/\beta)}+2\ln(1/\beta)}}$.
\end{lemma}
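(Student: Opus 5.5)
The plan mirrors the proof of Lemma~\ref{lm:util_funcRm_slap}. The privacy claim is exactly Theorem~\ref{thm:cgp_funcRm}, so only the two utility bounds need work.

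\textbf{Reduction to a Gaussian norm.} By identity~\eqref{eqn:func_dist2l2_Rd},
\[
\dist\nolimits_2(f_{\tilde a},f_a)=\bigl\|\Sigma^{-1/2}(\tilde a-a)\bigr\|=\Bigl\|\tfrac{1}{\sqrt{2\rho}}\Sigma^{-1/2}\Sigma^{1/2}Z\Bigr\|=\tfrac{1}{\sqrt{2\rho}}\|Z\|.
\]
So it suffices to control the norm $\|Z\|$ of an $m$-dimensional standard Gaussian vector.

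\textbf{Expectation.} The density of $Z$ is proportional to $e^{-\|z\|^2/2}$. In polar coordinates, $R=\|Z\|$ therefore has density proportional to $r^{m-1}e^{-r^2/2}$, which is the generalized gamma law $\mathcal{G}(\sqrt2,m,2)$ (Lemma~\ref{lm:gen_gamma}). Corollary~\ref{cor:expZ_slap_gauss} then gives
\[
\mathbb{E}\|Z\|=\sqrt2\,\Gamma\bigl(\tfrac{m+1}{2}\bigr)/\Gamma\bigl(\tfrac m2\bigr).
\]
One can also check this directly: integrate $r^m e^{-r^2/2}$ against $r^{m-1}e^{-r^2/2}$ using the substitution $u=r^2/2$.

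For the upper bound, apply Jensen: $\mathbb{E}\|Z\|\le\sqrt{\mathbb{E}\|Z\|^2}=\sqrt m$, since $\|Z\|^2\sim\chi^2_m$. This avoids any Gamma-function inequality such as Gautschi's. Dividing by $\sqrt{2\rho}$ gives item 1.

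\textbf{High-probability bound.} Invoke the Laurent--Massart tail bound (Lemma~\ref{lm:gauss_mag_laurent}) for $\chi^2_m$:
\[
\Pr\bigl[\|Z\|^2\ge m+2\sqrt{mx}+2x\bigr]\le e^{-x}.
\]
Take $x=\ln(1/\beta)$ and then take square roots. This gives $\|Z\|\le\eta(\beta;m,2)$ with probability at least $1-\beta$, and scaling by $1/\sqrt{2\rho}$ yields item 2.

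The only step needing any care is identifying the law of $\|Z\|$ and matching the parametrization of Corollary~\ref{cor:expZ_slap_gauss}. Everything else is direct substitution into results already stated, so I expect no real obstacle.
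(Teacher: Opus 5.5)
Your proposal is correct and follows essentially the same route as the paper. Privacy comes from Theorem~\ref{thm:cgp_funcRm}, identity~\eqref{eqn:func_dist2l2_Rd} reduces $\dist_2(f_{\tilde a},f_a)$ to $\|Z\|/\sqrt{2\rho}$ with $\|Z\|\sim\mathcal{G}(\sqrt2,m,2)$, Corollary~\ref{cor:expZ_slap_gauss} gives the mean, and Lemma~\ref{lm:gauss_mag_laurent} gives the tail. The one difference is that you bound the mean by Jensen ($\mathbb{E}\|Z\|\le\sqrt{\mathbb{E}\|Z\|^2}=\sqrt m$) rather than through the Gamma-ratio inequality (Lemma~\ref{lm:sandor_gamma}) used for the corollary's bound; this is valid, more elementary, and also covers $m=1$ directly.
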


\paragraph{Multivariate functions.}
It is easy to see that the mechanisms in Theorems~\ref{thm:gp_funcRm} and \ref{thm:cgp_funcRm} hold nearly verbatim for multivariate functions, or equivalently, when the variable $t$ is a vector $(t_1,\dotsb,t_d)$, where $t_i\in I_i\subseteq \mathbb{R}$, with the integrals appropriately defined over $I:=I_1\times\dotsb\times I_d$. The prototypical example is the space of multivariate polynomials over the reals. Note that in this case the computation of the matrix $\Sigma$ is the same, and $\Sigma$ remains $m\times m$, where $m$ is the number of linearly independent functions $\phi_j(\cdot)$'s we choose in the basis.

\begin{remark}
All the mechanisms described in Section~\ref{sec:priv_func} (above and henceforth) generalize to any metric induced by an inner product $\dist(u,v):=\sqrt{\langle u-v, u-v \rangle}$.  
In the most general case, the entries of $\Sigma^{-1}$ become $[\Sigma^{-1}]_{j,l} = \langle \phi_j, \phi_l\rangle$, and the matrix $\Sigma^{-1}$ is also called the Gram matrix of $\{\phi_j\}_{j\in [m]}$.
\end{remark}

\subsection{Vector-valued Functions}
\label{sec:ext_vec_func}
In the above, we discussed privatizing functions of the form $f_a:t \mapsto \sum_{j=1}^m a_j\phi_j(t)$ for real-valued basis functions $\phi_j(\cdot)$, and coefficients $a_j\in \mathbb{R}$ for $j\in [m]$, so $f_a(\cdot)$ is also real-valued. Next, we discuss the scenario where $f_a(\cdot)$ is vector-valued, which arises when either the basis functions $\phi_j(\cdot)$'s or the coefficients $a_j$'s are vector-valued.

\paragraph{Vector-valued basis, real-valued coefficients.}
Suppose the basis functions $\phi_j(\cdot)$'s are vector-valued, with $\phi_j(\cdot)=\left(\phi_{j,1}(\cdot),\dotsb,\phi_{j,n}(\cdot)\right)$ for $j\in [m]$. Let $U_f=\mathrm{span}_{\mathbb{R}}(\{\phi_j\}_{j\in [m]})$. Then the $L^2$ distance $\dist_2(\cdot,\cdot)$ in this case can be computed from 
\begin{align*}
\dist\nolimits_2(f_y,f_{a})^2 &= \int_I \|f_y(t)-f_{a}(t)\|^2 dt\\
&=\int_I \left\|
\begin{bmatrix}
\sum_j (y_j-a_j)\phi_{j,1}(t)  & \dotsb &  \sum_{j\in [m]} (y_j-a_j)\phi_{j,n}(t) \end{bmatrix}^{\mathsf{T}}
\right\|^2 dt\\
&=\int_I \sum_{r=1}^n \left(\sum_{j,l \in [m]} (y_j-a_j)(y_l-a_l)\phi_{j,r}(t)\phi_{l,r}(t)\right) dt\\
&= \sum_{r=1}^n \sum_{j,l\in [m]} (y_j-a_j)(y_l-a_l) \left(\int_I \phi_{j,r}(t)\phi_{l,r}(t)\right) dt\\
&= \sum_{r=1}^n \begin{bmatrix}
    (y_1-a_1) & \dotsb & (y_m-a'_m) 
\end{bmatrix} {\Sigma^{-1}}^{(r)} \begin{bmatrix}
    (y_1-a_1) \\
    \vdots \\
    (y_m-a_m) 
\end{bmatrix} \\
&= \begin{bmatrix}
    (y_1-a_1) & \dotsb & (y_m-a'_m) 
\end{bmatrix} \left(\sum_{r=1}^n {\Sigma^{-1}}^{(r)} \right)\begin{bmatrix}
    (y_1-a_1) \\
    \vdots \\
    (y_m-a_m) 
\end{bmatrix} 
\end{align*} 
Thus, $\Sigma$ is still $m\times m$, and can be computed from $\Sigma^{-1}={\Sigma^{-1}}^{(1)}+\dotsb {\Sigma^{-1}}^{(n)}$, where $[{\Sigma^{-1}}^{(r)}]_{j,l}=\int_I \phi_{j,r}(t)\phi_{l,r}(t) dt$ for $r\in [n]$. The dimension of $\Sigma$ is not affected because although the basis functions are vector-valued, one coefficient is used for each basis function vector, no matter how large it is. 
\begin{figure}[htbp]
    \centering
    \begin{subfigure}[t]{.3\linewidth}
        \includegraphics[width=\textwidth]{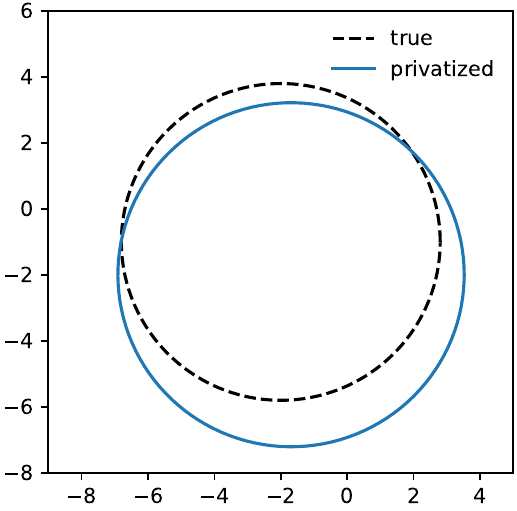}
          \subcaption{$I=[0,2\pi],\Sigma=\frac{1}{2\pi}I_{3\times 3}$}
        \label{fig:example_vecbasis0}
    \end{subfigure}
    \;\;\;\;\;
    \begin{subfigure}[t]{.3\linewidth}
        \includegraphics[width=\textwidth]{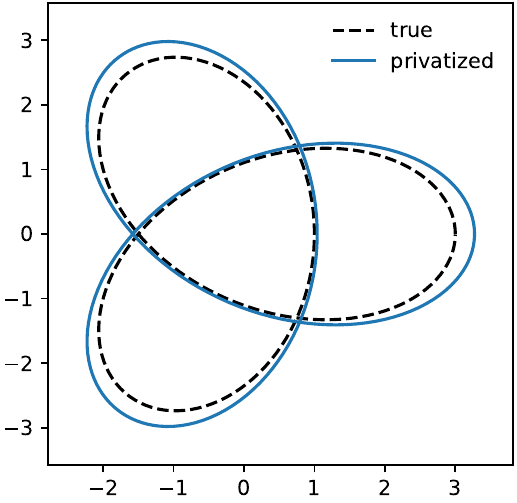}
        \subcaption{$I=[0,4\pi], \Sigma=\frac{1}{4\pi}I_{2\times 2}$} 
        \label{fig:example_vecbasis1}
    \end{subfigure}
    \caption{Example curves with vector-valued basis, fix $\varepsilon=1$. \\
    \eqref{fig:example_vecbasis0}: $[x_1(t),x_2(t)]^{\mathsf{T}}=4.8\cdot[\cos(t),\sin(t)]^{\mathsf{T}}-2\cdot[1,0]^{\mathsf{T}}-1\cdot[0,1]^{\mathsf{T}}$.\\
    \eqref{fig:example_vecbasis1}: $[x_1(t),x_2(t)]^{\mathsf{T}}=2\cdot[\cos(t),\sin(t)]^{\mathsf{T}}+1\cdot[\cos(t/2),\sin(t/2)]^{\mathsf{T}}$. }
    \label{fig:example_vecfunc1}
\end{figure}
\begin{example}
    Consider the basis $\{\phi_1\}$ where the function $\phi_1:t\mapsto [\cos(t),\sin(t)]^{\mathsf{T}}$ with $I=[0,2\pi]$, then $a[\cos(t),\sin(t)]^{\mathsf{T}}$ describes a circle of radius $|a|$ centered at the origin. Using our mechanism above, we only need to privatize the coefficient $a$, which is intuitive since all information about the curve is captured by the radius. If we want to account for shift, we can add the constant vectors $\phi_2(\cdot)\equiv [1,0]^{\mathsf{T}}, \phi_3(\cdot)\equiv [0,1]^{\mathsf{T}}$ as basis functions. See Fig.~\ref{fig:example_vecbasis0} for an illustration of this example, and Fig.~\ref{fig:example_vecbasis1} for another example.
\end{example}

\paragraph{Real-valued basis, vector-valued coefficients.} In some cases, a function might be described more naturally by vector coefficients, while the basis functions are real-valued. 
\begin{figure}[htbp]
    \centering
    \begin{subfigure}[t]{.3\linewidth}
        \includegraphics[width=\textwidth]{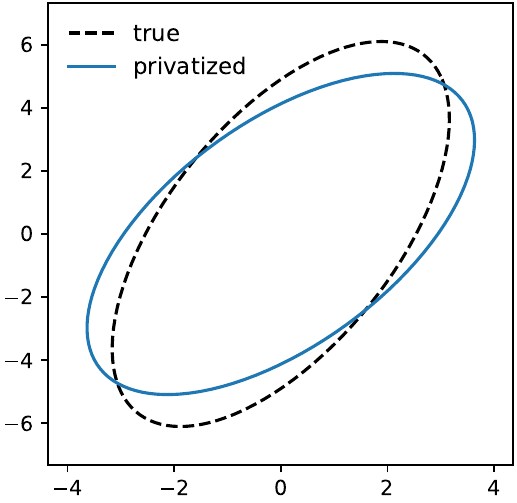}
    \subcaption{$I=[0,2\pi], \Sigma=\frac{1}{\pi}I_{4\times 4}$}
    \label{fig:example_veccoeff0}
    \end{subfigure}
    \;\;\;\;\;
    \begin{subfigure}[t]{.3\linewidth}
        \includegraphics[width=\textwidth]{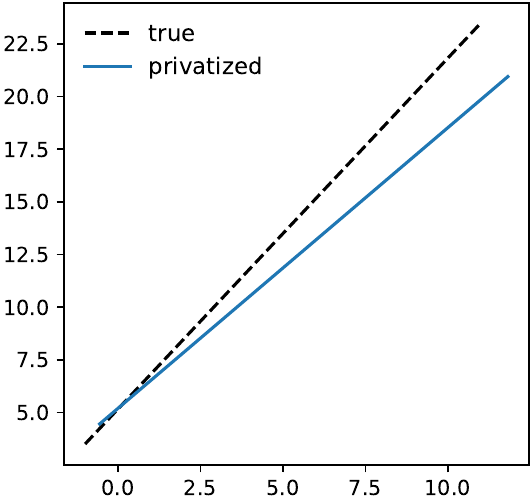}
        \subcaption{$I=[0,4], \Sigma = \begin{bmatrix}
            \frac{3}{16} & \frac{-6}{16}\\
            \frac{-6}{16} & 1
        \end{bmatrix}$} 
        \label{fig:example_veccoeff}
    \end{subfigure}
    \caption{Examples with vector-valued coefficients, fix $\varepsilon=1$. \\
    \eqref{fig:example_veccoeff0}: $[x_1(t),x_2(t)]^{\mathsf{T}}=[3,-1]^{\mathsf{T}}\cdot\cos(t)+[5,3.5]^{\mathsf{T}}\cdot\sin(t)$.\\
    \eqref{fig:example_veccoeff}: $[x_1(t),x_2(t)]^{\mathsf{T}}=[3,-1]^{\mathsf{T}}\cdot t+[5,3.5]^{\mathsf{T}}\cdot 1$. }
    \label{fig:example_vecfunc2}
\end{figure}

\begin{example}
    An ellipse can be described by a two-variable polynomial, or a parametric curve with vector coefficients as
\[
[x_1(t), x_2(t)]^{\mathsf{T}} = a_1 \cos(t) + a_2\sin(t) + a_3
\]
where $a_j = [a_{j,1}, a_{j,2}]^{\mathsf{T}}$ for $j\in [3]$, and $I=[0,2\pi]$. See Fig~\ref{fig:example_veccoeff0} for a zero-centered ellipse.
\end{example}

\begin{example}
\label{exp:2d_linear}
Another example is the basis $\{\phi_1, \phi_2\}$ where $\phi_1:t\mapsto t$, $\phi_2:t\mapsto 1$, and $I=[0,T]$. Then \[
\begin{bmatrix}
    x_1(t)\\
    x_2(t)
\end{bmatrix}
= \begin{bmatrix}
    a_{1,1}\\
    a_{1,2}
\end{bmatrix}\cdot \phi_1(t) + \begin{bmatrix}
    a_{2,1}\\
    a_{2,2}
\end{bmatrix}\cdot \phi_2(t)
\]
describes a $2$D linear curve (see Fig.~\ref{fig:example_veccoeff}), which is useful for approximating $2$D trajectories. 

\end{example}
In general, suppose each $a_j = [a_{j,1},\dotsb, a_{j,n}]^{\mathsf{T}}$ for $j\in [m]$. We have $U_f=\mathrm{span}_{\mathbb{R}^n}(\{\phi_j\}_{j\in [m]})$ where each $\phi_j:I\rightarrow \mathbb{R}$. 
Then
\begin{align*}
	\dist\nolimits_2(f_y,f_a)^2 &= \int_I \|f_y(t)-f_{a}(t)\|^2 dt\\
	&=\int_I\;\Bigl\|
	\begin{bmatrix}
		(y_{1,1}-a_{1,1}) & \dotsb & (y_{1,n}-a_{1,n})
	\end{bmatrix}^{\mathsf{T}} \phi_1(t) +\dotsb \\
	&\qquad\quad+\begin{bmatrix}
		(y_{m,1}-a_{m,1}) & \dotsb & (y_{m,n}-a_{m,n})
	\end{bmatrix}^{\mathsf{T}} \phi_m(t) \Bigr\|^2 dt\\
	&=\int_I \sum_{r=1}^n \left((y_{1,r}-a_{1,r})\phi_1(t)+(y_{m,r}-a_{m,r})\phi_m(t) \right)^2 dt \\
	&=\sum_{r=1}^n \sum_{j,l\in [m]} (y_{j,r}-a_{j,r})(y_{l,r}-a_{l,r})\int_I\phi_{j}(t) \phi_l(t) dt \\
	&= \sum_{r=1}^n \begin{bmatrix}
		(y_{1,r}-a_{1,r}) & \dotsb & (y_{m,r}-a_{m,r})
	\end{bmatrix} \Sigma_0^{-1} \begin{bmatrix}
		(y_{1,r}-a_{1,r}) \\
		\vdots \\
		(y_{m,r}-a_{m,r})
	\end{bmatrix} \\
	&= \begin{bmatrix}
		y_{1,1}-a_{1,1} \\ \vdots \\ y_{m,1}-a_{m,1} \\ \vdots \\ y_{1,r}-a_{1,r} \\ \vdots \\ y_{m,r}-a_{m,r}
	\end{bmatrix}^{\mathsf{T}} \Sigma^{-1} \begin{bmatrix}
		y_{1,1}-a_{1,1} \\ \vdots \\ y_{m,1}-a_{m,1} \\ \vdots \\ y_{1,r}-a_{1,r} \\ \vdots \\ y_{m,r}-a_{m,r}
	\end{bmatrix}
\end{align*}
where
$\Sigma_0^{-1}$ is an $m\times m$ matrix with entries $[\Sigma_0^{-1}]_{j.l}=\int_I\phi_{j}(t) \phi_l(t) dt$, and $\Sigma^{-1}$ is an $mn\times mn$ block matrix, with $\Sigma_0^{-1}$ repeated in the diagonal.

For a finite-dimensional function space $U_f$, let $\dim(U_f)$ denote its dimension; i.e. $\dim(U_f)$ is the number of columns of its associated Gram matrix. From the discussions above and Lemma~\ref{lm:util_funcRm_slap}, we have:
\begin{corollary}
\label{cor:projpriv_util_slap}
    For $f_a=\sum_{j}a_j\phi_j \in U_f:=\mathrm{span}_{\mathbb{R}^{n_a}}(\{\phi_j\}_{j\in [m]})$, let $\tilde{a}=M(a):=a+\frac{1}{\varepsilon
    }\Sigma^{1/2}Z$ where $Z$ is a standard spherical Laplace random vector and $\dim(U_f)= n_a\cdot m$. Then $M$ is $\varepsilon$-GP, and
    \begin{enumerate}
        \item $\mathbb{E}[\dist\nolimits_2(f_{\tilde{a}},f_a)]=\frac{\dim(U_f)}{\varepsilon}$;
        \item $\dist\nolimits_2(f_{\tilde{a}},f_a)\le \frac{\eta(\beta;\dim(U_f),1)}{\varepsilon}$ with probability at least $1-\beta$.
    \end{enumerate}
\end{corollary}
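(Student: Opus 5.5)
The plan is to reduce Corollary~\ref{cor:projpriv_util_slap} to Theorem~\ref{thm:gp_funcRm} and Lemma~\ref{lm:util_funcRm_slap} by flattening the vector-valued coefficients into one long real coefficient vector. Set $N:=\dim(U_f)=n_a m$ and stack the coefficients into $a=(a_{1,1},\dotsb,a_{m,1},\dotsb,a_{1,n_a},\dotsb,a_{m,n_a})\in\mathbb{R}^{N}$, in the same ordering used in the computation just before the corollary. That computation gives, for all $y,a\in\mathbb{R}^N$,
\[
\dist\nolimits_2(f_y,f_a)^2=(y-a)^{\mathsf{T}}\Sigma^{-1}(y-a),
\]
where $\Sigma^{-1}=\mathrm{diag}(\Sigma_0^{-1},\dotsb,\Sigma_0^{-1})$ is block diagonal with $n_a$ copies of the $m\times m$ Gram matrix $\Sigma_0^{-1}$.

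The first step is to check that $\Sigma^{-1}$ is positive definite, so that $\Sigma^{1/2}$ and $\Sigma^{-1/2}$ make sense. This holds because $\Sigma_0^{-1}$ is positive definite: the $\phi_j$ are linearly independent, as argued before Theorem~\ref{thm:gp_funcRm}. A block-diagonal matrix with positive definite blocks is positive definite. Hence $\dist_2(f_y,f_a)=\|\Sigma^{-1/2}(y-a)\|$, which is the analogue of identity~\eqref{eqn:func_dist2l2_Rd} in dimension $N$.

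For privacy, I would repeat the proof of Theorem~\ref{thm:cgp_funcRm} with Laplace noise in place of Gaussian noise. The map $g(f_a):=\Sigma^{-1/2}a$ is $1$-Lipschitz (in fact an isometry) from $(U_f,\dist_2)$ to $(\mathbb{R}^N,\|\cdot\|)$. Lemma~\ref{lm:canon_mech} with $b=\varepsilon$ then shows that $g(f_a)+\frac{1}{\varepsilon}Z$, with $Z\sim\mathrm{SLap}(N)$, is $\varepsilon$-GP. Multiplying by $\Sigma^{1/2}$ is post-processing, so $M(a)=a+\frac1\varepsilon\Sigma^{1/2}Z$ is $\varepsilon$-GP. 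Equivalently, one can note that the pdf of $M(a)$ is proportional to $e^{-\varepsilon\dist_2(f_y,f_a)}$ and apply Lemma~\ref{lm:lap_mech_gen}.

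For utility, substitute $\tilde a-a=\frac1\varepsilon\Sigma^{1/2}Z$ into the distance identity to get $\dist_2(f_{\tilde a},f_a)=\frac1\varepsilon\|Z\|$ with $Z\sim\mathrm{SLap}(N)$. This is exactly the random quantity analyzed in Lemma~\ref{lm:util_funcRm_slap}, with $m$ replaced by $N$. That lemma's argument ($\|Z\|\sim\mathcal{G}(1,N,1)$, then Corollary~\ref{cor:expZ_slap_gauss} and Lemma~\ref{lm:slap_highprob}) gives expectation $N/\varepsilon$ and the bound $\eta(\beta;N,1)/\varepsilon$ with probability at least $1-\beta$.

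The only real obstacle is bookkeeping. One has to make sure the stacking order of the coefficients matches the block structure of $\Sigma^{-1}$. One also has to note that the dimension entering the spherical Laplace noise is $n_a m$, not $m$. This is why the bounds are stated in terms of $\dim(U_f)$.
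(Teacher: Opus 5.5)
Your proposal is correct and follows the paper's route: the corollary is obtained from the block-diagonal Gram-matrix computation that precedes it together with Lemma~\ref{lm:util_funcRm_slap}, that is, Theorem~\ref{thm:gp_funcRm} and the $\mathcal{G}(1,\dim(U_f),1)$ analysis of $\|Z\|$ applied in dimension $n_a m$. Your argument only uses the fact that $\Sigma^{-1}$ is a positive definite Gram matrix, so it also covers the vector-valued-basis case, where $\Sigma^{-1}=\sum_r {\Sigma^{-1}}^{(r)}$.
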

Similarly, from Lemma~\ref{lm:util_funcRm_gauss}, we have:
\begin{corollary}
\label{cor:projpriv_util_gauss}
    For $f_a, U_f$ and $\dim(U_f)$ as in Corollary~\ref{cor:projpriv_util_slap},   let $\tilde{a}=M(a):=a+\frac{1}{\sqrt{2\rho}
    }\Sigma^{1/2}Z$ where $Z$ is a standard Gaussian random vector. Then $M$ is $\rho$-CGP, and
    \begin{enumerate}
        \item $\mathbb{E}[\dist\nolimits_2(f_{\tilde{a}},f_a)]=\sqrt{2}\frac{\Gamma(\dim(U_f)/2+1/2)}{\Gamma(\dim(U_f)/2)\sqrt{2\rho}}\le \frac{\sqrt{\dim(U_f)}}{\sqrt{2\rho}}$;
        \item $\dist\nolimits_2(f_{\tilde{a}},f_a)\le \frac{\eta(\beta;\dim(U_f),2)}{\sqrt{2\rho}}$ with probability at least $1-\beta$.
    \end{enumerate}
\end{corollary}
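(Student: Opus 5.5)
The plan is to reduce Corollary~\ref{cor:projpriv_util_gauss} to Lemma~\ref{lm:util_funcRm_gauss} (equivalently, to Theorem~\ref{thm:cgp_funcRm} together with the Gaussian magnitude facts), with the ambient dimension $m$ replaced by $D:=\dim(U_f)=n_a m$. The derivation preceding Corollary~\ref{cor:projpriv_util_slap} shows that, after stacking the vector coefficients $a=(a_{j,r})$ into a single vector in $\mathbb{R}^{D}$, we have
\[
\dist\nolimits_2(f_y,f_a)^2=(y-a)^{\mathsf{T}}\Sigma^{-1}(y-a)=\|\Sigma^{-1/2}(y-a)\|^2 .
\]
Here $\Sigma^{-1}$ is the $D\times D$ block-diagonal matrix with $n_a$ copies of the Gram matrix $\Sigma_0^{-1}$ on its diagonal. (In the vector-valued-basis case it is instead the sum $\sum_r {\Sigma^{-1}}^{(r)}$.) This is exactly identity~\eqref{eqn:func_dist2l2_Rd} with $m$ replaced by $D$.

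First I need $\Sigma^{-1}$ to be positive definite, so that $\Sigma^{1/2}$ and $\Sigma^{-1/2}$ exist. Since $\Sigma_0^{-1}$ is positive definite by linear independence of the $\phi_j$, the block-diagonal matrix is positive definite as well. With this in hand, the proof of Theorem~\ref{thm:cgp_funcRm} carries over verbatim. The map $g(f_a):=\Sigma^{-1/2}a$ is a $1$-Lipschitz isometry from $(U_f,\dist_2)$ into $\mathbb{R}^D$. Lemma~\ref{lm:canon_mech} with $b=\sqrt{2\rho}$ then gives $\rho$-CGP for $g(f_a)+Z/\sqrt{2\rho}$. Post-processing by $\Sigma^{1/2}$ yields $M(a)=a+\frac{1}{\sqrt{2\rho}}\Sigma^{1/2}Z$, so $M$ is $\rho$-CGP.

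For utility, the same isometry gives
\[
\dist\nolimits_2(f_{\tilde a},f_a)=\|\Sigma^{-1/2}(\tilde a-a)\|=\frac{1}{\sqrt{2\rho}}\|Z\|,
\]
where $Z\sim\mathcal{N}(0,I_{D\times D})$, so $\|Z\|\sim\mathcal{G}(\sqrt2,D,2)$. The expectation then follows from Corollary~\ref{cor:expZ_slap_gauss}:
\[
\mathbb{E}\|Z\|=\frac{\sqrt2\,\Gamma((D+1)/2)}{\Gamma(D/2)}.
\]
The bound $\le\sqrt D$ follows from Jensen, since $\mathbb{E}\|Z\|\le\sqrt{\mathbb{E}\|Z\|^2}=\sqrt D$. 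The high-probability bound is the Laurent--Massart $\chi^2_D$ tail bound (Lemma~\ref{lm:gauss_mag_laurent}), which gives
\[
\|Z\|^2\le D+2\sqrt{D\ln(1/\beta)}+2\ln(1/\beta)
\]
with probability at least $1-\beta$. Taking square roots gives $\eta(\beta;D,2)$.

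The main (though mild) obstacle is bookkeeping. I need the vectorization of the matrix-shaped coefficients $a\in\mathbb{R}^{m\times n_a}$ to be consistent between the mechanism's $\Sigma$ and the distance identity, and I need positive definiteness of the resulting $D\times D$ Gram matrix. Once the identity holds with $D$ in place of $m$, everything else is a direct substitution into Lemma~\ref{lm:util_funcRm_gauss}.
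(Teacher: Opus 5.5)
Your proposal is correct and follows the paper's route: the paper obtains this corollary by running Lemma~\ref{lm:util_funcRm_gauss} (with the CGP argument of Theorem~\ref{thm:cgp_funcRm}) on the $\dim(U_f)\times\dim(U_f)$ block Gram matrix from the vector-valued discussion, which is exactly your substitution $D=\dim(U_f)$. The only cosmetic difference is that you get $\mathbb{E}\|Z\|\le\sqrt{D}$ from Jensen instead of the Gamma-ratio bound (Lemma~\ref{lm:sandor_gamma}) behind Corollary~\ref{cor:expZ_slap_gauss}; this is equally valid and also covers $D=1$ directly.
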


\subsection{Arbitrary Functions}
\label{sec:gen_func}
So far, we have focused on functions that are linear combinations of a given set of basis functions.  In this subsection, we consider the most general case where the function $q$ to be privatized is not assumed to take any specified form, i.e., it is taken from $L^2(I)$, the entire space of square-integrable functions. Note that this is an infinite-dimensional space.  We shall focus on GP algorithms; translating such algorithms to their CGP variants is straightforward, with details deferred to Appendix~\ref{app:priv_CGP}.

\paragraph{Project-and-privatize.} 
A natural approach to reduce this problem to the previous one would be to first project $q$ into $U_f$ defined with respect to some basis $\{\phi_j\}_j$, then privatize its projection. Specifically, let $\pj_{U_f}:U\rightarrow U_f$ be a function that computes the least squares {projection} onto $U_f$. We also write $\pj_{U_f}(q) =f_{a(q)}$ where $a(q)$ denotes the coefficients of the approximation. We summarize this simple approach in Algorithm~\ref{alg:algo_projpriv}.

\begin{algorithm}
\caption{Project-and-Privatize}
    \label{alg:algo_projpriv}
    \begin{flushleft}
    \vspace{-8pt}
    \textbf{Input}: $q:I\rightarrow \mathbb{R}^n$; finite-dimensional function space $U_f$; $\varepsilon >0$\\
    \textbf{Output}: privatized $\tilde{q}$
    \end{flushleft}
    \vspace{-5pt}
    \begin{algorithmic}[1]
    \STATE compute $\Sigma$ from $U_f$
    \STATE $a\gets $ coefficients of $\pj_{U_f}(q)$
    \STATE ${\tilde{a}}\gets a+\frac{1}{\varepsilon}\Sigma^{1/2} Z$, $Z\sim\mathrm{SLap}(\dim(U_f))$
    \RETURN $f_{\tilde{a}}$
    \end{algorithmic}
\end{algorithm}

To prove the privacy of Algorithm~\ref{alg:algo_projpriv}, we need the following standard result (e.g. see \cite{deutsch2001best}):

\begin{lemma}
\label{lm:fidim_lipschitz}
    Let $U_f$ be a finite-dimensional function space. Then for any pair of $q, q'\in U$, $\dist_2(f_{a(q)},f_{a(q')})\le \dist_2(q,q')$, where $f_{a(q)},f_{a(q')}$ denote the least squares projections of $q$ and $q'$ onto $U_f$, respectively.
\end{lemma}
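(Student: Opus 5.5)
The plan is to show that the least squares projection $\pj_{U_f}$ is the orthogonal projection onto a closed subspace of the Hilbert space $L^2(I)$. Such a projection is linear and has operator norm at most $1$, which gives the claim directly.

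First, equip $L^2(I)$ with the inner product $\langle u,v\rangle:=\int_I \langle u(t),v(t)\rangle\,dt$, whose induced norm gives $\dist_2(u,v)=\|u-v\|_{L^2}$. Since $U_f$ is finite-dimensional, it is a closed subspace. Indeed, any Cauchy sequence $f_{a^{(k)}}$ in $U_f$ has coefficients $a^{(k)}$ that are Cauchy in $\mathbb{R}^{\dim(U_f)}$, because by \eqref{eqn:func_dist2l2_Rd} we have $\dist_2(f_a,f_{a'})=\|\Sigma^{-1/2}(a-a')\|$ with $\Sigma^{-1}$ positive definite. The least squares projection $f_{a(q)}$ minimizes $\|q-f\|_{L^2}$ over $f\in U_f$. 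Expanding $\|q-f_{a(q)}-s\phi_j\|^2$ and using the first-order condition in $s$ at $s=0$ gives the normal equations
\[
\langle q-f_{a(q)},\phi_j\rangle=0 \quad \text{for all } j,
\]
so the residual $q-f_{a(q)}$ is orthogonal to $U_f$. In coordinates this reads $a(q)=\Sigma\,(\langle q,\phi_j\rangle)_j$, which is visibly linear in $q$. Hence $f_{a(q)}-f_{a(q')}=f_{a(q-q')}$.

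Next, apply the Pythagorean identity to $h:=q-q'$:
\[
\|h\|^2=\|f_{a(h)}\|^2+\|h-f_{a(h)}\|^2\ge \|f_{a(h)}\|^2,
\]
because $f_{a(h)}\in U_f$ is orthogonal to the residual $h-f_{a(h)}$. Combining this with linearity yields $\dist_2(f_{a(q)},f_{a(q')})=\|f_{a(h)}\|\le\|h\|=\dist_2(q,q')$.

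The only mildly delicate step is justifying that the projection exists, is unique, and is characterized by the normal equations. Here finite-dimensionality makes this elementary. The objective $\|q-f_a\|^2$ is a strictly convex quadratic in $a$, with Hessian $2\Sigma^{-1}\succ 0$, so it has a unique minimizer, and that minimizer is given by the normal equations. In the vector-valued cases the same argument applies verbatim with the block Gram matrix in place of $\Sigma^{-1}$.
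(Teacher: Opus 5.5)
Your proof is correct and follows essentially the same route as the paper's. The paper gets existence of the projections, and orthogonality of the residuals $q-f_{a(q)}$ and $q'-f_{a(q')}$ to $U_f$, from the Hilbert projection theorem; it then writes $q-q'=c+c^{\perp}$ with $c=f_{a(q)}-f_{a(q')}\in U_f$ and $c^{\perp}$ the difference of the residuals, and concludes by the Pythagorean identity. Deriving the projection from the normal equations (a strictly convex quadratic with Hessian $2\Sigma^{-1}$) and reducing via linearity to $h=q-q'$ are only minor variations; the former is the alternative the paper itself points to in its footnote and least-squares appendix, and your closedness argument is not actually needed once you minimize explicitly.
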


\begin{lemma}
    Algorithm~\ref{alg:algo_projpriv} is $\varepsilon$-GP.
\end{lemma}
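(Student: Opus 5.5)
The plan is to write Algorithm~\ref{alg:algo_projpriv} as a composition of a deterministic $1$-Lipschitz map into $\mathbb{R}^{\dim(U_f)}$ with Euclidean metric, followed by the canonical spherical Laplace mechanism of Lemma~\ref{lm:canon_mech}, and then a post-processing step. This is the same route as the proof of Theorem~\ref{thm:cgp_funcRm}, except that the map is now defined on all of $U$ rather than only on $U_f$.

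Define $g:U\rightarrow\mathbb{R}^{\dim(U_f)}$ by $g(q):=\Sigma^{-1/2}a(q)$, where $a(q)$ is the coefficient vector of $\pj_{U_f}(q)$. In the vector-coefficient case, $a(q)$ is stacked into a single vector and $\Sigma$ is the block Gram matrix from Section~\ref{sec:ext_vec_func}. Take any $q,q'\in U$. The identity \eqref{eqn:func_dist2l2_Rd}, or its vector-valued analogue, gives
\[
\|g(q)-g(q')\|=\bigl\|\Sigma^{-1/2}(a(q)-a(q'))\bigr\|=\dist\nolimits_2(f_{a(q)},f_{a(q')}).
\]
By Lemma~\ref{lm:fidim_lipschitz}, the right-hand side is at most $\dist_2(q,q')$. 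Hence $g$ is $1$-Lipschitz from $(U,\dist_2)$ to Euclidean space.

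Lemma~\ref{lm:canon_mech} with $K=1$ and $b=\varepsilon$ then shows that $q\mapsto g(q)+\frac{1}{\varepsilon}Z$, with $Z\sim\mathrm{SLap}(\dim(U_f))$, is $\varepsilon$-GP. The algorithm's output is obtained from this by post-processing. First apply $\Sigma^{1/2}$, which gives $\Sigma^{1/2}(\Sigma^{-1/2}a(q)+\frac{1}{\varepsilon}Z)=a(q)+\frac{1}{\varepsilon}\Sigma^{1/2}Z=\tilde{a}$. Then form $f_{\tilde a}$. Neither step depends on $q$, since $\Sigma$ is determined by $U_f$ alone. By the post-processing property, Algorithm~\ref{alg:algo_projpriv} is $\varepsilon$-GP.

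The only nontrivial ingredient is Lemma~\ref{lm:fidim_lipschitz}, namely that orthogonal projection onto a closed subspace of a Hilbert space is nonexpansive, and we assume it. The remaining points to check are minor. The coefficient map must be well defined, which holds because the basis is linearly independent, so $\Sigma$ is positive definite. The Gram-matrix identity must also hold in the vector-valued setting, which Section~\ref{sec:ext_vec_func} already establishes.
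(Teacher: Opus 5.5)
Your proof is correct and uses the same key fact as the paper's, namely that the least-squares projection is nonexpansive (Lemma~\ref{lm:fidim_lipschitz}). The paper combines this with the $\varepsilon$-GP guarantee of the finite-dimensional mechanism on $(U_f,\dist_2)$ from Theorem~\ref{thm:gp_funcRm}, while you merge the projection and the isometry $a\mapsto\Sigma^{-1/2}a$ into one $1$-Lipschitz map and apply Lemma~\ref{lm:canon_mech} plus post-processing, which is the same argument with Theorem~\ref{thm:gp_funcRm} unrolled.
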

\begin{proof} 
    Let $M_f:U_f\rightarrow U_f$ be the mechanism that privatizes $\pj_{U_f}(q)$, which is $\varepsilon$-GP on $(U_f,\dt)$ by previous discussions. Let $M:U\rightarrow U_f$ denote the mechanism in Algorithm~\ref{alg:algo_projpriv}. We have for any measurable $S_f\subseteq U_f$ and any pair $q,q'\in U$
    \begin{align*}
    \Pr[M(q)\in S_f] &= \Pr[M_f(\pj\nolimits_{U_f}(q))\in S_f]\\
    &\le  e^{\varepsilon \dist_2\left(f_{a(q)},f_{a(q')}\right)} \Pr[M_f(\pj\nolimits_{U_f}(q'))\in S_f]\\
    &= e^{\varepsilon \dist_2\left(f_{a(q)},f_{a(q')}\right)} \Pr[M(q')\in S_f].
    \end{align*}
    Thus, if we can show that $\dist_2\left(f_{a(q)},f_{a(q')}\right) \le \dist_2(q,q')$, then $M$ satisfies the GP requirement on $U$ w.r.t. $\dt$. The required inequality then follows from Lemma \ref{lm:fidim_lipschitz}.
\end{proof}

The detailed formulas for computing the least squares projections for both real-valued and vector-valued functions are provided in Appendix~\ref{sec:app_ls}.

Next, we discuss the $L^2$ error of Algorithm~\ref{alg:algo_projpriv}. By simple triangle inequality, we have $\mathbb{E}[\dt(q,f_{\tilde{a}})]\le \dt(q,f_a)+\frac{1}{b}\mathbb{E}[\|Z\|]$. Below, we provide a tighter analysis. 
\begin{lemma}
\label{lm:finapprox_priv_distsquared}
    Let $U_f$ be the subspace given by the span of $\{\phi_j(\cdot)\}_{j\in [m]}$. For $q\in U$, let $f_a\in U_f$ be a projection of $q$ in $U_f$. For $\tilde{a}:=a+\frac{1}{b}\Sigma^{1/2}Z$ where $Z$ is a random vector with mean zero, we have
        \[
    \mathbb{E}[\dt(q,f_{\tilde{a}})^2] = \dt(q,f_{a})^2 + \frac{1}{b^2}\mathbb{E}[\|Z\|^2].
    \]
\end{lemma}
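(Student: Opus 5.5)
The plan is a Pythagorean decomposition in $L^2(I)$. Write $q-f_{\tilde a}=(q-f_a)+(f_a-f_{\tilde a})$ and expand the squared norm:
\[
\dt(q,f_{\tilde a})^2=\dt(q,f_a)^2+\dt(f_a,f_{\tilde a})^2-2\int_I \langle q(t)-f_a(t),\,f_{\tilde a}(t)-f_a(t)\rangle\,dt .
\]
The proof then has two steps: kill the cross term, and identify the second term. Each uses only facts already established.

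\textbf{Cross term.} Since $f_a$ is the least squares projection of $q$ onto $U_f$, the residual $q-f_a$ is orthogonal to every element of $U_f$. That is, $\int_I\langle q-f_a,\phi_j\rangle\,dt=0$ for all $j$; these are the normal equations characterizing the projection (Appendix~\ref{sec:app_ls}). Now $f_{\tilde a}-f_a=f_{\tilde a-a}\in U_f$ by linearity of $c\mapsto f_c$. So the cross term vanishes identically, for every realization of $Z$.

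This is stronger than what we need: it does not even require $\mathbb{E}[Z]=0$. If one instead only had some approximate projection, the fallback would be to use linearity of expectation. The cross term is linear in $\tilde a-a=\frac1b\Sigma^{1/2}Z$, so its expectation vanishes by $\mathbb{E}[Z]=0$, provided $\mathbb{E}\|Z\|<\infty$. I will mention this route as well, since the hypothesis of mean zero is stated.

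\textbf{Noise term.} By identity~\eqref{eqn:func_dist2l2_Rd}, or its vector-valued analogues in Section~\ref{sec:ext_vec_func},
\[
\dt(f_{\tilde a},f_a)=\|\Sigma^{-1/2}(\tilde a-a)\|=\tfrac1b\|Z\|,
\]
exactly as in the proof of Lemma~\ref{lm:util_funcRm_slap}. Squaring and taking expectations gives $\frac1{b^2}\mathbb{E}[\|Z\|^2]$.

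\textbf{Conclusion and main obstacle.} Combining the two steps yields the claimed identity. The main point to check is the orthogonality of the residual when $q$ is vector-valued, or when the coefficients are vectors. I would verify this by noting that the projection minimizes a strictly convex quadratic in $a$, and that setting its gradient to zero gives exactly $\int_I\langle q-f_a, g\rangle\,dt=0$ for all $g$ in the real span defining $U_f$. Integrability is not an issue, since all functions involved lie in $L^2(I)$.
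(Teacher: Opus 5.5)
Your argument is correct, but your main route differs from the paper's; the paper's proof is essentially your fallback. The paper expands $\dt(q,f_{\tilde a})^2$ in coordinates, separately for each of the three settings (real basis and coefficients, vector-valued basis, vector-valued coefficients). This gives $\dt(q,f_a)^2$, a cross term linear in $Z$, and the quadratic term $\frac{1}{b^2}Z^{\mathsf{T}}\Sigma^{1/2}\Sigma^{-1}\Sigma^{1/2}Z=\frac{1}{b^2}\|Z\|^2$. It then removes the cross term only in expectation, using $\mathbb{E}[Z]=0$. It never uses the orthogonality of $q-f_a$ to $U_f$, so its computation actually proves the identity for an arbitrary coefficient vector $a$, projection or not, at the cost of needing the mean-zero hypothesis. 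You instead use the normal equations to make the cross term vanish for every realization of $Z$, which yields the pathwise identity $\dt(q,f_{\tilde a})^2=\dt(q,f_a)^2+\frac{1}{b^2}\|Z\|^2$. This is stronger in the other direction. It needs no assumption on $Z$ at all, and it transfers tail bounds on $\|Z\|$ directly to the total error: for instance, $\dt(q,f_{\tilde a})=\sqrt{\dt(q,f_a)^2+\|Z\|^2/b^2}$, which is sharper than the triangle-inequality bound the paper uses for its high-probability error estimates. Your inner-product formulation also handles all three settings at once. The only case-specific input is the already-established identity $\dt(f_c,f_{c'})=\|\Sigma^{-1/2}(c-c')\|$, whereas the paper repeats the coordinate computation three times. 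Conversely, if $f_a$ were only an approximate projection, the paper's expectation argument (your fallback) is the one that survives.
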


As one would expect, the simple mechanism incurs large error if $q$ cannot be approximated well by any $f_a\in U_f$.  On the other hand, if $U_f$ has large dimension, then we might still end up with a large error from the noise vector $\|Z\|$ even if $q$ is well approximated by $U_f$. Thus, the challenge is to select a basis $\{\phi_j\}_j$ that strikes a balance between the two sources of error.  Critically, the selection must also be done in a private manner.  

\paragraph{A private, balanced selection.}
Let $\{\phi_j(\cdot)\}_{j\in J}$ be a pre-selected collection of basis functions  indexed by $J$. We assume that $q$ can be approximated by these basis functions to some accuracy $\gamma =O(1/\varepsilon)$ that the analyst is content with, absent privacy. Otherwise, a general-purpose dense basis allowing arbitrary $\gamma>0$ can be used, where $J$ is infinite, such as Hermite functions for $L^2(\mathbb{R})$,  the Fourier basis $\{1,\sin(jt),\cos(jt)\}_{j\in \mathbb{N}}$ for $L^2([-\pi,\pi])$, or monomials $\{1,t^j\}_{j\in \mathbb{N}}$ for $L^2([0,T])$.
Our algorithm (Algorithm~\ref{alg:algo_privsel}) follows roughly these steps:
\begin{enumerate}
    \item Identify a large enough basis that approximates $q$ well;
    \item Project and privatize $q$, obtain the first approximation $\tilde{q}_0$;
    \item Based on the coefficients of $\tilde{q}_0$, identify and eliminate the ones with (relatively) small magnitude, as they contribute less to the approximation quality but increase the noise;
    \item Project $q$ onto the space spanned by the remaining basis functions, and privatize.
\end{enumerate}
\begin{algorithm}
\caption{PrivFuncSelect}
    \label{alg:algo_privsel}
    \vspace{-8pt}
    \begin{flushleft}
    \textbf{Input}: $q:I\rightarrow \mathbb{R}^n; \{\phi_j\}_{j\in J};\beta > 0; \varepsilon >0$\\
    \textbf{Output}: privatized $\tilde{q}$
    \end{flushleft}
    \vspace{-5pt}
    \begin{algorithmic}[1]
    \STATE $k_0 \gets \mathrm{SVT}(q,(\frac{\varepsilon}{12},\frac{\varepsilon}{6}),-\frac{4}{\varepsilon},1,g_1,g_2,\dots)$, $g_j(\cdot):=-\dist_2(\pj_{U_f^{(j)}}(\cdot), \cdot)$
    \STATE $\tilde{q}_0\gets \mathrm{Project}$-$\mathrm{and}$-$\mathrm{Privatize}(q,U_f^{(k_0)},\varepsilon/4)$
    \STATE $\tilde{c}\gets$ coefficients of $\tilde{q}_0$ in $U_f^{(k_0)}$
    \STATE $s^{(r)}\gets r$th largest magnitude among $\tilde{c}$, $r\in [k_0]$
    \STATE ${k_1}\gets \mathrm{SVT}(q,(\frac{\varepsilon}{12},\frac{\varepsilon}{6}),0,1,g'_1,g'_2,\dotsb)$, $N_r:=\{j: \|\tilde{c}_j\| \ge s^{(r)}\}$, $g'_r(\cdot):=\frac{n|N_r|}{\varepsilon/4}-\dist_2(f_{b^{(r)}}, \cdot)$, $b_j^{(r)}:=\tilde{c}_j\cdot\mathbb{1}\{j\in N_r\}$
    \STATE $U_f \gets \mathrm{span}(\{\phi_j: j\in N_{k_1}\})$
    \STATE $f_{\tilde{a}}\gets \mathrm{Project}$-$\mathrm{and}$-$\mathrm{Privatize}(q,U_f,\varepsilon/4)$
    \RETURN $f_{\tilde{a}}$
    \end{algorithmic}
\end{algorithm}

For $j\ge 1$, let $U_f^{(j)}$ denote the subspace spanned by the first $j$ basis functions, Let $\pj_{U_f^{(j)}}(q)$ denote the least-squares {projection} of $q$ in $U_f^{(j)}$. In step ($1$) above, we want to find $j$ such that $\pj_{U_f^{(j)}}(q)$ is sufficiently small (e.g. $O(1/\varepsilon)$). We use SVT on $g_1,g_2,\dots$ where each $g_j(q):=-\dist_2(\pj_{U_f^{(j)}}(q), q)$ is $1$-Lipschitz, as in line $1$ of Algorithm~\ref{alg:algo_privsel}. Then we apply project-and-privatize in line $2$-$3$, where $\tilde{c}$ are the privatized coefficients of $\tilde{q}_0$. To decide which basis functions to keep, we order the noisy coefficients in $\tilde{c}$ based on their magnitude, and create candidate sets $N_r:=\{j: \|\tilde{c}_j\| \ge s^{(r)}\}$, where $s^{(r)}$ is the $r$th largest magnitude.  
We use  $b_j^{(r)}=\tilde{c}_j\cdot\mathbb{1}\{j\in N_r\}$ to approximate the projection that corresponds to muting the coefficients not in $N_r$.
We run SVT to find which candidate set has approximately balanced noise and approximation error, by using the sequence $g'_1,g'_2,\dotsb$ where $g'_r(\cdot):=\frac{n|N_r|}{\varepsilon/4}-\dist_2(f_{b^{(r)}}, \cdot)$, which are $1$-Lipschitz, and finding where they first cross zero, as in line $5$. Finally, we perform a final project-and-privatize, using the basis set $N_{k_1}$ identified by the second SVT call, as in lines $6$-$7$.

\paragraph{Privacy.} Algorithm~\ref{alg:algo_privsel} is $\varepsilon$-GP by basic composition in Lemma~\ref{lm:basic_comp}, where $\varepsilon/4$ is used in each of lines $1$, $2$, $5$, $7$.
\paragraph{Utility.} 
The goal of the first SVT is to identify a large enough subset of basis functions from a given basis that is assumed to approximate $q$ well, while that of the second SVT is to further refine the selection by approximately balancing the privatization noise and approximation error. By the generic guarantee of SVT in Lemma~\ref{lm:svt_util}, both goals are approximately achieved: In each SVT call, we might overshoot the target in value by $O(\log(1/\beta)/\varepsilon)$ resulting in an extra subset of basis functions of size $O(\log(1/\beta))$, or undershoot in value from the threshold by at most $O\left(\left(\log({1}/{\beta})+\log ({k^*}/{\beta})\right)/\varepsilon\right)$, where $k^*$ denotes the index of the balance point. 
Thus, Algorithm~\ref{alg:algo_privsel} meets our goal of providing a roughly balanced selection.

While Algorithm \ref{alg:algo_privsel} has addressed the problem of how to select a set of good basis functions from a given collection, we have not touched the issue of how to decide which collection to use.  Certainly, the choice of the basis also has a large impact on the utility of the algorithm, but this is largely application-dependent. For example, if the functions tend to be periodic, then the Fourier basis is preferred; for data arising from signal processing (e.g. ECG data), sinc functions can be used; 
if the function is a curve, then piecewise basis functions are better (see Section \ref{sec:privcurve}).  Thus, this decision is best made at the application level. 

\subsection{Projecting $[0,T]$-functions onto $\mathbb{R}$-functions}
\label{sec:infbasis_finitefunc}
In previous discussions, we had assumed that both the function $q$ and the basis $\{\phi_j\}_{j}$ are functions on the same domain $I$. In fact, it is possible to separate the domains, in particular for the case where one is $\mathbb{R}$ and the other is a finite interval.
In some cases, it might even be beneficial to do so, for example when it gives a matrix $\Sigma$ that is easier to work with. Let $\mathrm{sinc}(t):=\frac{\sin(\pi t)}{\pi t}$ be the normalized sinc function. The set of functions $\{\mathrm{sinc}(t-k)\}_{k\in \mathbb{Z}}$ is square-integrable on $\mathbb{R}$ (hence also on $[0,T]$); moreover, the set $\{\mathrm{sinc}(t-k)\}_{k\in \mathbb{Z}}$ is an orthonormal basis with $\int_{-\infty}^{\infty} \mathrm{sinc}(t-k)^2=1$ and $\int_{-\infty}^{\infty} \mathrm{sinc}(t-k)\cdot\mathrm{sinc}(t-l) dt =0$ for $k\neq l$. We would like to use this readily available orthonormal basis, which is also commonly used for approximating functions that arise in signal processing.
We will show that using such a basis in $L^2(\mathbb{R})$ for a curve in $C([0,T])$ only requires little additional work\footnote{Note the approximation quality could be different, see discussion in Appendix~\ref{sec:app_bounded_via_unbounded}}, and the privacy guarantee follows similarly to the discussion above. Here, $I$ always refers to the domain of the function $q$, and $\dt$ is defined in terms of $I$.
\paragraph{Using $L^2(\mathbb{R})$ for   $(C([0,T]),\dist_{2})$.}  We consider the case where 
$q\in U=C([0,T])$ is a continuous function on the interval $[0,T]$, but its privatization uses functions from $L^2(\mathbb{R})$. Define a function $L:U\rightarrow L^2(\mathbb{R})$ by
\[
L(q)(t):=\begin{cases}
    q(t) & t\in [0,T]\\
    0 & \text{otherwise}.
\end{cases}
\]
I.e., $L(q)$ is the extension of $q$ which takes inputs from $\mathbb{R}$. 
Let $L(U):=\{L(q): q\in U\}$. 
Notice $\int_{-\infty}^{\infty} L(q)(t)^2 dt = \int_{[0,T]} q(t)^2 dt <\infty $, so $L(U)\subset L^2(\mathbb{R})$. 
Given an $m$-dimensional subspace $L_f\subseteq L^2(\mathbb{R})$, let $\pj:U \rightarrow L_f$ be the function that computes the least squares {projection} of $L(q)$ in $L_f$, i.e. $\pj:q\mapsto f_{a(L(q))}$. Let us write $\dist_{2,\mathbb{R}}(u,u')=\int_{-\infty}^{\infty} (u(t)-u'(t))^2 dt$ for $u, u'\in L^2(\mathbb{R})$.
Now, from Section~\ref{sec:finitedim_func} we have a GP mechanism $M_f$ for privatizing $\pj(q)\in L_f$ w.r.t. $\dist_{2,\mathbb{R}}$. 
Define a mechanism $M_1: U\rightarrow L_f$, where $M_1(q) = M_f(\pj(q))$.
Then $M_1$ satisfies for all $q,q'$, and all measurable $S_f\subseteq L_f$
\begin{align*}
\Pr[M_1(q)\in S_f] &= \Pr[M_f(\pj(q))\in S_f]\\
&\le e^{\varepsilon\dist_{2,\mathbb{R}}(\pj(q),\pj(q'))}\Pr[M_f(\pj(q'))\in S] \\
&= e^{\varepsilon\dist_{2,\mathbb{R}}(\pj(q),\pj(q'))}\Pr[M_1(q')\in S_f].
\end{align*}
Also, since $L(U)\subset L^2(\mathbb{R})$, Lemma~\ref{lm:fidim_lipschitz} holds when we replace the metric space $(U,\dist_2)$ with $(L(U),\dist_{2,\mathbb{R}})$ in the statement. Thus,
\begin{align*}
\dist\nolimits_{2,\mathbb{R}}(\pj(q),\pj(q')) &= \dist\nolimits_{2,\mathbb{R}} (f_{a(L(q))}, f_{a(L(q'))}) \overset{Lemma~\ref{lm:fidim_lipschitz}}{\le} \dist\nolimits_{2,\mathbb{R}}(L(q),L(q')) = \dt(q,q');
\end{align*}
i.e. $M_1$ is $\varepsilon$-GP w.r.t. $\dist_2$. 

\section{Privatization of Curves}
\label{sec:privcurve}
A curve is a continuous function $[0,T]\rightarrow R^n$.  
Unlike the case of arbitrary function privatization which assumes a suitable $U_f$ is given, for privatizing curves there are well known sets of basis functions that approximate continuous functions well. Based on this fact, in this section we provide another algorithm more suitable for curves.

\subsection{Privatization of piecewise functions}
Before we present our algorithm for arbitrary curve privatization, we discuss how piecewise functions can be privatized using the framework developed in Section~\ref{sec:finitedim_func}.  For ease of discussion we assume the coefficients are real-valued in Sections~\ref{sec:def_piecewise} and \ref{sec:ensure_cont} below; the discussions adapt to the vector-valued case by repeating the operations in each coordinate, as explained in Section~\ref{sec:ext_vec_func}.

\subsubsection{Defining piecewise basis functions}
\label{sec:def_piecewise}
Let $\mathbb{1}_A:I\rightarrow \{0,1\}$ be the indicator function that checks whether a point is in the set $A$, i.e., $\mathbb{1}_A(t)=1$ if $t\in A$ and $0$ otherwise. Clearly, two indicator functions $\mathbb{1}_A(\cdot)$ and $\mathbb{1}_{A'}(\cdot)$ are linearly independent if $A\neq A'$. Let $A_s=[T_{s-1},T_{s})$ for $s\in [k]$, and $0=T_0 < T_1 < \dotsb < T_k=T$. 
If a set of functions $\{\phi_j(\cdot)\}_{j \in [m]}$ is linearly independent on each $A_s$,
then the set $\{\phi_{s,j}\}_{s\in [k], j\in [m]}$, where $\phi_{s,j}(t):=\mathbb{1}_{A_s}(t)\phi_j(t)$, is also linearly independent on $[0,T]$ {as long as each $\phi_{s,j}(\cdot)$ is not identically zero}. Since the latter set has size $mk$, the matrix $\Sigma$ becomes $mk\times mk$. In particular, ordering the coefficients as $a=[a_{1,1},\dots,a_{1,m},a_{2,1},\dots,a_{k,m}]^{\mathsf{T}}$, so 
$f_a(t):=\sum_s \sum_j a_{s,j}\cdot\mathbb{1}_{A_s}(t)\phi_j(t)$, 
we get 
\[\Sigma^{-1} := \begin{bmatrix}
	{\Sigma^{(1)}}^{-1} & \mathbf{0}          & \cdots & \mathbf{0} \\
	\mathbf{0}          & {\Sigma^{(2)}}^{-1} & \cdots & \mathbf{0} \\
	\vdots              & \vdots              & \ddots & \vdots \\
	\mathbf{0}          & \mathbf{0}          & \cdots & {\Sigma^{(k)}}^{-1}
\end{bmatrix},\]
where each ${\Sigma^{(s)}}^{-1}$ is $m\times m$, with $[{\Sigma^{(s)}}^{-1}]_{j,l} = \int_{t\in A_s} \phi_j(t)\phi_l(t) dt$ for $s\in [k]$. For example, applying to piecewise linear functions with $\{\phi_1:t
\mapsto t, \phi_2:t\mapsto 1\}$, we get ${\Sigma^{(s)}}^{-1}=\begin{bmatrix}
    \frac{1}{3}(T_s^3-T_{s-1}^3) & \frac{1}{2}(T_s^2-T_{s-1}^2) \\
    \frac{1}{2}(T_s^2-T_{s-1}^2)  & T_s-T_{s-1}
\end{bmatrix}$, which gives
\[\Sigma^{(s)}=\begin{bmatrix}
    \frac{12} {(T_s-T_{s-1})^3}& \frac{-6(T_s+T_{s-1})} {(T_s-T_{s-1})^3}\\
    \frac{-6(T_s+T_{s-1})} {(T_s-T_{s-1})^3} & \frac{4(T_s^2+T_sT_{s-1}+T_{s-1}^2)}{(T_s-T_{s-1})^3}
\end{bmatrix}.\]
Thus, although the matrix $\Sigma$ for piecewise linear functions is $2k\times 2k$, matrix multiplication can be done at each sub-matrix $\Sigma^{(s)}$
after obtaining the $2k$-dimensional random vector $Z$, i.e., the number of multiplications is $4k$ instead of $4k^2$.
\begin{remark}
    N.B. Although the multiplications can happen at the sub-vector level, the noise vector $Z$ from the spherical Laplace distribution must be drawn together as a single vector. Drawing the noise vector together enables an effect that is reminiscent of \textit{parallel composition} in DP; using separately drawn sub-vectors would require splitting the privacy budget among the component privatizations that use these sub-vectors, which would not be ideal.
    
    On the other hand, Gaussian noises can be separately drawn, as uncorrelated Gaussian random variables are also independent, a desirable property enjoyed solely by the Gaussian distribution!
\end{remark}

\subsubsection{Ensuring continuity}
\label{sec:ensure_cont}
The privatization mechanism where noise is added to each segment described above might result in a privatized curve that is only continuous within each segment, and discontinuous at the segment end points $T_s$'s. If continuity is desired, a post-processing step may be added where the privatized curve is replaced with a continuous one that is nearest to it. That is, given privatized function $f_{\tilde{a}}$, we find $f_{\hat{a}}$ with $\lim_{t\uparrow T_s} f_{\hat{a}}(t)=f_{\hat{a}}(T_s)$ for $1\le s\le k-1$, such that $\dist_2(f_{\tilde{a}},f_{\hat{a}})$ is minimized. Such a process may increase the initial error by at most two times. Indeed, let $f_a\in U_f$ denote the original continuous function, then 
\[\dist\nolimits_2(f_{\hat{a}},f_a)\le \dist\nolimits_2(f_{\hat{a}},f_{\tilde{a}})+\dist\nolimits_2(f_{\tilde{a}},f_a) \le 2\dist\nolimits_2(f_{\tilde{a}},f_a),\]
where the last inequality is due to $f_{\hat{a}}$ being the nearest continuous curve to $f_{\tilde{a}}$ in $U_f$. 

Equivalently, we want to minimize $\dist_2(f_y,f_{\tilde{a}})^2=(y-\tilde{a})^{\mathsf{T}}\Sigma^{-1}(y-\tilde{a})=y^{\mathsf{T}}\Sigma^{-1}y - 2\tilde{a}^{\mathsf{T}}\Sigma^{-1}y+\tilde{a}^{\mathsf{T}}\tilde{a}$, and $f_y$ is continuous at $T_s$ iff $\sum_{j\in [m]} y_{s,j}\cdot\phi_j(T_s)= \sum_{j\in [m]} y_{s+1,j}\cdot\phi_j(T_s)\iff \sum_{j\in [m]} (y_{s,j}-y_{s+1,j})\phi_j(T_s)=0$, for $1\le s\le k-1$.
I.e., the projection process can be formulated as the following quadratic program with linear equality constraints:
\begin{align*}
	\underset{y}{\text{minimize\;\;}}  &\frac{1}{2}y^{\mathsf{T}}\Sigma^{-1} y - \tilde{a}^{\mathsf{T}}\Sigma^{-1} y\\
	\text{subject to\;\;} & C y = 0,
\end{align*}
where $y = [y_{1,1},\cdots,y_{1,m},\cdots,y_{k,1},\cdots,y_{k,m}]^{\mathsf{T}}\in\mathbb{R}^{km}$, and the matrix $C\in\mathbb{R}^{(k-1)\times(km)}$ is
\begin{align*}
	C:=\begin{bmatrix}
		\phi_{1:m}(T_1) & -\phi_{1:m}(T_1) & \mathbf{0}       & \cdots           & \mathbf{0} \\
		\mathbf{0}      & \phi_{1:m}(T_2)  & -\phi_{1:m}(T_2) & \cdots          & \mathbf{0} \\
		\vdots          & \vdots           & \vdots           & \ddots             & \vdots     \\
		\mathbf{0}      & \mathbf{0}       & \mathbf{0}       & \cdots  &-\phi_{1:m}(T_{k-1})
	\end{bmatrix}
\end{align*}
where $\phi_{1:m}(T_s)=[\phi_1(T_s),\cdots,\phi_m(T_s)]\in\mathbb{R}^{m}$.

\subsection{Privately approximating curves via piecewise functions}
In this subsection, we present an algorithm for arbitrary curve privatization using piecewise basis functions. 
Our algorithm tries to find a suitable finite-dimensional $U_f$, and performs a $\mathrm{Project}$-$\mathrm{and}$-$\mathrm{Privatize}$ of $q$ onto $U_f$. The space $U_f$ is generated from a set of sub-intervals and an input real-valued basis $\{\phi_j\}_j$, and has the property that the error from the approximation $\pj_{U_f}(q)$ is roughly equal to $\dim(U_f)$. 
The process to find $U_f$ is composed of two parts, provided in sub-algorithms $\mathrm{PrivFuncSeg}$ (Algorithm~\ref{alg:algo1}) and $\mathrm{ReduceSeg}$ (Algorithm~\ref{alg:algo_redseg}), which we discuss in detail below.

\paragraph{$\mathrm{\textbf{PrivFuncSeg}}$.} We are given a (small) set of basis functions $\{\phi_j\}_j$, which should provide good approximations to continuous functions when restricted to a small sub-interval. The goal of $\mathrm{PrivFuncSeg}$ is find the number of sub-intervals, such that when we fit $\{\phi_j\}_j$ to each sub-interval, we get an overall curve which roughly balance the two sources of error mentioned above. We do so via SVT where we double the number of sub-intervals in each check. 
Let $q:[0,T]\rightarrow \mathbb{R}^n$ be the input curve. For $j=0,1,2,\dotsb$, let 
$\pj_{U_f^{(j)}}(\cdot)$ compute the best approximation using $2^j$ sub-intervals with end points at $\{r\frac{T}{2^j}:r=0,1,2,\dotsb, 2^j\}$; i.e. $\pj_{U_f^{(j)}}(q)$ is the least squares projection of $q$ onto the space  $U_f^{(j)}:=\mathrm{span}_{\mathbb{R}^n}\left(\{\phi_{l}\cdot\mathbb{1}_{A_{r}^{(j)}}\}_{l\in [m],r\in [2^j]}\right)$, where $A_{r}^{(j)}:=[(r-1)\frac{T}{2^j},r\frac{T}{2^j})$. We run $\mathrm{SVT}$ on the queries $g_j(q):=\tau_j-\dist_2(\pj_{U_f^{(j)}}(q), q)$ with privacy budget $\frac{\varepsilon}{4}$, where $\tau_j:=\frac{\dim(U_f^{(j)})}{\varepsilon/4}=\frac{2^jmn}{\varepsilon/4}$ is the expected error due to privatization of $\pj_{U_f^{(j)}}(q)$. Note that $\dist_2(\pj_{U_f^{(j)}}(\cdot), \cdot)$ is $1$-Lipschitz, and so is $g_j(\cdot)$. Moreover, 
the sequence $g_j(q)$ strictly increases, since $\tau_j$ strictly increases while $\dist_2(\pj_{U_f^{(j)}}(q), q)$ is non-increasing with $j$. 
\begin{algorithm}[t]
\caption{PrivFuncSeg}
    \label{alg:algo1}
    \vspace{-8pt}
    \begin{flushleft}
    \textbf{Input}: $q:[0,T]\rightarrow \mathbb{R}^n; \{\phi_j\}_{j\in [m]};\beta > 0; \varepsilon >0$\\
    \textbf{Output}: privatized $\tilde{q}$
    \end{flushleft}
    \vspace{-5pt}
    \begin{algorithmic}[1]
    \STATE $\bar{k} \gets \mathrm{SVT}(q,(\varepsilon/12,\varepsilon/6),0,1,g_0,g_1,g_2,\dotsb)$, where $\tau_j=\frac{2^jmn}{\varepsilon/4}$, $g_j(q):=\tau_j-\dist_2(\pj_{U_f^{(j)}}(q), q)$, $\pj_{U_f^{(j)}}$ computes best approximation using $2^{j}$ sub-intervals
    \STATE $B\gets 3\varepsilon/4$, privacy budget for remaining steps
    \STATE $S\gets \{j{T}/{2^{\bar{k}}}: j=0,1,2,\dotsb,2^{\bar{k}}\}$
    \STATE $k_1\gets \min(\bar{k}-2,4)$
    \IF{$k_1\ge 1$}
    \FOR{$i=1,\dots,4$}
    \STATE $\hat{S}_i\gets \mathrm{ReduceSeg}(q,(i-1)T/4,i T/4,k_1,1,S_i,\beta,B,\varepsilon/16)$, $S_i:= S\cap [(i-1)T/4,i T/4]$
    \ENDFOR
    \ENDIF
    \STATE $\hat{S}\gets \cup_{i} \hat{S}_i=\{0=T_0,T_1,\dots,T_{N}=T\}$;\;\; $N\gets |\hat{S}|-1$
    \STATE $U_f\gets \mathrm{span}(\{\phi_j\cdot \mathbb{1}_{[T_{s-1},T_s)}\}_{s\in [N], j\in [m]})$
    \STATE $f_{\tilde{a}}\gets \mathrm{Project}$-$\mathrm{and}$-$\mathrm{Privatize}(q,U_f,B)$
    \RETURN $f_{\tilde{a}}$
    \end{algorithmic}
\end{algorithm}

\begin{algorithm}[h]
\caption{ReduceSeg}
    \label{alg:algo_redseg}
    \vspace{-5pt}
    \begin{flushleft}
    \textbf{Input}: $q:[0,T]\rightarrow \mathbb{R}^n; [t_{s},t_{e})\subset [0,T]; k_1; l\le k_1;$ set $S$ containing breakpoints to be checked; $\beta > 0; B\ge \varepsilon' >0$\\
    \textbf{Output}: $\hat{S}$ after reducing the number of sub-intervals
    \end{flushleft}
    \vspace{-5pt}
    \begin{algorithmic}[1]
    \IF{$l > k_1$}
    \RETURN ${S}$
    \ENDIF
    \STATE $k\gets \log_2(|S|-1)$; 
    \STATE $S'\gets \{2r\frac{t_e-t_s}{2^k}+t_s: r=0,\dotsb,2^{k-1}\}$\;\;\; \COMMENT{try to reduce number of sub-intervals by half}
    \STATE $f_{[t_s,t_e]} \gets$ best approx. of $q$ on $[t_s,t_e]$ using break points in $S'$
    \STATE $\mathrm{err}\gets \dist_2(f_{[t_s,t_e]},q\cdot \mathbb{1}_{[t_s,t_e]} )+\frac{2^l}{\varepsilon'}Z$, $Z\sim\mathrm{Lap}(1)$
    \STATE $B\gets B-\frac{\varepsilon'}{2^l}$
    \IF{$\mathrm{err}+\frac{2^l}{\varepsilon'}(k_1\ln(2)+\ln(1/\beta))\le \frac{2^{k-1}emn}{2(e-1)B}$} 
    \STATE $S'_L\gets S'\cap [t_s,(t_s+t_e)/2]$,\; $S'_R\gets S'\cap [(t_s+t_e)/2,t_e]$
    \STATE $\hat{S}_L\gets \mathrm{ReduceSeg}(q, t_s, (t_s+t_e)/2, k_1, l+1, S'_L,\beta,{B,}\varepsilon'/2)$
    \STATE $\hat{S}_R\gets \mathrm{ReduceSeg}(q, (t_s+t_e)/2, t_e, k_1, l+1,S'_R,\beta,{B,}\varepsilon'/2)$
    \STATE $\hat{S} \gets \hat{S}_L \cup \hat{S}_R$
    \ELSE
    \STATE $\hat{S}\gets S$
    \ENDIF
    \RETURN $\hat{S}$
    \end{algorithmic}
\end{algorithm}
\begin{figure}[h]
    \vspace{5pt}
    \centering
\includegraphics[width=0.65\textwidth]{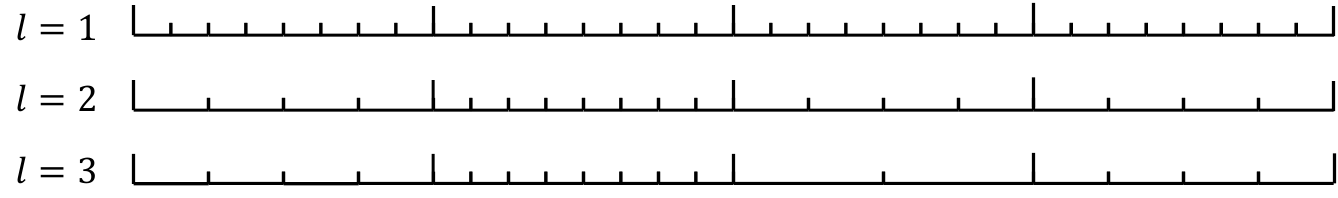}
        \caption{{Example illustration for sub-interval removal process in $\mathrm{ReduceSeg}$. $32$ sub-intervals at the start of level $1$; $20$ sub-intervals at level $2$; $18$ sub-intervals at level $3$, and the  process terminates here.}}
        \label{fig:redseg_rec}
\end{figure}
\vspace{-5pt}
\paragraph{$\mathrm{\textbf{ReduceSeg}}$.} After we find that $2^{\bar{k}}$ (evenly spaced) sub-intervals suffice to roughly balance the two sources of error, the goal of $\mathrm{ReduceSeg}$ is to remove sub-intervals that are not needed, since the curve might be flatter in some segment of $[0,T]$. 
To this end, we partition the sub-intervals into $4$ parts, where we then try to reduce the number of sub-intervals in each part. 
For each part, we perform at most $2^{4}$ checks over at most $k_1\le 4$ levels. Let $\varepsilon'$ be the allocated budget for the current interval $[t_s,t_e]$. Starting at level $l=1$, we use $\frac{2^l}{\varepsilon'}$ to check whether reducing the number of segments by half results in a decrease in error. If so, we remove half of the segments and proceed to check the children intervals $[t_s,(t_e+t_s)/2]$ and $[(t_e+t_s)/2,t_e]$ at level $l=2$, and so on, until we finish level $l=k_1$. See Fig.~\ref{fig:redseg_rec} for an example illustration of the sub-interval removal process corresponding to running $\mathrm{ReduceSeg}$ on all four parts together (lines $5$-$9$ of Algorithm~\ref{alg:algo1}).  
It is straightforward to check that $\mathrm{ReduceSeg}$ consumes at most $\varepsilon'$ of the privacy budget, since at each level $l$ there are at most $2^{l-1}$ calls, each uses $\frac{1}{2^l}$ of $\frac{\varepsilon'}{2^{l-1}}$, for a total of $\sum_{l=1}^{k_1} 2^{l-1}\frac{\varepsilon'}{2^{l-1}}\frac{1}{2^l}=(1-\frac{1}{2^{k_1}})\varepsilon'$.

\paragraph{Privacy.} The entire algorithm satisfies $\varepsilon$-GP, since $\varepsilon/4$ is used in the SVT call, at most $\varepsilon/4$ is used for $\mathrm{ReduceSeg}$, and the remaining is used for $\mathrm{Project}$-$\mathrm{and}$-$\mathrm{Privatize}$; the remaining budget depends on the actual number of times $\mathrm{ReduceSeg}$ is invoked, but the overall privacy guarantee is the same by a privacy filter argument (e.g. see \cite{feldman2021individual,liang2026adaptive}).
\paragraph{Utility.} In the SVT call, the potential overshoot in the target position is $O(\log\log(1/\beta))$, since the queries $g_j$ increase at a rate of $2\times$ after the balance point. This also means the number of sub-intervals $2^{\bar{k}}$ is potentially $O(\log(1/\beta))$ times larger than the size at the balance point, hence the need to reduce. $\mathrm{ReduceSeg}$ is capable of reducing the number by a factor of $2^{k_1}$, comparable to $O(\log(1/\beta))$ for any constant $\beta$, if every call results in a reduction. In any case, we show below that in the worst case, $\mathrm{ReduceSeg}$ does not increase the overall error.

Let $\mathrm{ErrBound}(f_{\tilde{a}},\varepsilon,\beta;q,1)):=\dt(f_a,q)+\frac{\eta(\beta;\dim(U_f),1)}{\varepsilon}$ for $U_f\ni f_a:=\pj_{U_f}(q)$, and $f_{\tilde{a}}$ obtained from privatizing $f_a$ as in Theorem~\ref{thm:gp_funcRm}. Note that $\mathrm{ErrBound}(f_{\tilde{a}};q,\varepsilon,\beta,1))$ provides a high probability bound on $\dt(f_{\tilde{a}},q)$, the $L^2$ error of $f_{\tilde{a}}$ as an approximation for $q$, since
\begin{align*}
\dt(f_{\tilde{a}},q) &\le \dt(f_a,q) + \dt(f_a,f_{\tilde{a}}) \\
&= \dt(f_a,q) + \frac{1}{\varepsilon}\|Z\|\le \mathrm{ErrBound}(f_{\tilde{a}},\varepsilon,\beta;q,1)
\end{align*}
where the latter holds with probability $1-\beta$ for $Z\sim\mathrm{SLap}(\dim(U_f))$. 

\begin{lemma}
\label{lm:util_redseg}
    Let $\hat{S}$ be the set of breakpoints resulting from running $\mathrm{ReduceSeg}(q,t_s,t_e, k_1, k, 1, S, \frac{\beta}{2}, B, \varepsilon')$. Let $\tilde{q}_{\hat{S}}$ be obtained from the $\varepsilon/2$-GP privatization of the best approximation of $q$ on the interval $[t_s, t_e]$ using breakpoints in $\hat{S}$, let $\tilde{q}_S$ be defined similarly for the set $S$. Then with probability $1-\frac{3}{2}\beta$, $\dt(\tilde{q}_{\hat{S}}, q)\le \mathrm{ErrBound}(\tilde{q}_S;q,\varepsilon/2,\beta,1))$.
\end{lemma}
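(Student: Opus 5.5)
We argue by induction on the recursion tree of $\mathrm{ReduceSeg}$. Conditioning on a good event for the Laplace noises, every accepted halving is shown never to increase $\mathrm{ErrBound}$. Each halving decision at a node compares a noisy approximation error against a threshold proportional to the noise bound for the current number of segments. The noise scale at level $l$ is $2^l/\varepsilon'$. Over at most $2^{k_1}$ calls, a union bound with the term $\frac{2^l}{\varepsilon'}(k_1\ln 2+\ln(1/\beta))$ shows that with probability at least $1-\beta/2$ every Laplace draw satisfies $|Z|\le k_1\ln 2+\ln(1/\beta)$. Hence on this event, whenever line 9 accepts a halving from $S$ to $S'$, the true approximation error $\dt(f_{[t_s,t_e]},q\mathbb{1}_{[t_s,t_e]})$ using $S'$ is at most $\frac{2^{k-1}emn}{2(e-1)B}$.

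Next we compare error budgets. With $|S|-1=2^k$ segments, $\dim$ for the piece is $2^kmn$. Using $\eta(\beta;d,1)=\frac{e}{e-1}(d+\ln\frac1\beta)$, halving the segments reduces the privatization term in $\mathrm{ErrBound}$ by $\frac{e}{e-1}\cdot\frac{2^{k-1}mn}{\varepsilon/2}$. The acceptance condition bounds the added approximation error by roughly the same quantity, once we identify the remaining budget $B$ with the final privatization budget used for the comparison. So for each accepted node, the approximation error of the coarser fit, minus the old fit's error (which is nonnegative), is at most the reduction in the noise term. Because approximation errors squared are additive over disjoint pieces and $\dt$ is subadditive, we can sum these local comparisons over all accepted nodes and recursively over children. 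This gives $\dt(\pj_{U_{\hat S}}(q),q)+\frac{\eta(\beta;\dim U_{\hat S},1)}{\varepsilon/2}\le \mathrm{ErrBound}(\tilde q_S;q,\varepsilon/2,\beta,1)$. In this step we use the triangle inequality: $\dt(\pj_{\hat S}q,q)\le \dt(\pj_S q,q)+\sum_{\text{accepted}}(\text{local coarse errors})$.

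Finally, by Lemma~\ref{lm:util_funcRm_slap}(2) applied to the privatization of the best approximation on $\hat S$, with probability at least $1-\beta$ we have $\dt(\tilde q_{\hat S},q)\le \dt(\pj_{\hat S}q,q)+\eta(\beta;\dim U_{\hat S},1)/(\varepsilon/2)$. A union bound with the Laplace event gives total failure probability $\frac32\beta$.

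The main obstacle is the bookkeeping in the second step. The threshold uses the remaining budget $B$ rather than $\varepsilon/2$, and the term $\ln(1/\beta)$ inside $\eta$ is not linear in the dimension. To handle this, I would first verify monotonicity in the budget: after all decrements $B\ge\varepsilon/2$, so the threshold is conservative. I would also drop the $\ln(1/\beta)$ term from both sides, since it appears once in each $\mathrm{ErrBound}$. The summation across accepted nodes then telescopes exactly on the linear part $\frac{e}{e-1}\dim/(\varepsilon/2)$.
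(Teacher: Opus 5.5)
Your argument is correct, but it is organized differently from the paper's.

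\textbf{The paper's route.} The paper does not compare partitions halving by halving. It assumes at least one reduction occurred. It then bounds each terminal piece of $\hat S$ by the \emph{full} coarse-fit error of its accepting parent, i.e.\ by twice the piece's final segment count times $\frac{emn}{2(e-1)B}$. Summing gives $\dt(\pj_{\hat S}(q),q)\le \frac{e}{e-1}\cdot\frac{(|\hat S|-1)mn}{\varepsilon/2}$: the approximation error is at most the linear part of $\hat S$'s own noise term. Finally, the counting inequality $2(|\hat S|-1)\le |S|-1$ absorbs everything into $\eta(\beta;(|S|-1)mn,1)/(\varepsilon/2)$, and $\dt(\pj_S(q),q)$ is discarded altogether.

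\textbf{Your route.} You instead use an exchange/potential argument. Each accepted halving's coarse error $c_P\le \frac{e}{e-1}\cdot\frac{2^{k_P-1}mn}{2B}$ is charged to the noise it saves, $\frac{e}{e-1}\cdot\frac{2^{k_P-1}mn}{\varepsilon/2}$. You combine this with $\dt(\pj_{\hat S}(q),q)^2=\sum_L e_L^2\le\sum_P c_P^2$, grouping leaves by their accepting parent; this is valid because the piecewise least-squares fit decouples across segments. Then $\sqrt{\sum_P c_P^2}\le\sum_P c_P$, and the linear part of $\eta$ telescopes.

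\textbf{Comparison.} Your route needs no case split and no segment-counting inequality, and it treats any recursion shape uniformly. It also avoids the paper's double counting of a parent's error across both of its children. It even shows that, since $B\ge\varepsilon/2$, each charged error is at most half of the noise saved. The paper's route buys a slightly stronger conclusion whenever a reduction happens: a bound independent of how well $S$ approximates $q$. It also gives the interpretation that $\hat S$ is itself balanced.

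\textbf{One small slip.} The two-sided event $|Z|\le k_1\ln 2+\ln(1/\beta)$ for all draws is only guaranteed by the union bound with probability $1-\beta$, not $1-\beta/2$. You only need the lower tail $Z\ge -\ln(2^{k_1}/\beta)$. That fails with probability $\beta/2^{k_1+1}$ per draw, which yields $1-\beta/2$ after the union bound, as in the paper.
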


\paragraph{Additional discussion on $\mathrm{ReduceSeg}$.} $\mathrm{ReduceSeg}$ is an optional step; its goal is remove redundant sub-intervals, which may happen when the curve is relatively flat on a sub-segment of the domain, as stated earlier. As such, it will not be useful for curves that do not stay flat for sufficiently long. Example curves where $\mathrm{ReduceSeg}$ is useful and not useful are provided in Fig.~\ref{fig:example_redseg}. By Lemma~\ref{lm:util_redseg} above, however, there is no harm in choosing to run $\mathrm{ReduceSeg}$ (minus the privacy budget it consumes), since it will not remove sub-intervals that help with the approximation. In particular, line $9$ in Algorithm~\ref{alg:algo_redseg} ensures that sub-intervals are only removed when it results in an overall decrease in the error.
\begin{figure}[htbp]
    \centering
    \begin{subfigure}[t]{.23\linewidth}
        \includegraphics[width=\textwidth]{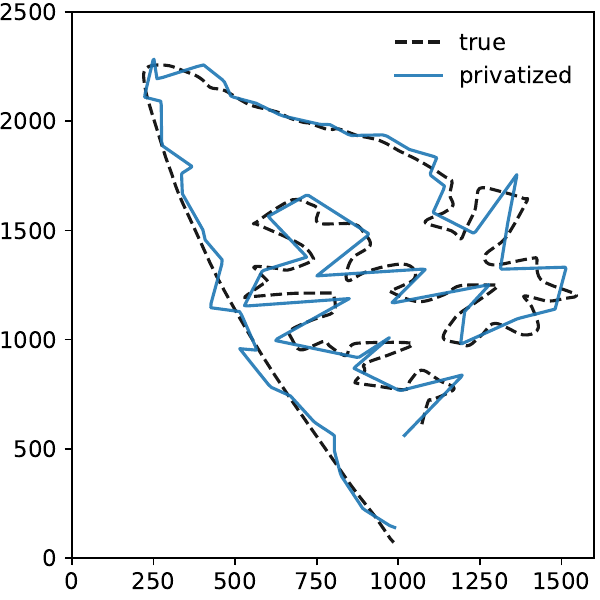}
        \subcaption{\scriptsize{$q_1$ without $\mathrm{ReduceSeg}$,\\${\;\;\;\;\;\;} N=64$}}
        \label{fig:example1_noredseg}
    \end{subfigure}
    \;
    \begin{subfigure}[t]{.23\linewidth}
    \includegraphics[width=\textwidth]{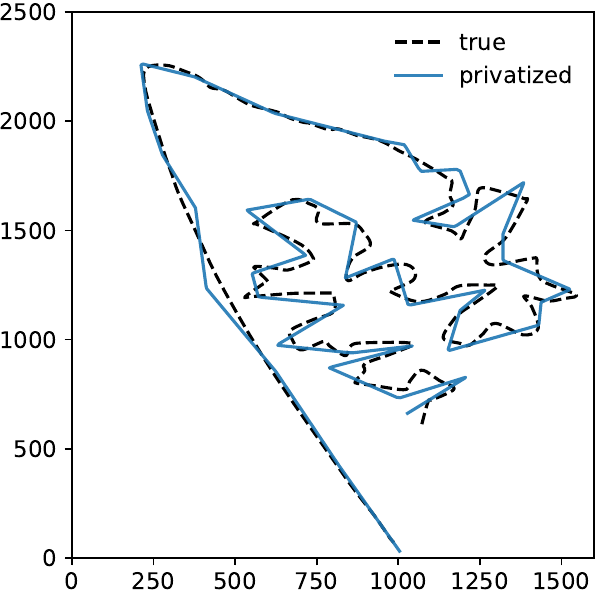}
        \subcaption{\scriptsize{$q_1$ with $\mathrm{ReduceSeg}$,\\ ${\;\;\;\;\;\;}N=42$}}
        \label{fig:example1_redseg}
    \end{subfigure}
    \;\;
    \begin{subfigure}[t]{.23\linewidth}
        \includegraphics[width=\textwidth]{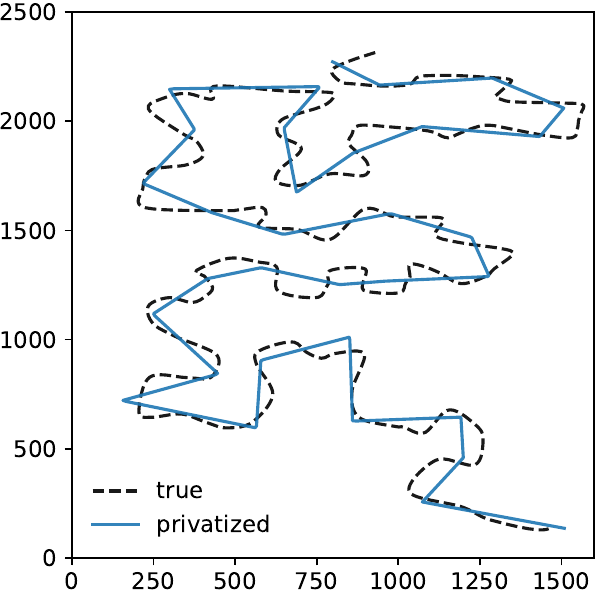}
        \subcaption{\scriptsize{$q_2$ without $\mathrm{ReduceSeg}$, \\ ${\;\;\;\;\;\;}N=32$}}
        \label{fig:example2_noredseg}
    \end{subfigure}
    \;
    \begin{subfigure}[t]{.23\linewidth}
        \includegraphics[width=\textwidth]{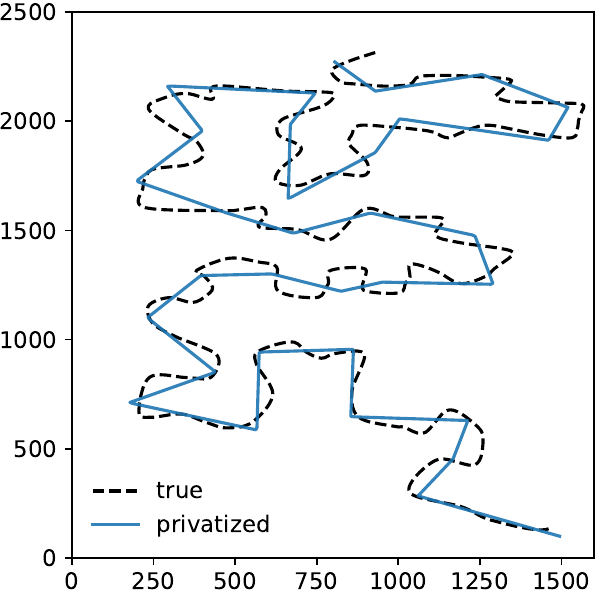}
        \subcaption{\scriptsize{$q_2$ with $\mathrm{ReduceSeg}$, \\ ${\;\;\;\;\;\;}N=32$}}
        \label{fig:example2_redseg}
    \end{subfigure}
    \caption{{Stylized trajectories $q_1,q_2$ privatized without and with $\mathrm{ReduceSeg}$, each using total budget $\varepsilon=1$. Here, $q_1$ benefits from $\mathrm{ReduceSeg}$ while $q_2$ does not. Sub-intervals were reduced on flatter segments of $q_1$, as shown in (\subref{fig:example1_redseg}), resulting in noise reduction.}}
    \label{fig:example_redseg}
\end{figure}

The acute reader may notice that some parameters in Algorithm~\ref{alg:algo_redseg} may be left as hyper-parameters that can be further tuned. Indeed, there might be better choices for the number of partitions on which to call $\mathrm{ReduceSeg}$, as well as the maximum depth $k_1$ of the recursion tree in $\mathrm{ReduceSeg}$. These two parameters might be further tuned, for example, using public data; together they determine the total number of times $\mathrm{ReduceSeg}$ is called. The privacy budget allocated for each such call decreases as the number of calls increases, which makes $\mathrm{ReduceSeg}$ less useful since it will return early at line $9$ without successful reduction (due to noisier estimation of the potential utility gain). On the other hand, a small number of calls to $\mathrm{ReduceSeg}$ directly limits how many sub-intervals can be removed. For simplicity, we set the parameters conservatively to be $4$, knowing that by Lemma~\ref{lm:util_redseg} even sub-optimal parameters do not harm the utility.

\subsection{Extending to unbounded intervals}
\label{sec:curvepriv_unbounded}
In this subsection, we briefly describe how we can extend the previous algorithm for curve privatization on $[0,T]$ to an unbounded interval. This is useful for when the function is defined on $\mathbb{R}$, and we wish to estimate it using functions integrable on finite intervals (e.g. polynomials). 

\paragraph{Using $L^2([0,T])$ for $(C(\mathbb{R}),\dt)$.}
Suppose $U\subseteq C(\mathbb{R})\cap L^2(\mathbb{R})$, so $q\in U$ is both continuous and square-integrable on the entire real line. 
In order to use Algorithm~\ref{alg:algo1} above for such a function, we first find an approximate $\bar{T}$ such that clipping the function to the interval $[-\bar{T},\bar{T}]$ (i.e. setting the function value to zero outside the interval) only introduces a small amount of error, say $O(1/\varepsilon)$. To find such a $\bar{T}$, define a function $g_{T}:q\mapsto \sqrt{\int_{-T}^T q(t)^2 dt} - \sqrt{\int_{-\infty}^{\infty} q(t)^2 dt}$. Note that $g_T$ is $2$-Lipschitz w.r.t. to the $\dist_{2}$ metric. Then we can use SVT on the sequence of queries $g_{T_1},g_{T_2},\dots$, where $T_j=2^j$ for $j\ge 1$, with threshold set to $-1/\varepsilon$. After obtaining $\bar{T}$, we run Algorithm~\ref{alg:algo1} on the function $q\cdot \mathbb{1}_{[-\bar{T},\bar{T}]}$, and return the $\mathbb{R}$-extension of the privatized curve as output.

\section{Experiments}
\label{sec:exp}
In this section, we provide experimental evaluation of our algorithms on several datasets. Here, we present the results for mechanisms in the GP model, deferring results of their CGP variants to Appendix~\ref{sec:app_cgpexp}. 
\paragraph{Baseline methods.} 
There are no prior work for function privatization in the local model, to the best of our knowledge. However, a method for releasing (a subset of) a collection of points under GP/CGP was provided in \cite{liang2023concentrated}; since the privatized points can be connected together to form a piecewise curve, we will use this as our main baseline. The method in \cite{liang2023concentrated} samples $k$ points for some fixed $k>1$, and privatizes each sampled point under GP/CGP using Lemma~\ref{lm:canon_mech}; by applying basic composition to $k$ points, the release of these points together satisfies GP w.r.t. the $\dist_{\infty}$ metric, where $\dist_{\infty}(q,q'):=\sup_{[0,T]} \|q(t)-q'(t)\|$. 
The privacy definitions using the two metrics $\dt$ and $\dist_{\infty}$ are incomparable in terms of strength and purpose, where $\dt$ considers the functional data as a whole, while $\dist_{\infty}$ aims to provide protection for each released point. Nevertheless, the aforementioned method from \cite{liang2023concentrated} serves as a good reference for understanding the problem at hand, and for lacking a better alternative.

In the privatized points, substantial fluctuations might occur between consecutive points due to the large noise injected, especially when the privacy budget is small. To mitigate such effects, we also apply smoothing in the baseline method with sliding windows (labeled as ``Baseline (smoothed)''). We evaluated various combinations of the number of samples $k$, where the points are evenly spaced, and the sliding window size $s$; in each plot, the best combination with the minimum error is shown for each parameter setting. 
\paragraph{Datasets.}
We conduct experiments on three datasets:
\begin{enumerate}
	\item \textbf{ECG records.}
	The PTB-XL Electrocardiography (ECG) dataset \cite{ptbxl2022,ptbxlarticle2020,physionet2000} contains 21799 clinical 12-lead ECGs of 10 seconds length and 100 Hz sample frequency.
	We select the Lead II data of the first 100 records, scale the samples to the unit of microvolt, and perform linear interpolation on them.
	\item \textbf{Taxi trajectory.}
	The CRAWDAD dataset \cite{crawdad2022} contains the mobility traces of approximately 500 cabs in San Francisco Bay Area. The GPS coordinates are converted to points in $\mathbb{R}^2$, with meters as the unit. We perform linear interpolation between consecutive points to convert the time series data into continuous trajectory functions from $[0,T]$ to $\mathbb{R}^2$, for a total of $304$ curves recorded over a $12$-hour window.
    \item \textbf{Synthetic Gaussian mixtures.}
	We randomly generate $50$ synthetic curves, where each instance is a linear combination of $1$ to $5$ random Gaussian functions.
\end{enumerate}

We report both the average $L^2$ error and the $L^2$-squared error (the mean of which is the functional analogue of MSE). The averages are computed over  all simulations of all functions in the respective dataset, where the errors are normalized by the $L^2$ norm of the original input curve.

\subsection{Evaluation on Project-and-Privatize}
Some types of data are known to be approximable by certain finite-dimensional spaces, either due to public available data or expert knowledge. ECG data are one such example, and they can be well approximated by a sufficient number of sinc functions. Here, we show that the simple Project-and-Privatize method (Algorithm~\ref{alg:algo_projpriv}) suffices for privatization. We use the normalized sinc function basis 
$\{\phi_j(t)=\text{sinc}(t-j)\}_{j\in[m]}$ 
defined in Section~\ref{sec:infbasis_finitefunc}, and incorporate the technique discussed there where the basis functions are treated as being defined on $\mathbb{R}$.  
Since a normal QRS length{\footnote{The period of ventricular depolarization, covering the Q, R and S waves.\label{fn:qrs}}} is between $0.08$-$0.10$ seconds (public information), we scale the time domain by 80 times (so $T=800$) to allow the central zero-crossings of a sinc function to approximately capture that length (alternatively, sinc functions with appropriately scaled non-integer zero-crossings can be used). We use $m=800$ sinc functions for our algorithm. The baseline methods use the number of samples $k\in\{100,200,800\}$, and smoothing parameter $s\in \{k/20, k/10\}$. Each method is applied on each record for $30$ simulations. 
\begin{figure}[htbp]
    \centering
    \begin{subfigure}[t]{0.3\linewidth}
        \includegraphics[width = \textwidth]{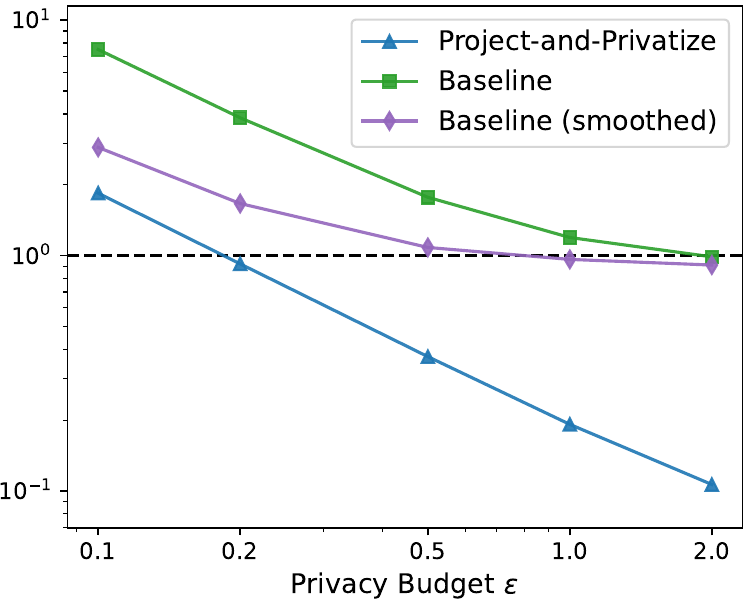}
        \subcaption{$L^2$ error}
        \label{fig:ECG-GP_avg}
    \end{subfigure}
    \;\;\;\;\;
    \begin{subfigure}[t]{0.3\linewidth}
        \includegraphics[width = \textwidth]{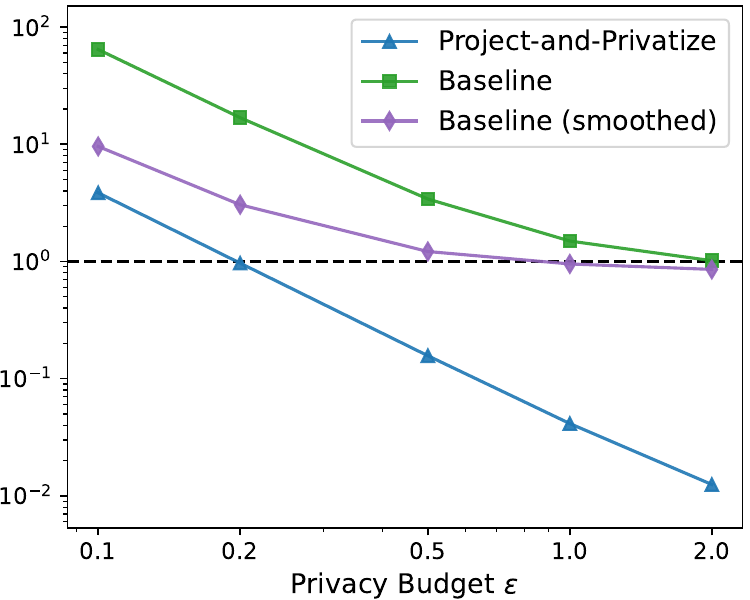} 
        \subcaption{$L^2$-squared error (MSE)}
        \label{fig:ECG-GP_mse}
    \end{subfigure}
    \vspace{-3pt}
    \caption{GP results on ECG dataset}
    \label{fig:ECG-GP}
    \vspace{-3pt}
\end{figure}

\begin{figure}[h]
    \centering
	\includegraphics[width = 0.65\linewidth]{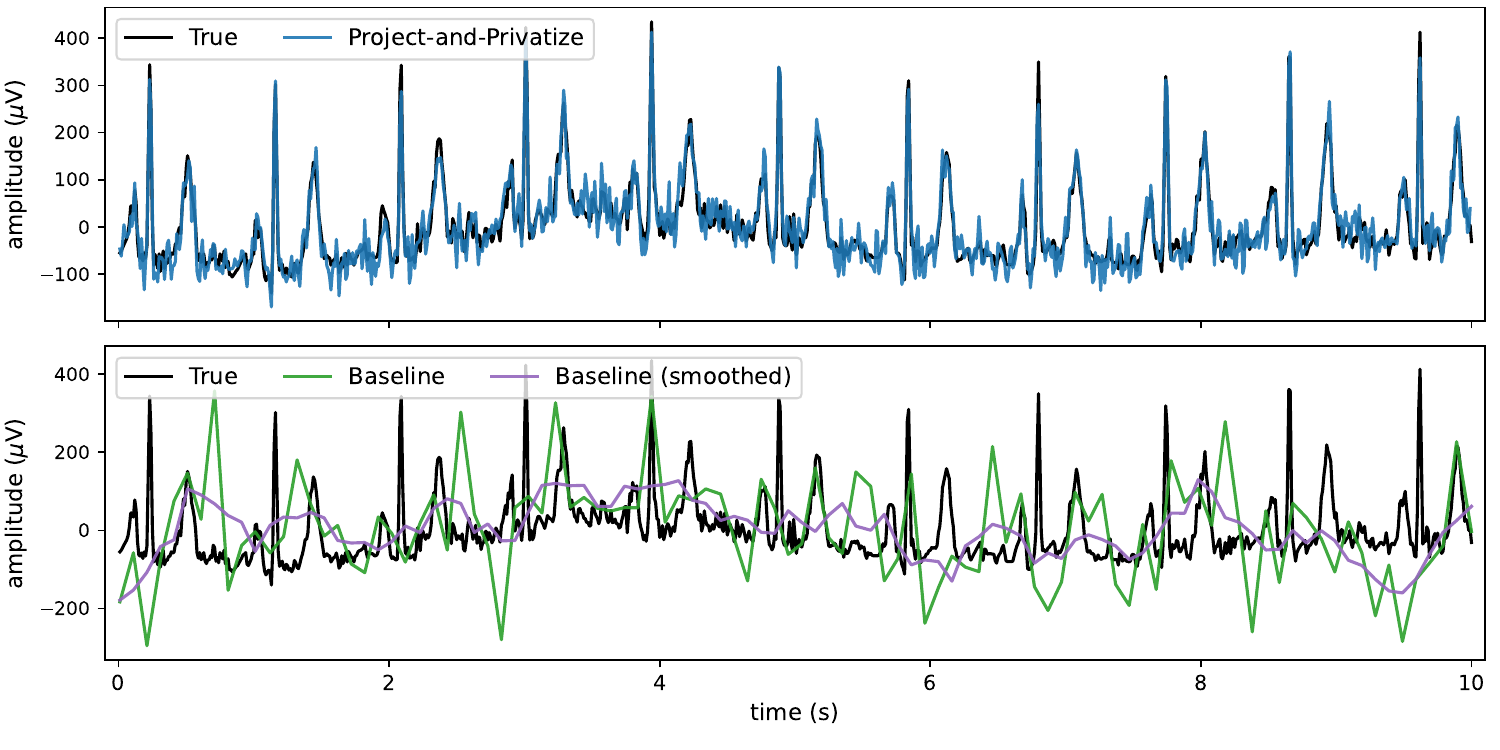}
    \vspace{-5pt}
	\caption{Example of privatizing a ECG record. $\varepsilon=1.0$}
	\label{fig:ECG-Eg}
    \vspace{-5pt}
\end{figure}

In Fig.~\ref{fig:ECG-GP}, we evaluate the performances of the methods at various privacy levels. 
We see significant utility improvement in our mechanisms over the baseline methods, reaching up to an order of magnitude for the average $L^2$ error, and up to 
$2$ orders orders of magnitude in the MSE, for some parameter settings. Also, the utility gap gets larger as the privacy budget increases.

We also present an example curve privatized by GP in Fig.~\ref{fig:ECG-Eg} to see how the methods perform in terms of capturing the signals in the data. The baseline method presented uses $k=100$ samples, with smoothing $s=5$. We see that our mechanism is able to capture all the large and medium-sized peaks, thus retaining the important signals in the data. In contrast, the baseline methods struggle to capture the positions of most of the peaks, as ECG records are rich in information which are difficult to be captured with sampling. Also, even though smoothing helps with the $L^2$ error, it also removes the important signals from the data in this case.

\subsection{Evaluation on privately selecting a basis.}
In this subsection, we evaluate the performance of our adaptive basis selection method (Algorithm~\ref{alg:algo_privsel}) against the baseline methods, on synthetic Gaussian mixtures over the interval $I=[0,100]$. Our mechanism selects a basis from the set of monomials of degree at most $32$, and the baseline methods use the number of samples $k\in\{10,20,50,100\}$ with smoothing parameter $s=k/10$. Each method is applied on each curve for $30$ simulations.
\begin{figure}[h]
    \centering
    \begin{subfigure}[t]{0.3\linewidth}
        \includegraphics[width = \textwidth]{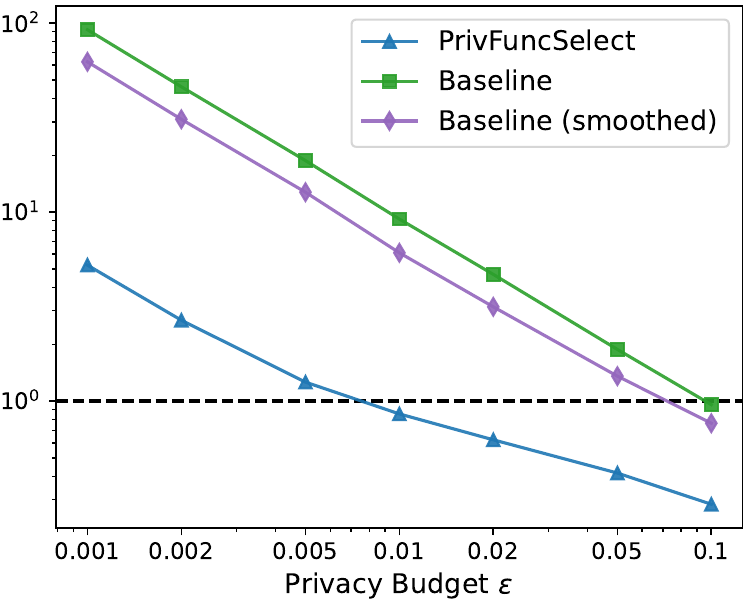} 
        \subcaption{$L^2$ error}
        \label{fig:Synth-Adapt-GP_avg}
    \end{subfigure}
    \;\;\;\;\;
    \begin{subfigure}[t]{0.3\linewidth}
        \includegraphics[width = \textwidth]{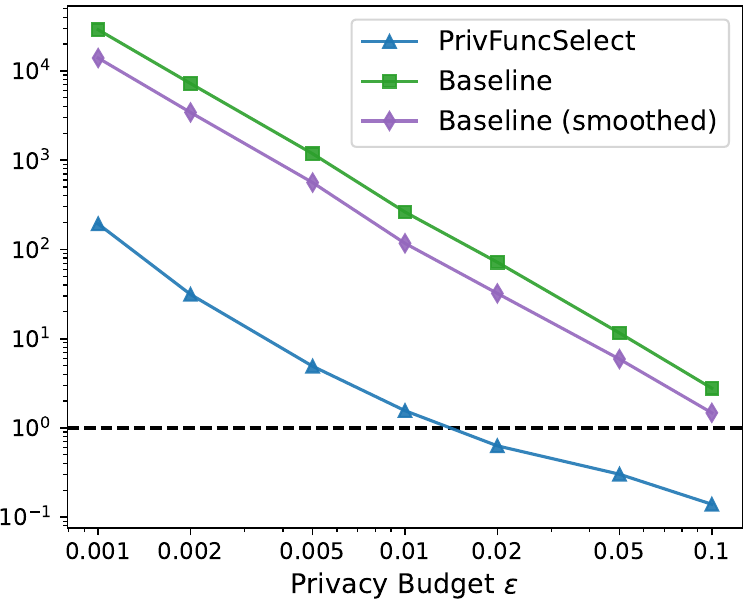}
        \subcaption{$L^2$-squared error (MSE)}
        \label{fig:Synth-Adapt-GP_mse}
    \end{subfigure}
    \vspace{-3pt}
	\caption{GP results on synthetic Gaussian curves}
	\label{fig:Synth-Adapt-GP}
\end{figure}

\vspace{-3pt}
We evaluate the methods at various privacy levels in Fig.~\ref{fig:Synth-Adapt-GP}. 
We see obvious improvement in our mechanisms over the baseline methods, in both the $L^2$ and MSE error, over all parameter settings. Note that Gaussian mixtures are very smooth, so it seems reasonable to use the baseline methods for privatization on these; however, they are still outperformed by our mechanisms.
\begin{figure}[h]
    \centering
	\includegraphics[width = .53\linewidth]{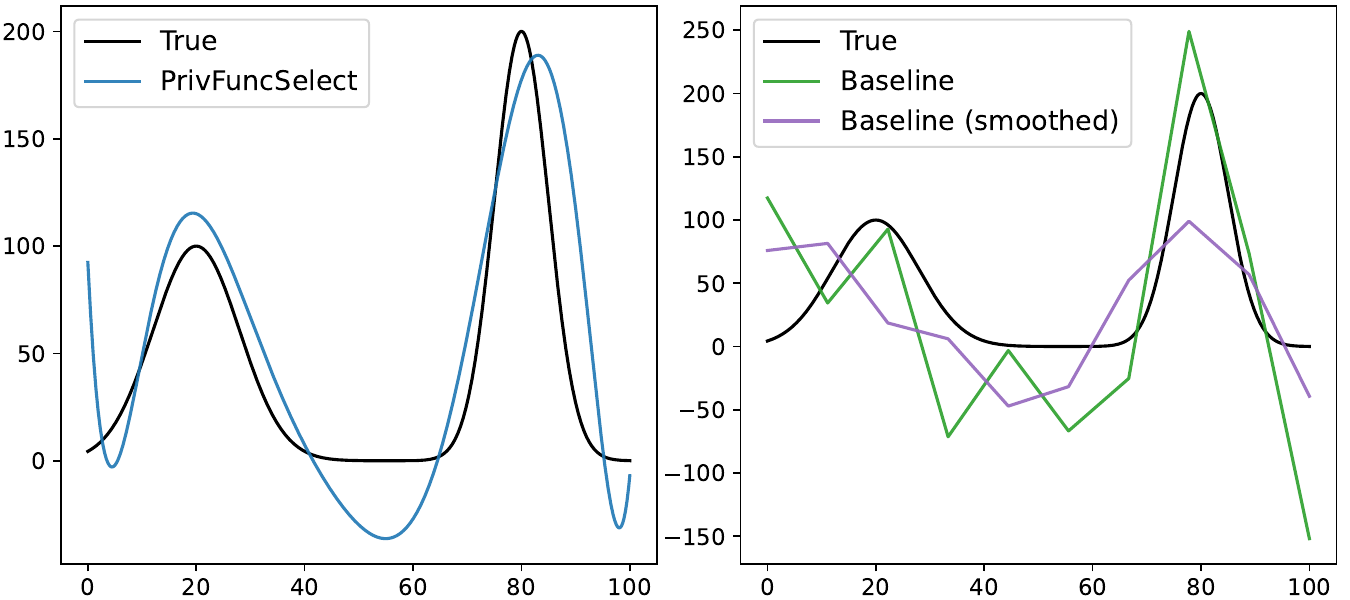}
    \vspace{-3pt}
	\caption{Example of privatizing a synthetic curve. $\varepsilon=0.1$.}
	\label{fig:Synth-Adapt-eg}
\end{figure}

Fig. \ref{fig:Synth-Adapt-eg} shows an example of privatizing a synthetic curve, which is a linear combination of two Gaussian functions. Our mechanism selects 8 monomial basis. The baseline methods uses 10 samples and smoothing parameter $s=3$.

\subsection{Evaluation on curve privatization}
Finally, we evaluate the performance of our curve privatization mechanism (Algorithm~\ref{alg:algo1}) against the baseline methods, on the taxi trajectory dataset and on synthetic Gaussian curves.  
Each method is applied on each curve for $30$ simulations.
For a trajectory curve with $N$ points, the number of samples $k\in\{N/10, N/5\}$ and smoothing parameter $s\in \{k/20, k/10\}$ are used in the baseline methods. We also apply the post-processing program in Section~\ref{sec:ensure_cont} to obtain continuous versions of our privatized curves, and they are labeled as ``PrivFuncSeg (continuous)''.

\paragraph{Results on trajectory dataset.} 
\begin{figure}[htbp]
    \centering
    \begin{subfigure}[t]{0.3\linewidth}
        \includegraphics[width = \textwidth]{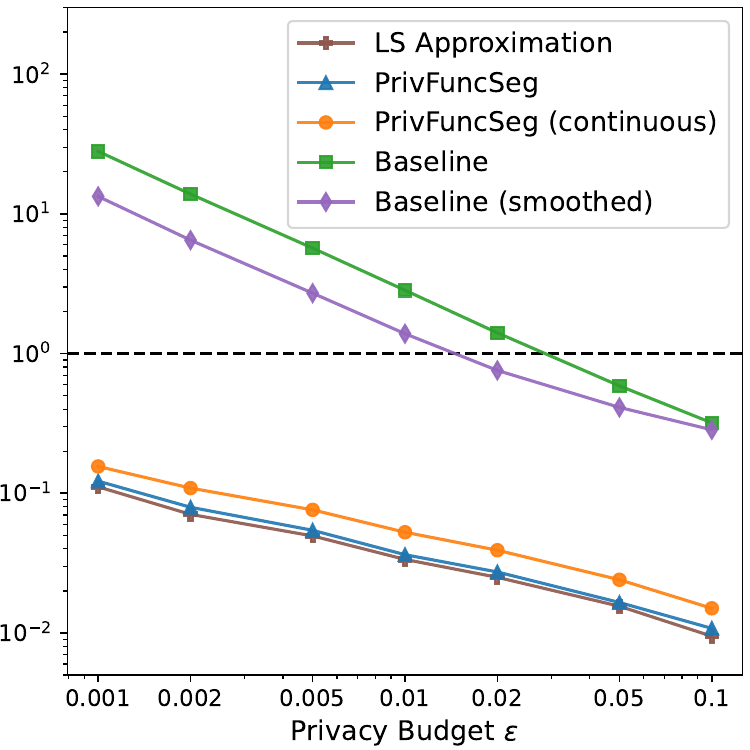}
        \subcaption{$L^2$ error}
        \label{fig:Taxi-GP_avg}
    \end{subfigure}
    \;\;\;\;\;
    \begin{subfigure}[t]{0.3\linewidth}
        \includegraphics[width = \textwidth]{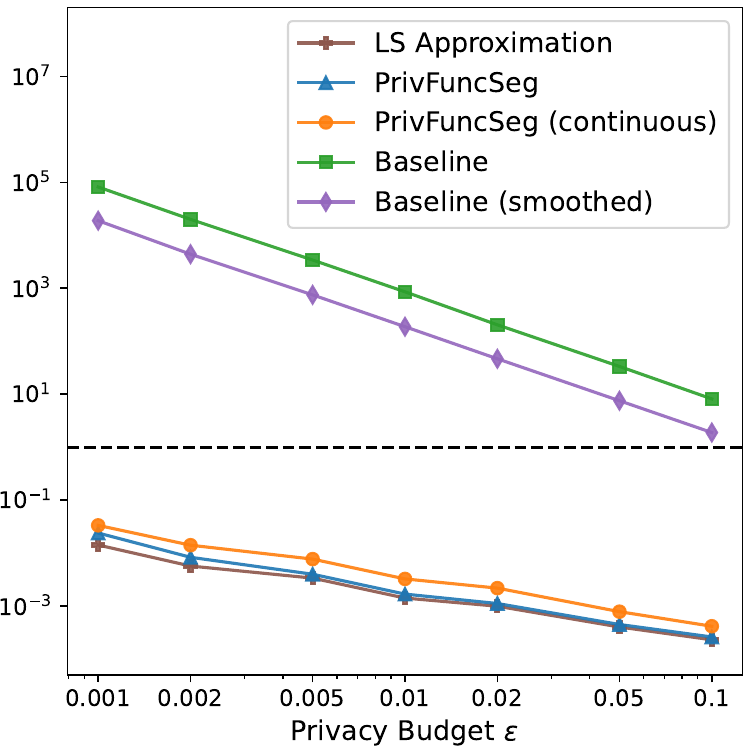}
        \subcaption{$L^2$-squared error (MSE)}
        \label{fig:Taxi-GP_mse}
    \end{subfigure}
    \vspace{-3pt}
	\caption{GP results on the taxi trajectory dataset}
	\label{fig:Taxi-GP}
\end{figure}
We apply our $\mathrm{PrivFuncSeg}$ algorithm with the simple linear basis given in Example~\ref{exp:2d_linear}. 
The results for various privacy levels are reported in Fig.~\ref{fig:Taxi-GP}. We see a significant advantage in our mechanisms over the baseline methods, where our mechanisms offer a $1$ to $2$ orders of magnitude improvement in the $L^2$ error, over all parameter settings. The improvement reaches up to $3$ orders of magnitude in the MSE. 
To better understand the advantage of our mechanisms, we also plotted the errors in the (partially private) least-squares {projection} (labeled as ``LS Approximation'') obtained from our procedure to find the best set of break points for piecewise privatization (i.e. 
$\mathrm{PrivFuncSeg}$ and $\mathrm{ReduceSeg}$ (Algorithms~\ref{alg:algo1} and ~\ref{alg:algo_redseg}) without the final privatization step). 
An error in $\mathrm{PrivFuncSeg}$ that is close to that in 
$\mathrm{LS\;Approximation}$ indicates that the mechanism is successful in balancing the error due to approximation and that due to privatization (noise).

Next, we present an example trajectory curve privatized by the methods in Fig.~\ref{fig:Taxi-Eg}, to examine how well the mechanisms are able to provide outputs with a general likeness to the original curve. The baseline method here uses $k=10$ samples, with smoothing parameter $s=3$.
\begin{figure}[h]
    \centering
	\includegraphics[width = .53\linewidth]{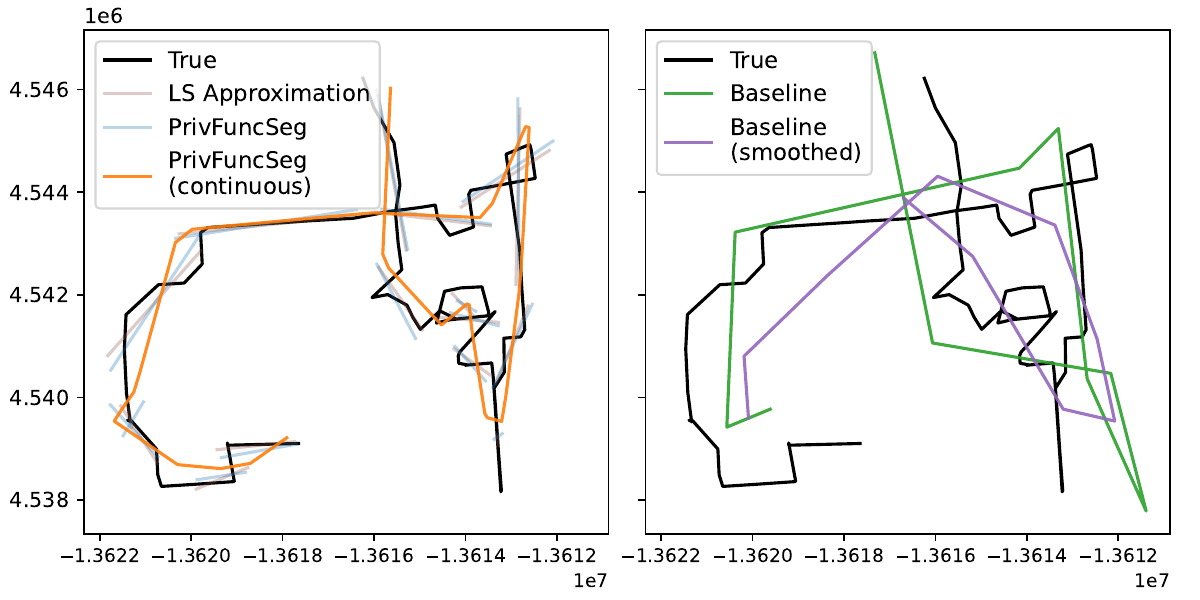}
    \vspace{-3pt}
	\caption{Example of privatizing a taxi trajectory. $\varepsilon=0.01$.}
	\label{fig:Taxi-Eg}
\end{figure}
We see that our mechanism offers a privatized trajectory that retains the overall shape and major directional turns, in both the continuous or piecewise continuous curve. In contrast, the curves provided by the baseline method (smoothed or not), are only able to retain some likeness in terms of shape. The locations of the sampled points and the noise added to the them can cause the baseline methods to miss meaningful segments of the trajectory, as demonstrated here.

\begin{figure}[h]
    \vspace{5pt}
    \centering
	\includegraphics[width = 0.85\linewidth]{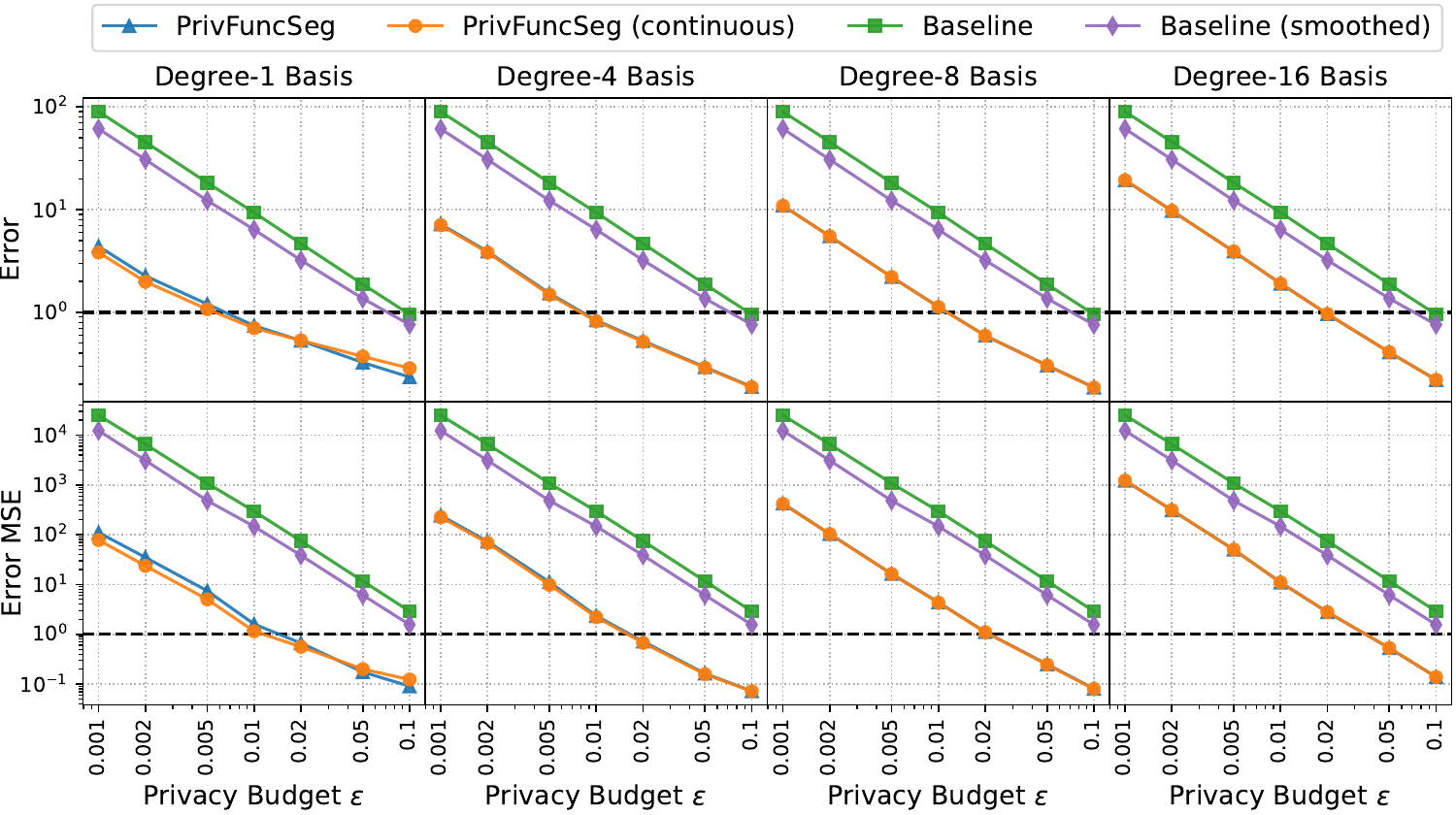}
    \vspace{-3pt}
	\caption{GP results on synthetic Gaussian curves}
	\label{fig:Synth-GP}
\end{figure}
\paragraph{Results on synthetic Gaussian curves.}
We apply $\mathrm{PrivFunSeg}$ to synthetic Gaussian curves with the monomial basis $\{t^{m-j}\}_{j\in [m]}$ for $m\in \{2,5,9,17\}$, corresponding to polynomials of degrees upto $1, 4, 8, 16$, respectively. 
The errors at various privacy levels are shown in Fig.~\ref{fig:Synth-GP}. Our mechanisms outperformed the baseline methods at all parameter settings, for all selected bases, in both the $L^2$ error and MSE. Note that for small privacy parameters, it is preferable to use a smaller basis, since the privatization error (due to noise) dominates. For a sufficiently large privacy budget, it becomes preferable to use a larger basis to improve the approximation error. 

\section{Concluding Remarks}
\label{sec:conclude}
We have provided a framework for function privatization in the local model of geo-privacy, offering algorithms tailored to different types of functions. 
Our experimental evaluation based on multiple datasets and various parameter settings demonstrates the advantage of our methods, especially for preserving the features and signals in the functions.

The privacy guarantee adopted in this work is based on similarity of functions measured by the $L^2$ metric. 
Depending on the application, in some cases another metric may be preferred, where the ideas developed in this framework might be useful. In particular, our framework readily generalizes to any metric induced by an inner product, as remarked in Section~\ref{sec:priv_func}. 
For trajectories, other popular choices for measuring distance include the Hausdrorff distance or
Fr\'echet distance, which are interesting directions for future work.

\section*{Acknowledgements}
This work is supported in part by HKRGC under grants 16205422, 16204223, and 16203924. We thank the anonymous reviewers of CCS '26 for their helpful feedback and valuable comments.

\bibliographystyle{alpha}
\bibliography{ccs_full}

\appendix

\section{Ethical Considerations and Open Science}
\label{app:open_science}
In this paper, we considered privatizing functions under the local model. Compared to the central model, the local model allows stronger privacy protection; i.e. it benefits individual data owners.
We have performed experimental evaluation of our methods using synthetic data, and public available datasets: taxi trajectories \cite{crawdad2022} and ECG records \cite{ptbxl2022,ptbxlarticle2020,physionet2000}. The public datasets had been previously sanitized.

Our approach considers an input function to be privatized as a whole, and over the entire domain of the input function. This has the benefit that it allows a general likeness of the input function to be retained, thus allowing further meaningful analytics on the features/signals contained in the function. I.e. researchers can benefit from such improved utility. 

The code to all experiments can be found at: \href{https://github.com/hkustDB/FuncPrivGP}{https://github.com/hkustDB/FuncPrivGP}, where we also provide 
detailed information on the datasets, implementations and running the experiments.

\section{Useful facts and inequalities}
\label{sec:app_facts}
\begin{fact}
\label{par:point_example}
Privatizing point set via Lemma~\ref{lm:canon_mech} is not GP w.r.t $\dt$.
\end{fact}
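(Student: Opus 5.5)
The statement says that privatizing a finite point set drawn from a curve, as in the baseline of \cite{liang2023concentrated}, is not GP with respect to $\dt$. To show this I will give a counterexample: two input curves $q,q'$ at small $L^2$ distance whose released points have distributions with a density ratio that cannot be bounded by $e^{\varepsilon\dt(q,q')}$.

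Fix the mechanism. Sample $k\ge 1$ evenly spaced times $t_1,\dots,t_k\in[0,T]$, and for each $i$ release $q(t_i)+\frac{k}{\varepsilon}Z_i$, where $Z_i\sim\mathrm{SLap}(n)$ are independent. The budget is split by basic composition, Lemma~\ref{lm:basic_comp}, as in Lemma~\ref{lm:canon_mech} with the identity map. The exact per-point scale does not matter for the argument; all I need is that it is a fixed constant independent of the input.

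Now build the pair. Take $q\equiv 0$ on $[0,T]$. For a height $h>0$ and a width $w>0$, let $q'$ be a continuous tent function of height $h$ supported on $[t_1-w,t_1+w]$, shifted slightly if $t_1$ is an endpoint. Then $q'(t_1)$ has norm $h$, $q'(t_i)=0$ for $i\ne1$ once $w$ is small enough, and
\[
\dt(q,q')^2=\int\|q'(t)\|^2\,dt = \tfrac{2}{3}h^2 w .
\]

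Compare the output densities at the point $y=(q'(t_1),0,\dots,0)$. The ratio of the density under $q'$ to the density under $q$ is $e^{\frac{\varepsilon}{k}h}$. GP would require this to be at most $e^{\varepsilon h\sqrt{2w/3}}$. Choosing $w<\frac{3}{2k^2}$ violates the inequality, and letting $w\to0$ makes the violation arbitrarily large. The ratio is continuous in $y$ and strictly exceeds the bound at this point, so it also exceeds the bound on a small ball around $y$. Taking $S$ to be that ball gives $\Pr[M(q')\in S]>e^{\varepsilon\dt(q,q')}\Pr[M(q)\in S]$, which is a genuine violation of Definition~\ref{def:gp} with $\delta=0$.

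The same construction works when $\delta>0$. Keep $w$ small, send $h\to\infty$, and choose $S$ as a half-space in the first coordinate. The probability gap between the two outputs then tends to $1$, while $\dt(q,q')$ can still be kept at most $1$ by choosing $w\approx h^{-2}$.

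The main step to get right is this one: converting the pointwise density-ratio violation into a violation on a measurable set $S$, and, when $\delta>0$, making the gap exceed $\delta$ while keeping $e^{\varepsilon\dt}$ bounded. Everything else is direct computation. The root cause is that point evaluation is not Lipschitz with respect to $\dt$: narrow spikes change individual function values a lot while changing the $L^2$ distance very little.
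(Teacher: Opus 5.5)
Your argument is correct, but it takes a different route from the paper's. The paper's proof is a single numerical check. It reuses the linear pair $f_a(t)=2t+0.5$, $f_{a'}(t)=0.5t+2$ on $[0,1]$ from Section~\ref{sec:priv_lin}. It releases the values at $t=0,0.5,1$ jointly with $3$-dimensional spherical Laplace noise at $\varepsilon=1$. At $z=(0.5,1,2)$ the density ratio is $e^{\sqrt{4.0625}-\sqrt{0.5}}\approx e^{1.31}$, which exceeds $e^{\dt(f_a,f_{a'})}=e^{1.5/\sqrt{3}}\approx e^{0.87}$.

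You instead compare a narrow tent spike at a sampling time against the zero curve, using the fact that point evaluation is not Lipschitz with respect to $\dt$ on $C([0,T])$.

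\emph{What the paper's route buys.} The example is fully explicit, and it shows the failure already occurs on the small class of linear functions, mirroring the failed naive attempt in Section~\ref{sec:priv_lin}. However, on a finite-dimensional class the evaluation map is Lipschitz in $\dt$. So that example only refutes the particular noise calibration, and inflating the noise would repair it.

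\emph{What your route buys.} The violation grows without bound as $w\to 0$, so no fixed noise scale yields $\varepsilon'$-GP for any $\varepsilon'$. Your argument also handles $\delta>0$. It makes explicit the step from a pointwise density-ratio violation to a violating measurable set $S$, which the paper leaves implicit. Your remark that the noise scale is immaterial also covers the paper's joint-noise variant: there the ratio at your $y$ is $e^{\varepsilon h}$ against a bound of $e^{\varepsilon h\sqrt{2w/3}}$, which still fails for small $w$. The price is that your example needs spiky inputs in the domain, so it says nothing about restricted classes such as linear functions.

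A minor point: no shifting is needed when $t_1$ is an endpoint, since truncating the tent to $[0,T]$ only halves $\dt(q,q')^2$.
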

\begin{proof}
We give an example to show that privatizing and releasing a fixed set of points does not suffice to guarantee GP w.r.t. $\dt$. Consider again the example in Section~\ref{sec:priv_lin}, where we have two linear functions $f_a$ and $f_a'$ on $[0,1]$, with $a=(2,0.5)$ and $a'=(0.5,2)$. Suppose we privatize the points at $t_1=0, t_2=0.5, t_3=1$, with $\varepsilon=1$, which for $f_a$ corresponds to a pdf at $z=(z_1,z_2,z_3)$ that is $\propto e^{-\sqrt{(z_1-f_a(t_1))^2+(z_2-f_a(t_2))^2+(z_3-f_a(3))^2}}$. Then the ratio of the respective pdf's at $z=(0.5,1.0,2.0)$ is
\begin{align*}
\frac{e^{-\sqrt{(z_1-f_a(t_1))^2+(z_2-f_a(t_2))^2+(z_3-f_a(3))^2}}}{e^{-\sqrt{(z_1-f_a(t_1))^2+(z_2-f_a(t_2))^2+(z_3-f_a(3))^2}}}&=e^{-\sqrt{0.5}+\sqrt{1.5^2+1.25^2+0.5^2}} >e^{1.5/\sqrt{3}}=e^{\dt(f_a,f_{a'})},
\end{align*}
thus violating the GP requirement.
\end{proof}

\begin{theorem} [Hilbert projection theorem]
        Let $C\subseteq H$ be any non-empty closed convex subset of a Hilbert space ${H}$ whose inner product induces norm $\|\cdot\|_{H}$. Then for every $y\in {H}$, there exists a unique $c(y)\in C$ such that $\|y-c(y)\|_{H}=\inf_{z\in C}\|y-z\|_H$. Moreover, if $C$ is also a subspace of $H$, then $c(y)\in C$ is such that $y-c(y)$ is orthogonal to $C$.
\end{theorem}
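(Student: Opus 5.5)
The plan is to prove the Hilbert projection theorem by the standard minimizing-sequence argument, with the parallelogram law and completeness of $H$ as the two essential ingredients.

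\emph{Existence.} Let $d:=\inf_{z\in C}\|y-z\|_H$, which is finite because $C$ is non-empty. Pick $z_n\in C$ with $\|y-z_n\|_H^2\le d^2+1/n$. Apply the parallelogram law to the vectors $y-z_n$ and $y-z_m$:
\[
\|z_n-z_m\|_H^2 = 2\|y-z_n\|_H^2+2\|y-z_m\|_H^2-4\left\|y-\tfrac{z_n+z_m}{2}\right\|_H^2 .
\]
By convexity $\frac{z_n+z_m}{2}\in C$, so the last norm is at least $d$. This gives $\|z_n-z_m\|_H^2\le 2/n+2/m\to 0$, so $(z_n)$ is Cauchy. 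By completeness it converges to some $c(y)\in H$. Since $C$ is closed, $c(y)\in C$, and by continuity of the norm $\|y-c(y)\|_H=d$. This step is the main one, because it is the only place where completeness, closedness and convexity are all used together; the rest is routine.

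\emph{Uniqueness.} Suppose $c_1,c_2\in C$ both attain $d$. The same parallelogram identity, with the midpoint $\frac{c_1+c_2}{2}\in C$, gives $\|c_1-c_2\|_H^2\le 2d^2+2d^2-4d^2=0$. Hence $c_1=c_2$.

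\emph{Orthogonality when $C$ is a subspace.} Fix $z\in C$ and $t\in\mathbb{R}$. Then $c(y)+tz\in C$, so the function
\[
\varphi(t):=\|y-c(y)-tz\|_H^2=d^2-2t\langle y-c(y),z\rangle+t^2\|z\|_H^2
\]
has a minimum at $t=0$. Setting $\varphi'(0)=0$ gives $\langle y-c(y),z\rangle=0$. Since $z\in C$ was arbitrary, $y-c(y)\perp C$. For complex scalars one repeats the argument with $t$ replaced by $it$ to also kill the imaginary part; in the paper's setting $H=L^2$ is real, so this extra step is not needed. In particular, for finite-dimensional $U_f$ (which is automatically closed) this justifies the least-squares projections $\pj_{U_f}$ used throughout the paper.
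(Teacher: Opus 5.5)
Your proof is correct and complete. The parallelogram identity is right, convexity gives $\|z_n-z_m\|_H^2\le 2/n+2/m$, closedness puts the limit in $C$, and the same identity gives uniqueness. The first-variation argument gives $y-c(y)\perp C$; since $L^2$ of $\mathbb{R}^n$-valued functions is a real space, no complex correction is needed.

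The paper does not prove this theorem at all. It quotes it as a standard fact, and the footnote in the proof of Lemma~\ref{lm:fidim_lipschitz} says it is not even needed, because Appendix~\ref{sec:app_ls} constructs the projections directly. That implicit route differs from yours. Take $U_f=\mathrm{span}\{\phi_j\}_{j\in[m]}$ with linearly independent $\phi_j$. Then $F(a)=\dt(q,f_a)^2$ is a quadratic in $a\in\mathbb{R}^m$ with positive definite Hessian $2\Sigma^{-1}$. It therefore has a unique minimizer, given by the normal equations $\Sigma^{-1}a^*=(\langle q,\phi_j\rangle)_{j\in[m]}$. These equations are literally the statement $\langle q-f_{a^*},\phi_l\rangle=0$ for all $l$.

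That computation needs no completeness and no limiting argument, and it yields the explicit formula $a^*=\Sigma(\langle q,\phi_j\rangle)_j$ that the algorithms actually use. However, it only covers finite-dimensional subspaces, which are the only case the paper ever applies the theorem to. Your argument covers arbitrary non-empty closed convex sets and infinite-dimensional closed subspaces, at the cost of using completeness of $H$. That is exactly the generality the theorem as stated claims, so for the statement as written your route is the appropriate one.
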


\begin{fact}
\label{clm:xt_exp_xe_inq}
    Fix any $x, t>0$. Then $\left(\frac{x}{t}\right)^t\le e^{x/e}$.
\end{fact}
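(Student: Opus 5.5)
Since both sides of $\left(\frac{x}{t}\right)^t\le e^{x/e}$ are positive, I will take logarithms and prove the equivalent inequality
\[
t\ln\frac{x}{t}\le \frac{x}{e}\qquad\text{for all } x,t>0 .
\]
The substitution $u:=x/t>0$ turns this into $t\ln u\le \frac{tu}{e}$. Dividing by $t>0$, it becomes the single-variable inequality
\[
\ln u\le \frac{u}{e},\qquad u>0 .
\]

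To prove it, I will study $h(u):=\frac{u}{e}-\ln u$ on $(0,\infty)$.

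\begin{itemize}
\item Its derivative is $h'(u)=\frac{1}{e}-\frac{1}{u}$. This is negative for $u<e$ and positive for $u>e$, so $h$ attains its global minimum at $u=e$.
\item At the minimum, $h(e)=1-1=0$, hence $h\ge 0$ everywhere.
\item Alternatively, one can apply the standard bound $\ln y\le y-1$ with $y=u/e$. This gives $\ln u-1\le \frac{u}{e}-1$, which is exactly the claim.
\end{itemize}

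Finally, multiplying back by $t$ and exponentiating gives the stated inequality. Equality holds precisely when $x=et$.

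There is no real obstacle here. The only point needing care is the reduction step: dividing by $t$ preserves the inequality direction because $t>0$, and the logarithm of $x/t$ is defined because $x/t>0$.
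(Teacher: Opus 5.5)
Your proof is correct and is essentially the paper's argument. The paper likewise reduces the claim to $t\ln(x/t)\le x/e$ and obtains it from $\ln z\le z-1$ (derived from $e^{z-1}\ge z$) at $z=x/(et)$, which is exactly your alternative bullet; your calculus minimization of $u/e-\ln u$ is just a direct check of the same one-variable inequality.
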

\begin{proof}
    The function $z\mapsto e^{z-1}-z$ is convex and has minimum value $0$, i.e. $e^{z-1}-z\ge 0 \iff e^z \ge ez \implies z\ge 1+\ln z $ for $z > 0$. Choose $z=\frac{x}{et}>0$, then 
    \begin{align*}
      \ln\left(\frac{x}{t}\right)=  1+\ln\left(\frac{x}{et}\right) &\le \frac{x}{et}\\
t \ln\left(\frac{x}{t}\right) &\le \frac{x}{e}\\
 \left(\frac{x}{t}\right)^{t} &\le e^{x/e}.
    \end{align*}
\end{proof}

\begin{lemma} [\cite{sandor2006certain}]
\label{lm:sandor_gamma}
    For any $x > 0$,
    \[
    \sqrt{x} \le \frac{\Gamma(x+1)}{\Gamma(x+\frac{1}{2})} \le \sqrt{x+\frac{1}{2}}.
    \]
\end{lemma}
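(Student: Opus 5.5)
The statement is Lemma~\ref{lm:sandor_gamma}: for all $x>0$, $\sqrt{x}\le \Gamma(x+1)/\Gamma(x+\tfrac12)\le \sqrt{x+\tfrac12}$. I would prove it through log-convexity of $\Gamma$, which follows from H\"older's inequality applied to the integral representation. Set $R(x):=\Gamma(x+1)/\Gamma(x+\frac12)$. The functional equation $\Gamma(y+1)=y\Gamma(y)$ gives the exact identity
\[
R(x)\,R\!\left(x-\tfrac12\right)=\frac{\Gamma(x+1)}{\Gamma(x+\frac12)}\cdot\frac{\Gamma(x+\frac12)}{\Gamma(x)}=x .
\]
Both bounds come from combining this identity with one convexity inequality, used once at $x$ and once at a shifted argument.

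\emph{Lower bound.} Since $\log\Gamma$ is convex, the midpoint inequality at $x+\frac12=\frac12\bigl(x\bigr)+\frac12\bigl(x+1\bigr)$ gives $\Gamma(x+\frac12)^2\le\Gamma(x)\Gamma(x+1)$. Equivalently, $R(x-\frac12)\le R(x)$. Substituting into the identity yields $x=R(x)R(x-\frac12)\le R(x)^2$, so $R(x)\ge\sqrt{x}$ for every $x>0$.

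\emph{Upper bound.} Apply the same midpoint inequality at $x+1=\frac12\bigl(x+\frac12\bigr)+\frac12\bigl(x+\frac32\bigr)$. This gives $\Gamma(x+1)^2\le\Gamma(x+\frac12)\Gamma(x+\frac32)$, i.e.\ $R(x)\le R(x+\frac12)$. The identity at $x+\frac12$ reads $R(x+\frac12)R(x)=x+\frac12$, hence $R(x)^2\le R(x)R(x+\frac12)=x+\frac12$, and so $R(x)\le\sqrt{x+\frac12}$.

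The only non-routine ingredient is the log-convexity inequality $\Gamma(\frac{a+b}{2})^2\le\Gamma(a)\Gamma(b)$ for $a,b>0$. I would derive it directly by applying Cauchy--Schwarz to
\[
\int_0^\infty t^{\frac{a-1}{2}}e^{-t/2}\cdot t^{\frac{b-1}{2}}e^{-t/2}\,dt .
\]
Once that is in place, the rest is bookkeeping with the functional equation. The main thing to check is that every argument passed to $\Gamma$ stays positive, namely $x$, $x+\frac12$, $x+1$, $x+\frac32$, which holds for $x>0$.
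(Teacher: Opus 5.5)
Your argument is correct. The identity $R(x)R(x-\tfrac12)=x$ holds, and so do the two midpoint instances of log-convexity, $\Gamma(x+\tfrac12)^2\le\Gamma(x)\Gamma(x+1)$ and $\Gamma(x+1)^2\le\Gamma(x+\tfrac12)\Gamma(x+\tfrac32)$. The Cauchy--Schwarz derivation of the general midpoint inequality also checks out, and positivity of $\Gamma$ on $(0,\infty)$ justifies every division.

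The paper does not prove this lemma; it simply imports it from \cite{sandor2006certain}. The comparison is therefore citation versus self-contained proof. The citation buys brevity. Your route is essentially the classical log-convexity argument behind Wendel's inequality at shift $s=\tfrac12$, and it derives the bounds from nothing more than $\Gamma(y+1)=y\Gamma(y)$ and Cauchy--Schwarz.

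It also gives slightly more than stated. Your upper-bound half never evaluates $\Gamma(x)$, so it holds for every $x>-\tfrac12$. This matters a little for the paper: Corollary~\ref{cor:expZ_slap_gauss} applies the lemma at $x=(m-1)/2$. For $m=1$, i.e.\ $x=0$, which is outside the lemma's stated range, your argument still yields $\sqrt{2}\,\Gamma(1)/\Gamma(\tfrac12)\le 1=\sqrt{m}$, the bound used in Lemma~\ref{lm:util_funcRm_gauss}.
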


\begin{lemma} [\cite{natalini2000inequalities}]
\label{lm:nat_upgamma_ineq}
 For any $c, t> 1$ and $x$ where $x> \frac{c}{c-1}(x-1)$,
 \[\Gamma(t,x)< c x^{t-1}e^{-x}.\]
\end{lemma}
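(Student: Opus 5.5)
We read the hypothesis as $x>\frac{c}{c-1}(t-1)$; as printed, ``$x>\frac{c}{c-1}(x-1)$'' is a typo and does not involve $t$. The plan is a direct computation from the integral definition $\Gamma(t,x)=\int_x^\infty s^{t-1}e^{-s}\,ds$. We pull out the value of the integrand at the left endpoint and bound what remains by a geometric-type integral.

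\textbf{Step 1: factor out the endpoint value.} Substitute $s=x+u$ with $u\ge 0$. This gives
\[
\Gamma(t,x)=x^{t-1}e^{-x}\int_0^\infty \Bigl(1+\frac{u}{x}\Bigr)^{t-1}e^{-u}\,du .
\]
The claim is therefore equivalent to showing that the remaining integral is strictly less than $c$.

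\textbf{Step 2: bound the polynomial factor by an exponential.} Use $1+v\le e^{v}$, which is strict for $v>0$. Since $t-1>0$, this gives $\bigl(1+\frac{u}{x}\bigr)^{t-1}<e^{(t-1)u/x}$ for every $u>0$. Hence
\[
\int_0^\infty \Bigl(1+\frac{u}{x}\Bigr)^{t-1}e^{-u}\,du<\int_0^\infty e^{-u\left(1-\frac{t-1}{x}\right)}\,du=\frac{1}{1-\frac{t-1}{x}} .
\]
The equality requires $1-\frac{t-1}{x}>0$ for convergence. The strict inequality holds because the integrands differ on a set of positive measure.

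\textbf{Step 3: use the hypothesis.} The condition $x>\frac{c}{c-1}(t-1)$ rearranges to $\frac{t-1}{x}<\frac{c-1}{c}$. Equivalently, $1-\frac{t-1}{x}>\frac1c>0$. This guarantees convergence in Step 2 and shows that the bound there is less than $c$. Combining with Step 1 gives $\Gamma(t,x)<c\,x^{t-1}e^{-x}$.

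There is no real obstacle in this argument. The only points needing care are keeping the inequality strict and checking that the hypothesis implies $x>t-1$, which is needed for convergence and follows because $\frac{c}{c-1}>1$.
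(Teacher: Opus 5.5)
Your proof is correct. Your reading of the hypothesis as $x>\frac{c}{c-1}(t-1)$ is also right: as printed, the condition reduces to $x<c$, and the paper's own use of the lemma in Lemma~\ref{lm:slap_highprob}, with $t=m$ and $c=e$, confirms the intended factor $(t-1)$.

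The paper gives no proof of this lemma; it simply imports it from Natalini--Palumbo. So there is no internal argument to match, and your argument is a complete, self-contained replacement: substitute $s=x+u$, then use $(1+u/x)^{t-1}<e^{(t-1)u/x}$.

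Your route also gives something extra: the single, $c$-free bound $\Gamma(t,x)<\frac{x}{x-(t-1)}\,x^{t-1}e^{-x}$ for every $x>t-1$. This shows that the threshold $\frac{c}{c-1}(t-1)$ is exactly the condition $\frac{x}{x-(t-1)}<c$.

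The other standard route is monotonicity. Let $h(x):=c\,x^{t-1}e^{-x}-\Gamma(t,x)$. Then $h'(x)=x^{t-2}e^{-x}\bigl(c(t-1)-(c-1)x\bigr)$, which is negative precisely for $x>\frac{c}{c-1}(t-1)$. Since also $h(x)\to 0$ as $x\to\infty$, it follows that $h>0$ there. That argument needs no integral estimate and explains the threshold as the point where $h$ starts decreasing. However, it proves the inequality one $c$ at a time rather than delivering your sharper bound directly.
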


\begin{lemma} [Generalized Gamma Distribution \cite{stacy1962generalization, stacy1965parameter}] 
\label{lm:gen_gamma}
Let $\mathcal{G}(\lambda,d,p)$ denote the generalized gamma distribution with parameters $\lambda, d, p > 0$, whose pdf is given by $g(r)=\frac{p}{\lambda^d \Gamma(d/p)}r^{d-1}e^{-(r/\lambda)^p}$. Then for $G\sim \mathcal{G}(\lambda,d,p)$, 
\[
\Pr[G\le r]=\frac{\gamma(d/p,(r/\lambda)^p)}{\Gamma(d/p)} \;\;\text{and} \;\;\;\mathbb{E}[G^k]=\lambda^k\frac{\Gamma(d/p+k/p)}{\Gamma(d/p)} , k\ge 1.
\]
\end{lemma}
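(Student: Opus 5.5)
The statement to prove is Lemma~\ref{lm:gen_gamma}: for $G\sim\mathcal{G}(\lambda,d,p)$ we want the CDF and the $k$-th moment. Both follow from one substitution, $u=(r/\lambda)^p$, which turns the pdf into a standard gamma density. So the plan is to verify first that $g$ integrates to one, and then reuse the same change of variables for the two claimed formulas.

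\emph{Change of variables.} Put $u=(r/\lambda)^p$, so that $r=\lambda u^{1/p}$ and $dr=\frac{\lambda}{p}u^{1/p-1}\,du$. Then
\[
g(r)\,dr=\frac{p}{\lambda^d\Gamma(d/p)}\,\lambda^{d-1}u^{(d-1)/p}\,e^{-u}\,\frac{\lambda}{p}\,u^{1/p-1}\,du=\frac{1}{\Gamma(d/p)}\,u^{d/p-1}e^{-u}\,du .
\]
This is the standard gamma density with shape $d/p$. Integrating over $u\in(0,\infty)$ gives $1$, so $g$ is a valid pdf.

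\emph{CDF.} The map $r\mapsto(r/\lambda)^p$ is increasing on $(0,\infty)$, so
\[
\Pr[G\le r]=\frac{1}{\Gamma(d/p)}\int_0^{(r/\lambda)^p}u^{d/p-1}e^{-u}\,du=\frac{\gamma(d/p,(r/\lambda)^p)}{\Gamma(d/p)},
\]
where $\gamma$ is the lower incomplete gamma function.

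\emph{Moments.} Write $G^k=\lambda^k u^{k/p}$ and use the same substitution:
\[
\mathbb{E}[G^k]=\frac{\lambda^k}{\Gamma(d/p)}\int_0^\infty u^{(d+k)/p-1}e^{-u}\,du=\lambda^k\,\frac{\Gamma(d/p+k/p)}{\Gamma(d/p)}.
\]
The integral converges because $(d+k)/p>0$.

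The only step needing care is the exponent bookkeeping in the Jacobian: the factors $\lambda^{d-1}\cdot\lambda=\lambda^d$ cancel the normalizing constant, and $u^{(d-1)/p}\cdot u^{1/p-1}=u^{d/p-1}$. Once these are checked, the rest is immediate.
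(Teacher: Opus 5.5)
Your proof is correct. The substitution $u=(r/\lambda)^p$ turns $g$ into the standard gamma density with shape $d/p$, your Jacobian and exponent bookkeeping checks out, and the CDF and moment formulas then follow immediately. The paper gives no proof of this lemma and only cites Stacy, so your argument is the standard derivation behind that citation; it in fact yields the moment formula for every real $k>-d$, and the CDF formula holds for $r\ge 0$.
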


Let $Z=[z_1,\dotsb,z_m]$ be a random vector. If $Z$ is drawn from a spherical Laplace distribution, then $\|Z\|\sim \mathcal{G}(1,m,1)$; if $Z\sim\mathcal{N}(0,I_{m\times m})$, then $\|Z\|\sim \mathcal{G}(\sqrt{2},m,2)$ (see \cite{liang2023concentrated}, Appendix $A$). Thus, using Lemmas \ref{lm:sandor_gamma} and \ref{lm:gen_gamma} above, we have:
\begin{corollary}
\label{cor:expZ_slap_gauss}
    Let $Z=[z_1,\dotsb,z_m]$ be a random vector. Then
    \begin{enumerate}
        \item $\mathbb{E}[\|Z\|] = \frac{\Gamma(m+1)}{\Gamma(m)}=m$ and $\mathbb{E}[\|Z\|^2]=\frac{\Gamma(m+2)}{\Gamma(m)}=m(m+1)$, if $Z\sim \mathrm{SLap}(m)$;
        \item $\mathbb{E}[\|Z\|]=\sqrt{2}\frac{\Gamma(m/2+1/2)}{\Gamma(m/2)}\in \left[\sqrt{m-1},\sqrt{m}\right]$ and $\mathbb{E}[\|Z\|^2]=2\frac{\Gamma(m/2+1)}{\Gamma(m/2)}=m$, if $Z\sim\mathcal{N}(0,I_{m\times m})$. 
    \end{enumerate}
\end{corollary}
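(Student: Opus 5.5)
The statement is Corollary~\ref{cor:expZ_slap_gauss}, and I would get it by plugging specific parameters into the moment formula of Lemma~\ref{lm:gen_gamma}. The only inputs are the two distributional facts recalled just before the corollary: if $Z\sim\mathrm{SLap}(m)$ then $\|Z\|\sim\mathcal{G}(1,m,1)$, and if $Z\sim\mathcal{N}(0,I_{m\times m})$ then $\|Z\|\sim\mathcal{G}(\sqrt{2},m,2)$. Given these, each moment comes from $\mathbb{E}[G^k]=\lambda^k\Gamma(d/p+k/p)/\Gamma(d/p)$ with $k\in\{1,2\}$. The Gaussian first moment additionally needs a two-sided bound, for which I would use Lemma~\ref{lm:sandor_gamma}.

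For the spherical Laplace case, set $\lambda=1$, $d=m$, $p=1$. Then $\mathbb{E}\|Z\|=\Gamma(m+1)/\Gamma(m)=m$ and $\mathbb{E}\|Z\|^2=\Gamma(m+2)/\Gamma(m)=(m+1)m$, using $\Gamma(x+1)=x\Gamma(x)$.

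For the Gaussian case, set $\lambda=\sqrt2$, $d=m$, $p=2$. The second moment is $2\Gamma(m/2+1)/\Gamma(m/2)=2\cdot(m/2)=m$, which also matches $\sum_i\mathbb{E}z_i^2=m$ as a sanity check. The first moment is $\sqrt2\,\Gamma(m/2+1/2)/\Gamma(m/2)$.

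To bound this first moment, apply Lemma~\ref{lm:sandor_gamma} with $x=m/2-1/2$ (valid when $m>1$). This gives
\[
\sqrt{\tfrac{m-1}{2}}\le \frac{\Gamma(m/2+1/2)}{\Gamma(m/2)}\le \sqrt{\tfrac m2},
\]
and multiplying by $\sqrt2$ yields the interval $[\sqrt{m-1},\sqrt m]$. For $m=1$ the value is $\sqrt{2/\pi}\in[0,1]$, which can be checked directly. The upper bound can also be obtained independently via Jensen: $\mathbb{E}\|Z\|\le\sqrt{\mathbb{E}\|Z\|^2}=\sqrt m$.

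The main obstacle is this last step: choosing the shift $x=m/2-1/2$ in Lemma~\ref{lm:sandor_gamma} so that the ratio $\Gamma(m/2+1/2)/\Gamma(m/2)$ matches the form $\Gamma(x+1)/\Gamma(x+1/2)$, and handling the boundary case $m=1$. Everything else is direct substitution.
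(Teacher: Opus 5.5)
Your proposal is correct and matches the paper's argument. The paper likewise substitutes the generalized-gamma parameters $(1,m,1)$ and $(\sqrt{2},m,2)$ into the moment formula of Lemma~\ref{lm:gen_gamma} and bounds the Gaussian first moment via Lemma~\ref{lm:sandor_gamma}; your choice $x=(m-1)/2$ and your separate check of $m=1$, where that lemma's hypothesis $x>0$ fails, fill in details the paper leaves implicit.
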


We can derive high probability bounds for the magnitude $\|Z\|$:
\begin{lemma}
\label{lm:slap_highprob}
    Fix $1 < m\in\mathbb{N}$, let $R\sim \mathcal{G}(1,m,1)$. Then for any $1>\beta>0$, $R\le \frac{e}{e-1}(m+\ln(1/\beta))$ with probability at least $1-\beta$.
\end{lemma}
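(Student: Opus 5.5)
We need $\Pr[R > r]\le\beta$ for $r=\frac{e}{e-1}(m+\ln(1/\beta))$, where $R\sim\mathcal{G}(1,m,1)$ is the ordinary Gamma$(m,1)$ distribution. The plan is a Chernoff (moment generating function) bound: for $0<\lambda<1$, Markov's inequality applied to $e^{\lambda R}$ gives
\[
\Pr[R>r]\le \mathbb{E}[e^{\lambda R}]\,e^{-\lambda r} = (1-\lambda)^{-m}e^{-\lambda r}.
\]
Here the moment generating function follows directly from the pdf $g(r)=r^{m-1}e^{-r}/\Gamma(m)$ of Lemma~\ref{lm:gen_gamma}.

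The main choice is $\lambda$, made so as to match the constant $\frac{e}{e-1}$. Take $\lambda = 1-1/e$, so that $(1-\lambda)^{-m}=e^{m}$ and $\lambda r = \frac{e-1}{e}\cdot\frac{e}{e-1}(m+\ln(1/\beta)) = m+\ln(1/\beta)$. The bound then becomes $e^{m}e^{-m-\ln(1/\beta)}=\beta$, which is exactly what is needed. This choice of $\lambda$ is clearly what produces the $\frac{e}{e-1}$ factor in the statement. The hypothesis $m>1$ is not needed for this argument; it presumably reflects an alternative route via Lemma~\ref{lm:nat_upgamma_ineq} on the upper incomplete gamma function $\Gamma(m,r)$.

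The alternative route, in case one wants to use the paper's tools, is as follows. Write $\Pr[R>r]=\Gamma(m,r)/\Gamma(m)$ and apply Lemma~\ref{lm:nat_upgamma_ineq} to get $\Gamma(m,r)<c\,r^{m-1}e^{-r}$. Then control $r^{m-1}/\Gamma(m)$ using Fact~\ref{clm:xt_exp_xe_inq}, which gives $(x/t)^t\le e^{x/e}$ and hence $r^{m}\lesssim m^m e^{r/e}$. The goal is to reach a bound of the form $e^{m}e^{-(1-1/e)r}$, up to constants. This route is messier: tracking the constant $c$ and the factor $m^m/m!$ through Stirling is the main obstacle. For that reason I would present the Chernoff argument, whose only nontrivial step is computing the moment generating function $\int_0^\infty e^{\lambda r}r^{m-1}e^{-r}\,dr/\Gamma(m)=(1-\lambda)^{-m}$. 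This integral is a one-line substitution that rescales the Gamma integral.
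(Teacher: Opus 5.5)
Your Chernoff argument is correct. With $\lambda=1-1/e$ the moment generating function gives $(1-\lambda)^{-m}=e^{m}$, and this cancels exactly against $e^{-\lambda r}=e^{-m}\beta$. The argument works for every $m\ge 1$, and indeed for any real shape parameter, so you are right that the hypothesis $m>1$ plays no role in it.

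The paper takes the route you sketched as the alternative. It writes $\Pr[R>r]=\Gamma(m,r)/\Gamma(m)$ and applies Lemma~\ref{lm:nat_upgamma_ineq} with $c=e$, which is valid because $r>\frac{e}{e-1}(m-1)$. It then lower-bounds $(m-1)!$ by Stirling and uses Fact~\ref{clm:xt_exp_xe_inq} in the form $\bigl(r/(m-1)\bigr)^{m-1}\le e^{r/e}$. This yields
\[
\frac{e^{\,m-(1-1/e)r}}{\sqrt{2\pi(m-1)}}=\frac{\beta}{\sqrt{2\pi(m-1)}}\le\beta .
\]
In that route the assumption $m>1$ is genuinely needed, both for Natalini's inequality and for the factor $\sqrt{2\pi(m-1)}$ to make sense.

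Both arguments arrive at the same exponent $m-(1-1/e)r$. The paper's use of Fact~\ref{clm:xt_exp_xe_inq} does what your fixed choice $\lambda=1-1/e$ does: it replaces the optimized-Chernoff-type quantity $(r/t)^{t}e^{t-r}$ by $e^{t-(1-1/e)r}$. The paper's route gains an extra factor $1/\sqrt{2\pi(m-1)}$, which it then throws away. Yours is a self-contained two-line proof that needs neither the incomplete-gamma inequality nor Stirling, and it removes the restriction $m>1$.
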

\begin{proof}
    Let $r:=\frac{e}{e-1}(m+\ln(1/\beta))$. We will show $\Pr[R\le r] \ge 1-\beta$. We have $\Gamma(m)=\gamma(m,r)+\Gamma(m,r)$, where $\Gamma(m,r)=\int_{r}^{\infty} t^{m-1}e^{-t} dt$ is the upper incomplete gamma function. By Lemma~\ref{lm:gen_gamma}, $\Pr[R\le r] = \frac{\gamma(m,r)}{\Gamma(m)}=\frac{\Gamma(m)-\Gamma(m,r)}{\Gamma(m)}=1-\frac{\Gamma(m,r)}{\Gamma(m)}$, so it suffices to show $ \frac{\Gamma(m,r)}{\Gamma(m)}\le\beta$. 
    To this end, we use the inequality from Lemma~\ref{lm:nat_upgamma_ineq}: $\Gamma(m,r)< c r^{m-1}e^{-r}$ for $c>1$, and $r>\frac{c}{c-1}(m-1)$. Choose $c=e$, then $r=\frac{e}{e-1}(m+\ln(1/\beta)) > \frac{c}{c-1}(m-1)$, so
    \begin{align*}
    \frac{\Gamma(m,r)}{\Gamma(m)} &< \frac{e r^{m-1}e^{-r}}{\Gamma(m)} = \frac{e r^{m-1}e^{-r}}{(m-1)!} \\
    &\le \frac{e r^{m-1} e^{-r}}{\sqrt{2\pi(m-1)}(m-1)^{m-1}e^{-(m-1)}e^{\frac{1}{12(m-1)+1}}}\;\;\;\;\;\text{(Stirling's)}\\
    &\le \frac{1}{\sqrt{2\pi(m-1)}}\cdot \underbrace{\left(\frac{r}{m-1}\right)^{m-1}}_{\le e^{r/e}\text{, Fact~\ref{clm:xt_exp_xe_inq}}.} e^{-(r-1)+(m-1)}\\
    &\le \frac{e^{-r(1-\frac{1}{e})+m}}{\sqrt{2\pi(m-1)}} 
    = \frac{e^{-\frac{e}{e-1}(m+\ln\frac{1}{\beta})\frac{e-1}{e}+m}}{\sqrt{2\pi(m-1)}} \\
    &= \frac{e^{-(m+\ln\frac{1}{\beta})-m}}{\sqrt{2\pi(m-1)}}
    = \frac{\beta}{\sqrt{2\pi(m-1)}}\le \beta.
    \end{align*}
\end{proof}

\begin{lemma}[\cite{laurent2000adaptive}]
\label{lm:gauss_mag_laurent}
Given $Z\sim\mathcal{N}\left(0,I_{m\times m}\right)$, then with probability at least $1-\beta$, 
\[\|Z\| \le  \sqrt{m+2\sqrt{m\ln(1/\beta)}+2\ln(1/\beta)}.\]
\end{lemma}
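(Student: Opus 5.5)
This is a chi-square tail bound on $\|Z\|^2=\sum_{i=1}^m z_i^2\sim\chi^2_m$. The plan is to use the Chernoff method on the moment generating function of a chi-square variable to get the Laurent--Massart upper deviation inequality
\[
\Pr\left[\|Z\|^2 \ge m + 2\sqrt{mx} + 2x\right] \le e^{-x}\quad\text{for all } x>0 .
\]
Setting $x=\ln(1/\beta)$ and taking square roots (which preserves the event, since both sides are nonnegative) then gives exactly the statement of Lemma~\ref{lm:gauss_mag_laurent}.

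To prove the inequality, first write $Y=\|Z\|^2-m=\sum_i (z_i^2-1)$. For $0<\lambda<1/2$ compute the log-MGF exactly:
\[
\log\mathbb{E}[e^{\lambda Y}] = -\lambda m - \tfrac{m}{2}\log(1-2\lambda).
\]
Next bound it using $-u-\tfrac12\log(1-2u)\le \frac{u^2}{1-2u}$ for $0\le u<1/2$. This follows by comparing power series: the left side is $\sum_{k\ge2}\frac{(2u)^k}{2k}$, and the right side is $\sum_{k\ge 2} 2^{k-2}u^k$, which dominates term by term. The result is $\log\mathbb{E}[e^{\lambda Y}]\le \frac{m\lambda^2}{1-2\lambda}$, which is a sub-gamma bound with variance factor $v=2m$ and scale $c=2$.

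Then apply Markov's inequality to $e^{\lambda Y}$ and optimize over $\lambda$. The standard sub-gamma computation (Boucheron--Lugosi--Massart) shows that the Legendre transform of $\psi(\lambda)=\frac{v\lambda^2}{2(1-c\lambda)}$ satisfies $\psi^{*-1}(x)=\sqrt{2vx}+cx$. With $v=2m$ and $c=2$ this gives $\Pr[Y\ge 2\sqrt{mx}+2x]\le e^{-x}$.

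The main obstacle is this last optimization step. Rather than computing the exact Legendre transform, I would verify it directly: choose
\[
\lambda = \frac{1}{2}\left(1-\sqrt{\frac{m}{m+4\sqrt{mx}+4x}}\right)
\]
or equivalently parametrize $t=2\sqrt{mx}+2x$, and check algebraically that $\lambda t-\frac{m\lambda^2}{1-2\lambda}\ge x$. Since this is a standard cited result, I could alternatively just invoke \cite{laurent2000adaptive}, Lemma 1, with $x=\ln(1/\beta)$.
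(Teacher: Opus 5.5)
Your proposal is correct and matches the paper, which gives no argument of its own and simply invokes Lemma~1 of \cite{laurent2000adaptive} with $x=\ln(1/\beta)$ (for $0<\beta<1$); your chi-square log-MGF bound followed by Chernoff optimization is exactly the argument behind that cited lemma. Your choice of $\lambda$ simplifies to $\lambda=\sqrt{x}/(\sqrt{m}+2\sqrt{x})<1/2$, for which $\lambda t-\frac{m\lambda^2}{1-2\lambda}=x$ holds with equality, so the last step goes through as planned.
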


\begin{fact}
   Fix $k>0$. Let $\mathrm{sinc}_k:t\mapsto \frac{\sin(k\pi t)}{k\pi t}$. Then 
    \begin{enumerate}
        \item $\int_{-\infty}^{\infty} \mathrm{sinc}_k(t-a)\cdot \mathrm{sinc}_k(t-b) dt = \mathrm{sinc}_k(a-b)/k$
        \item $\int_{-\infty}^{\infty} \mathrm{sinc}_k(t-a)^2 dt = 1/k$
    \end{enumerate}
\end{fact}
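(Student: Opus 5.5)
The plan is to write the integrand out directly and reduce it to a single Fourier-side computation via Plancherel. For $k>0$, the function $\mathrm{sinc}_k(t)=\frac{\sin(k\pi t)}{k\pi t}$ is (up to scaling) the inverse Fourier transform of a box. Using the convention $\hat{f}(\xi)=\int f(t)e^{-2\pi i t\xi}dt$, I would first verify that
\[
\widehat{\mathrm{sinc}_k}(\xi)=\frac{1}{k}\mathbb{1}_{[-k/2,k/2]}(\xi).
\]
This follows by computing the inverse transform of the box: $\frac{1}{k}\int_{-k/2}^{k/2}e^{2\pi i t\xi}d\xi=\frac{\sin(k\pi t)}{k\pi t}$.

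Next I would use the translation rule. It gives $\widehat{\mathrm{sinc}_k(\cdot-a)}(\xi)=e^{-2\pi i a\xi}\frac{1}{k}\mathbb{1}_{[-k/2,k/2]}(\xi)$. Both functions are real and in $L^2(\mathbb{R})$, so Parseval/Plancherel gives
\[
\int_{-\infty}^{\infty}\mathrm{sinc}_k(t-a)\,\mathrm{sinc}_k(t-b)\,dt=\int_{-\infty}^{\infty}\widehat{\mathrm{sinc}_k(\cdot-a)}(\xi)\overline{\widehat{\mathrm{sinc}_k(\cdot-b)}(\xi)}\,d\xi=\frac{1}{k^2}\int_{-k/2}^{k/2}e^{-2\pi i(a-b)\xi}\,d\xi .
\]
The last integral is exactly $k\,\mathrm{sinc}_k(a-b)$, by the same box computation with $t$ replaced by $-(a-b)$ and using that $\mathrm{sinc}_k$ is even. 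This yields $\mathrm{sinc}_k(a-b)/k$, which proves item 1.

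Item 2 then follows by setting $a=b$, with the convention $\mathrm{sinc}_k(0)=1$ given by the continuous extension; this gives $1/k$.

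The only delicate point is justification. The sinc function is not in $L^1$, so its Fourier transform must be taken in the $L^2$ sense. I would handle this by defining the transform via the box function, which is in $L^1\cap L^2$, and invoking Plancherel's theorem, which identifies $\mathrm{sinc}_k$ as the $L^2$ inverse transform of the box. After that, the inner-product identity is just unitarity, and no convergence issues remain, since the integral of the product converges absolutely by Cauchy–Schwarz.

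As a sanity check, I would note that $k=1$ recovers the orthonormality of $\{\mathrm{sinc}(t-j)\}_{j\in\mathbb{Z}}$ used in Section~\ref{sec:infbasis_finitefunc}, because $\mathrm{sinc}(j-l)=0$ for integers $j\neq l$.
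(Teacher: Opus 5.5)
Your proof is correct. With $\hat f(\xi)=\int f(t)e^{-2\pi i t\xi}\,dt$ one indeed gets $\widehat{\mathrm{sinc}_k}=\frac1k\mathbb{1}_{[-k/2,k/2]}$. Parseval then turns the integral into $\frac{1}{k^2}\int_{-k/2}^{k/2}e^{-2\pi i(a-b)\xi}\,d\xi=\mathrm{sinc}_k(a-b)/k$, and setting $a=b$ gives item 2.

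Your justification is also sound. The box lies in $L^1\cap L^2$, so its $L^2$ inverse transform agrees with the pointwise integral, which is $\mathrm{sinc}_k$. The same holds for $e^{-2\pi i a\xi}\frac1k\mathbb{1}_{[-k/2,k/2]}(\xi)$, whose inverse transform you can compute directly as $\mathrm{sinc}_k(\cdot-a)$. So you do not need an $L^2$ translation rule.

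The paper states this as a standard Fact with no proof, so there is no argument of its own to compare against. Item 2 also has an elementary proof: substitute $u=k\pi t$ and use $\int_{-\infty}^{\infty}\sin^2u/u^2\,du=\pi$. The cross term in item 1 is awkward by hand, since it needs product-to-sum identities and conditionally convergent integrals. The Plancherel route avoids that and gets item 2 for free.
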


\begin{fact}
\label{ft:integ_linear_mat}
 For $u=[u_1,\dotsb,u_m]^T, v=[v_1,\dotsb,v_m]^T\in \mathbb{R}^m$, and $S(t) = \begin{bmatrix}
    S_{1,1}(t) & \dotsb & S_{1,m}(t)\\
    \vdots & &\vdots\\
    S_{m,1}(t) & \dotsb & S_{m,m}(t)
\end{bmatrix}$ where $S_{j,l}(t)=S_{l,j}(t)\in L^2(I)$ for all $j,l\in [m]$, we have
\begin{align*}
&{}\int_I \begin{bmatrix}
    u_1 & \dotsb & u_m
\end{bmatrix} S(t) \begin{bmatrix}
    v_1\\
    \vdots\\
    v_m
\end{bmatrix} dt 
= \begin{bmatrix}
    u_1 & \dotsb & u_m
\end{bmatrix} \begin{bmatrix}
    \int_{t \in I} S_{1,1}(t)dt & \dotsb & \int_{t \in I} S_{1,m}(t) dt\\
    \vdots & &\vdots\\
    \int_{t \in I} S_{m,1}(t) dt & \dotsb & \int_{t \in I} S_{m,m}(t) dt
\end{bmatrix}\begin{bmatrix}
    v_1\\
    \vdots\\
    v_m
\end{bmatrix}.
\end{align*} 
\end{fact}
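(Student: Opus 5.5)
The statement is Fact~\ref{ft:integ_linear_mat}: integration commutes with the quadratic form $u^{\mathsf{T}}S(t)v$ when $u,v$ are constant vectors. The plan is to expand the integrand as a finite sum of scalar functions and use linearity of the integral.

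First, I write out the integrand explicitly:
\[
u^{\mathsf{T}}S(t)v=\sum_{j=1}^m\sum_{l=1}^m u_j S_{j,l}(t) v_l .
\]
This is a finite linear combination of the functions $S_{j,l}$ with constant coefficients $u_jv_l$. Next, I justify that each $\int_I S_{j,l}(t)\,dt$ exists, so that linearity of the Lebesgue integral applies termwise. The hypothesis is $S_{j,l}\in L^2(I)$, and this is the only point requiring care. When $I$ has finite measure (e.g.\ $I=[0,T]$, the case of interest), Cauchy--Schwarz gives
\[
\int_I |S_{j,l}|\le |I|^{1/2}\,\|S_{j,l}\|_{L^2},
\]
so $S_{j,l}\in L^1(I)$. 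In the paper's applications, $S_{j,l}=\phi_j\phi_l$ with $\phi_j,\phi_l\in L^2(I)$, and this product is in $L^1$ on any $I$, again by Cauchy--Schwarz. I would note that the right-hand side only makes sense under such integrability, and read the hypothesis that way.

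Given integrability, linearity yields
\[
\int_I\sum_{j,l}u_jv_lS_{j,l}(t)\,dt=\sum_{j,l}u_jv_l\int_I S_{j,l}(t)\,dt .
\]
Finally, I recognize the right-hand side as $u^{\mathsf{T}}\bigl[\int_I S_{j,l}\bigr]_{j,l}\,v$ by the same expansion of a bilinear form, read in reverse. Symmetry of $S$ is not needed for the identity, only for the resulting matrix to be symmetric. The main (and mild) obstacle is the integrability justification above; the rest is bookkeeping.
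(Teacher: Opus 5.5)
Your proof is correct and is the argument the paper intends: the paper states this fact without proof, as immediate from expanding $u^{\mathsf{T}}S(t)v=\sum_{j,l}u_jv_lS_{j,l}(t)$ and integrating termwise. Your integrability remark is a correct sharpening: the stated hypothesis $S_{j,l}\in L^2(I)$ gives $S_{j,l}\in L^1(I)$ only when $I$ has finite measure, while the paper's actual uses (e.g.\ in the proof of Lemma~\ref{lm:finapprox_priv_distsquared}, with $S_{j,l}=\phi_j\phi_l$ or $\sum_r\phi_{j,r}\phi_{l,r}$) need only the $L^1$ property, which Cauchy--Schwarz supplies on any $I$.
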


\section{Privatization under CGP}
\label{app:priv_CGP}
In this section, we provide the CGP variants of the algorithms provided above. For the steps that involve SVT, which is a GP mechanism, we use the following relationship to convert the privacy parameters from GP to CGP.
\begin{lemma} [GP conversion to CGP \cite{liang2023concentrated}]
\label{lm:gp_to_cgp}
    Any $\varepsilon$-GP mechanism is also $\frac{\varepsilon^2}{2}$-CGP.
\end{lemma}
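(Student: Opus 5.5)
The statement to prove is Lemma~\ref{lm:gp_to_cgp}: every $\varepsilon$-GP mechanism is also $\frac{\varepsilon^2}{2}$-CGP. The plan is to transfer the standard DP-to-CDP conversion (Bun--Steinke: $\varepsilon$-DP implies $\frac{\varepsilon^2}{2}$-zCDP) pairwise, with the Hamming distance replaced by the metric. Fix $x,x'\in U$ and set $D:=\dist(x,x')$ and $\varepsilon':=\varepsilon D$. Since $\Lambda=\infty$, the $\varepsilon$-GP condition (with $\delta=0$), applied in both directions, says exactly that the output distributions $\mathcal{P}=\mathcal{M}(x)$ and $\mathcal{Q}=\mathcal{M}(x')$ are $\varepsilon'$-indistinguishable: $e^{-\varepsilon'}\mathcal{Q}(S)\le \mathcal{P}(S)\le e^{\varepsilon'}\mathcal{Q}(S)$ for every measurable $S$. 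This makes $\mathcal{P},\mathcal{Q}$ mutually absolutely continuous, and the privacy loss $L:=\log\frac{d\mathcal{P}}{d\mathcal{Q}}$ satisfies $|L|\le\varepsilon'$ almost surely. The goal then reduces to the purely two-distribution statement $D_\alpha(\mathcal{P}\|\mathcal{Q})\le \alpha\frac{\varepsilon'^2}{2}$ for all $\alpha>1$, because $\alpha\frac{\varepsilon'^2}{2}=\alpha\cdot\frac{\varepsilon^2}{2}\,D^2$, which is precisely the CGP requirement with $\rho=\varepsilon^2/2$.

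For the two-distribution bound, write $e^{(\alpha-1)D_\alpha}=\mathbb{E}_{\mathcal{P}}[e^{(\alpha-1)L}]$. Let $\mathcal{P}'$ be the measure with density $e^{(\alpha-1)L}/\mathbb{E}_{\mathcal P}[e^{(\alpha-1)L}]$ with respect to $\mathcal{P}$, so that $\mathbb{E}_{\mathcal{P}'}[L]=D_\alpha$ (up to the normalization identity). I would follow Bun--Steinke's Proposition~3.3 step by step: first establish $\mathbb{E}_{\mathcal{P}}[L]=D_{\mathrm{KL}}(\mathcal P\|\mathcal Q)\le \varepsilon'(e^{\varepsilon'}-1)/2$, and more sharply $\le \varepsilon'^2/2$, using $\mathbb{E}_{\mathcal Q}[e^{L}]=1$ together with $|L|\le\varepsilon'$. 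Then apply Hoeffding's lemma to the bounded variable $L\in[-\varepsilon',\varepsilon']$ under $\mathcal{P}$, which gives $\log\mathbb{E}_{\mathcal P}[e^{(\alpha-1)L}]\le(\alpha-1)\mathbb{E}_{\mathcal P}[L]+\frac{(\alpha-1)^2(2\varepsilon')^2}{8}$. Dividing by $\alpha-1$ yields $D_\alpha\le \frac{\varepsilon'^2}{2}+(\alpha-1)\frac{\varepsilon'^2}{2}=\alpha\frac{\varepsilon'^2}{2}$.

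The main obstacle is the KL bound $D_{\mathrm{KL}}\le\varepsilon'^2/2$, since the crude bound $\varepsilon'(e^{\varepsilon'}-1)$ is not enough. My first attempt is the Bun--Steinke argument: write $D_{\mathrm{KL}}(\mathcal{P}\|\mathcal{Q})=\frac{1}{2}\bigl(D_{\mathrm{KL}}(\mathcal P\|\mathcal Q)+D_{\mathrm{KL}}(\mathcal Q\|\mathcal P)\bigr)-\frac12\bigl(D_{\mathrm{KL}}(\mathcal Q\|\mathcal P)-D_{\mathrm{KL}}(\mathcal P\|\mathcal Q)\bigr)$, or equivalently bound $\mathbb{E}_{\mathcal Q}[Le^{L}]$ pointwise by a function $h(L)$ with $h(\ell)\le \frac{\ell^2}{2}$ plus a term that integrates to zero against $\mathcal{Q}$. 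Concretely, I would show that for $|\ell|\le\varepsilon'$ one has $\ell e^{\ell}\le (e^{\ell}-1)c+\frac{\varepsilon'^2}{2}\cdot\frac{e^\ell+1}{2}$ for a suitable constant $c$. Taking the $\mathcal{Q}$-expectation then kills the first term, because $\mathbb{E}_{\mathcal Q}[e^L]=1$. If this fails, I would fall back on citing the known convexity argument: the worst case of $D_\alpha$ under $|L|\le\varepsilon'$ is attained by two-point (randomized response) distributions, for which the bound is checked directly.

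Finally, I would note that the measurability and support issues are handled by the first step, and that the case $x=x'$ is trivial.
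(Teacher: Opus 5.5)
Your argument is correct: the paper gives no proof of this lemma and only cites \cite{liang2023concentrated}, and your route is the standard way this conversion is obtained. You fix $x,x'$, observe that $\varepsilon$-GP makes $\mathcal{M}(x)$ and $\mathcal{M}(x')$ mutually $\varepsilon'$-indistinguishable with $\varepsilon'=\varepsilon\dist(x,x')$, and then run Bun--Steinke's ``KL $\le\varepsilon'^2/2$ plus Hoeffding's lemma'' proof that pure $\varepsilon'$-DP gives $\tfrac12\varepsilon'^2$-zCDP. The pointwise inequality you leave with an unspecified $c$ does hold, e.g.\ for $c=\varepsilon'\coth\varepsilon'-\varepsilon'^2/4$: in the variable $y=e^{\ell}\in[e^{-\varepsilon'},e^{\varepsilon'}]$ it says the convex function $y\log y$ lies below the line through $(1,\varepsilon'^2/2)$ parallel to its chord, and that chord takes the value $\varepsilon'\tanh(\varepsilon'/2)\le\varepsilon'^2/2$ at $y=1$. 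Separately, drop the aside about $\mathcal{P}'$, which is unused and false in general, since $\mathbb{E}_{\mathcal{P}'}[L]=\frac{d}{d\alpha}\bigl[(\alpha-1)D_\alpha\bigr]\ge D_\alpha$.
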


\subsection{Function Privatization}
The CGP variant of $\mathrm{Project}$-$\mathrm{and}$-$\mathrm{Privatize}$ is provided in Algorithm~\ref{alg:algo_projprivcgp}; its expected squared error is given by 
Lemma~\ref{lm:finapprox_priv_distsquared}, where the noise vector for CGP has magnitude proportional to $\sqrt{\dim(U_f)}$. The latter is reflected in CGP variant of $\mathrm{PrivFuncSelect}$ (Algorithm~\ref{alg:algo_privselcgp}), in the step where we try to balance the errors due to approximation and noise addition (line $5$). We use Lemma~\ref{lm:gp_to_cgp} to convert the GP privacy parameter used in the SVT call (line $1$) to a CGP privacy parameter, then similar to the GP case, the privacy of Algorithm~\ref{alg:algo_privselcgp} follows from basic composition.

\begin{algorithm}
\caption{Project-and-Privatize-CGP}
    \label{alg:algo_projprivcgp}
    \vspace{-5pt}
    \begin{flushleft}
    \textbf{Input}: $q:I\rightarrow \mathbb{R}^n$; finite-dimensional function space $U_f$; $\rho >0$\\
    \textbf{Output}: privatized $\tilde{q}$
    \end{flushleft}
      \vspace{-5pt}
    \begin{algorithmic}[1]
    \STATE compute $\Sigma$ from $U_f$
    \STATE $a\gets $ coefficients of $\pj_{U_f}(q)$
    \STATE ${\tilde{a}}\gets a+\frac{1}{\sqrt{2\rho}}\Sigma^{1/2} Z$, $Z\sim\mathcal{N}(0,I_{\dim(U_f)})$
    \RETURN $f_{\tilde{a}}$
    \end{algorithmic}
\end{algorithm}

\begin{algorithm}
\caption{PrivFuncSelectCGP}
    \label{alg:algo_privselcgp}
      \vspace{-5pt}
    \begin{flushleft}
    \textbf{Input}: $q:I\rightarrow \mathbb{R}^n; \{\phi_j\}_{j\in J};\beta > 0; \rho >0$\\
    \textbf{Output}: privatized $\tilde{q}$
    \end{flushleft}
      \vspace{-5pt}
    \begin{algorithmic}[1]
    \STATE $k_0 \gets \mathrm{SVT}(q,(\varepsilon_0/3,2\varepsilon_0/3),-\frac{1}{\varepsilon_0},1,g_1,g_2,\dotsb)$, where $\varepsilon_0=\frac{\sqrt{\rho}}{{\sqrt{2}}}$, $g_j(\cdot):=-\dist_2(\pj_{U_f^{(j)}}(\cdot), \cdot)$
    \STATE $\tilde{q}_0\gets \mathrm{Project}$-$\mathrm{and}$-$\mathrm{Privatize}$-$\mathrm{CGP}(q,U_f^{(k_0)},\rho/4)$
    \STATE $\tilde{c}\gets$ coefficients of $\tilde{q}_0$ in $U_f^{(k_0)}$
    \STATE $s^{(r)} \gets r$th largest magnitude in $\tilde{c}$, for $r\in [k_0]$
        \STATE ${k_1}\gets \mathrm{SVT}(q,(\frac{\varepsilon_0}{3},\frac{2\varepsilon_0}{3}),0,1,g'_1,g'_2,\dotsb)$, $N_r:=\{j: \|\tilde{c}_j\| \ge s^{(r)}\}$, $g'_r(\cdot):=\frac{\sqrt{n|N_r|}}{\varepsilon_0}-\dist_2(f_{b^{(r)}}, \cdot)$, $b_j^{(r)}:=\tilde{c}_j\mathbb{1}\{j\in N_r\}$
    \STATE $U_f \gets \mathrm{span}(\{\phi_j: j\in N_{k_1}\})$
    \STATE $f_{\tilde{a}}\gets \mathrm{Project}$-$\mathrm{and}$-$\mathrm{Privatize}$-$\mathrm{CGP}$$(q,U_f,\rho/4)$
    \RETURN $f_{\tilde{a}}$
    \end{algorithmic}
\end{algorithm}

\subsection{Curve Privatization}
\label{sec:arbitrary_curve_cgp}
The CGP variant of $\mathrm{PrivFuncSeg}$ for curve privatization is provided in Algorithm~\ref{alg:algo1_cgp}. The privacy guarantee follows similarly to the case of GP. The steps that use the magnitude related to noise addition reflect the $\sqrt{(\cdot)}$ relationship with that of GP: in line $1$ of Algorithm~\ref{alg:algo1_cgp}, and line $9$ of Algorithm~\ref{alg:algo_redseg_cgp}.

\begin{algorithm}[H]
\caption{PrivFuncSegCGP}
    \label{alg:algo1_cgp}
    \vspace{-5pt}
    \begin{flushleft}
    \textbf{Input}: $q:[0,T]\rightarrow \mathbb{R}^n; I=[0,T]; \{\phi_j\}_{j\in [m]};\beta > 0; \rho >0$\\
    \textbf{Output}: privatized $\tilde{q}$
    \end{flushleft}
    \vspace{-5pt}
    \begin{algorithmic}[1]
    \STATE $\bar{k} \gets \mathrm{SVT}(q,(\varepsilon_0/3,2\varepsilon_0/3),0,1,g_0,g_1,g_2,\dotsb)$, where $\varepsilon_0=\frac{\sqrt{\rho}}{\sqrt{2}}$, $\tau_j=\frac{\sqrt{2^jmn}}{\sqrt{\rho/2}}$, $g_j(q):=\tau_j-\dist_2(\pj_{U_f^{(j)}}(q), q)$, $\pj_{U_f^{(j)}}$ computes best approximation using $2^{j}$ sub-intervals
    \STATE $B\gets 3\rho/4$
    \STATE $S\gets \{j{T}/{2^{\bar{k}}}: j=0,1,2,\dotsb,2^{\bar{k}}\}$
    \STATE $k\gets \min(\bar{k}-2,4)$
    \IF{$k_1\ge 1$}
    \FOR{$i=1,...,4$}
    \STATE $\hat{S}_i\gets \mathrm{ReduceSegCGP}(q,{(i-1)T}/{4}, {iT}/{4},k_1,1,S_i,\beta,B,{\frac{\rho}{16}})$, $S_i:= S\cap [{(i-1)T}/{4},{i T}/{4}]$
    \ENDFOR
    \ENDIF
    \STATE $\hat{S}\gets \cup_{i} \hat{S}_i=\{0=T_0,T_1,\dotsb,T_{N}=T\}$;\;\; $N\gets |\hat{S}|-1$
    \STATE $U_f\gets \mathrm{span}(\{\phi_j\cdot \mathbb{1}_{[T_{s-1},T_s)}\}_{s\in [N], j\in [m]})$
    \STATE $f_{\tilde{a}}\gets \mathrm{Project}$-$\mathrm{and}$-$\mathrm{Privatize}$-$\mathrm{CGP}(q,U_f,B)$
    \RETURN $f_{\tilde{a}}$
    \end{algorithmic}
\end{algorithm}
\begin{algorithm}[H]
\caption{ReduceSegCGP}
    \label{alg:algo_redseg_cgp}
    \vspace{-5pt}
    \begin{flushleft}
    \textbf{Input}: $q:[0,T]\rightarrow \mathbb{R}^n; t_{s}<t_{e}; k_1; l;$ set $S$ containing breakpoints to be checked; $\beta > 0; B\ge \rho' >0$\\
    \textbf{Output}: $\hat{S}$ after reducing the number of sub-intervals
    \end{flushleft}
    \vspace{-5pt}
    \begin{algorithmic}[1]
    \IF{$l > k_1$}
    \RETURN ${S}$
    \ENDIF
    \STATE $k\gets \log_2(|S|-1)$; 
    \STATE $S'\gets \{2r\frac{t_e-t_s}{2^k}+t_s: r=0,\dotsb,2^{k-1})\}$\;\;\; \COMMENT{try to reduce number of sub-intervals by half}
    \STATE $f_{[t_s,t_e]} \gets$ best approx. of $q$ on $[t_s,t_e]$ using break points in $S'$
    \STATE $\mathrm{err}\gets \dist_2(f_{[t_s,t_e]},q\cdot \mathbb{1}_{[t_s,t_e]} )+\frac{\sqrt{2^l}}{\sqrt{2\rho'}}Z$, $Z\sim\mathcal{N}(0,1)$
    \STATE $B\gets B-\frac{\rho'}{2^l}$
    \IF{$\mathrm{err}+\frac{\sqrt{2^l}}{\sqrt{2\rho'}}\sqrt{2k_1\ln(2)+2\ln(1/\beta)}\le \frac{\sqrt{2}-1}{\sqrt{2}}\cdot \frac{\sqrt{2^{k-1}mn}}{\sqrt{2B}}$} 
    \STATE $S'_L\gets S'\cap [t_s,(t_s+t_e)/2]$,\; $S'_R\gets S'\cap [(t_s+t_e)/2,t_e]$
    \STATE $\hat{S}_L\gets \mathrm{ReduceSegCGP}(q, t_s, (t_s+t_e)/2, k_1, l+1, S'_L,\beta,{B},\frac{\rho'}{2})$
    \STATE $\hat{S}_R\gets \mathrm{ReduceSegCGP}(q, (t_s+t_e)/2, t_e, k_1, l+1,S'_R,\beta,{B},\frac{\rho'}{2})$
    \STATE $\hat{S} \gets \hat{S}_L \cup \hat{S}_R$
    \ELSE
    \STATE $\hat{S}\gets S$
    \ENDIF
    \RETURN $\hat{S}$
    \end{algorithmic}
\end{algorithm}

Let $\mathrm{ErrBound}(f_{\tilde{a}},\rho,\beta;q,2)):=\dt(f_a,q)+\frac{\eta(\beta;\dim(U_f),2)}{\sqrt{2\rho}}$ for $f_{\tilde{a}}\in U_f$ obtained from privatizing $f_a$ as in Theorem~\ref{thm:cgp_funcRm}. Similar to the GP case, $\mathrm{ErrBound}(f_{\tilde{a}},\rho,\beta;q,2))$ bounds $\dt(f_{\tilde{a}},q)$ with probability $1-\beta$.
We also have a utility guarantee similar to the GP case for Algorithm~\ref{alg:algo_redseg_cgp}. 
\begin{lemma}
\label{lm:util_redseg_cgp}
    Let $\hat{S}$ be the set of breakpoints resulting from running $\mathrm{ReduceSegCGP}(q,t_s,t_e, k_1, k, 1, S, \frac{\beta}{2}, B, \rho')$. Let $\tilde{q}_{\hat{S}}$ be obtained from the $\rho/2$-CGP privatization of the best approximation of $q$ on the interval $[t_s, t_e]$ using breakpoints in $\hat{S}$, let $\tilde{q}_S$ be defined similarly for the set $S$. Then with probability $1-\frac{3}{2}\beta$, 
    \[\dt(\tilde{q}_{\hat{S}}, q)\le \mathrm{ErrBound}(\tilde{q}_S;q,\rho/2,\beta,1))+{(\sqrt{2}-1)\sqrt{\ln(1/\beta)/\rho}}.\]
\end{lemma}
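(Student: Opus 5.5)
We mirror the GP argument for Lemma~\ref{lm:util_redseg}, replacing Laplace tails with Gaussian tails and the GP noise magnitude $\eta(\beta;\cdot,1)/\varepsilon$ with the CGP magnitude $\eta(\beta;\cdot,2)/\sqrt{2\rho}$. The proof has three steps: a good event for the noisy checks, a per-merge error comparison, and a union bound.

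\textbf{Step 1: the good event.} The recursion tree of $\mathrm{ReduceSegCGP}$ has depth at most $k_1$, so it makes at most $2^{k_1}$ noisy checks. Each check at level $l$ adds noise $\frac{\sqrt{2^l}}{\sqrt{2\rho'}}Z$ with $Z\sim\mathcal{N}(0,1)$. The standard Gaussian tail gives $\Pr[|Z|>\sqrt{2\ln(2^{k_1}\cdot 2/\beta)}]\le \beta/2^{k_1}$, up to the constant $2$. A union bound over all checks then shows that, with probability at least $1-\beta/2$, every check satisfies
\[|\mathrm{err}-\dt(f_{[t_s,t_e]},q\mathbb{1}_{[t_s,t_e]})|\le \tfrac{\sqrt{2^l}}{\sqrt{2\rho'}}\sqrt{2k_1\ln 2+2\ln(1/\beta)}.\]
On this event, whenever line 9 accepts, the true approximation error of the halved segmentation on $[t_s,t_e]$ is at most $\frac{\sqrt2-1}{\sqrt2}\frac{\sqrt{2^{k-1}mn}}{\sqrt{2B}}$.

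\textbf{Step 2: each accepted merge does not increase the error bound.} Fix an accepted interval $[t_s,t_e]$ whose $2^k$ sub-intervals are halved to $2^{k-1}$.

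\emph{Approximation error.} The merged approximation error on the interval is bounded by the accepted threshold, which is nonnegative. The old approximation error is also nonnegative, so the approximation term grows by at most that threshold.

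\emph{Noise.} The dimension drops by $2^{k-1}mn$. We compare noise via the expected-magnitude proxy $\sqrt{\dim}/\sqrt{2B}$, which has Gaussian concentration. The reduction $\sqrt{D}-\sqrt{D-2^{k-1}mn}$ is at least $(1-1/\sqrt2)\sqrt{2^{k-1}mn}=\frac{\sqrt2-1}{\sqrt2}\sqrt{2^{k-1}mn}$, which is the worst case when the local part is the whole.

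Because $\sqrt{\cdot}$ is concave, these comparisons work against the $\sqrt{\cdot}$ proxy rather than the exact high-probability quantity $\eta(\beta;\cdot,2)$. Summing over accepted merges, using subadditivity of approximation errors across disjoint intervals (squared $L^2$ errors add), gives
\[\dt(\mathrm{P}_{\hat S}q,q)+\tfrac{\sqrt{\dim\hat S}}{\sqrt{2\rho/2}}\le \dt(\mathrm{P}_{S}q,q)+\tfrac{\sqrt{\dim S}}{\sqrt{\rho}}.\]

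\textbf{Step 3: the final privatization.} Apply Lemma~\ref{lm:gauss_mag_laurent} to the noise in the privatization of $\tilde q_{\hat S}$, failing with probability at most $\beta$. Then $\|Z\|\le \sqrt{\dim}+\sqrt{2\ln(1/\beta)}$, and this is where the additive term arises. Writing $\eta(\beta;D,2)\ge\sqrt D+\sqrt{2\ln(1/\beta)}\cdot c$ for the $S$ side and comparing with the $\hat S$ side leaves a mismatch of $(\sqrt2-1)\sqrt{\ln(1/\beta)/\rho}$.

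The triangle inequality $\dt(\tilde q_{\hat S},q)\le \dt(\mathrm{P}_{\hat S}q,q)+\|Z\|/\sqrt{\rho}$, combined with Step 2, yields the claimed bound. A union bound over the events in Steps 1 and 3 gives total failure probability $\frac32\beta$.

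The main obstacle is Step 2: the thresholds must make the merge accounting telescope while the noise budget $B$ shrinks as checks consume it. The key fact is that the final budget is at most the $B$ used in each check, which makes every threshold conservative. The leftover additive term comes from the $\sqrt{\ln(1/\beta)}$ part of $\eta$ not splitting cleanly under the square root. I would first attempt the exact inequality for $\eta$ directly. If that fails, I would settle for the stated additive slack, bounding $\sqrt{a+b}$ between $\sqrt a$ and $\sqrt a+\sqrt b$.
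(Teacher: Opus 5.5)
Your Steps 1 and 3 match the paper's proof, and the inequality at the end of Step 2 is true; the gap is in how you justify Step 2. Steps 1 and 3 use one-sided Gaussian bounds over the at most $2^{k_1}$ checks with failure probability $\beta/2$, then Lemma~\ref{lm:gauss_mag_laurent} together with $\eta(\beta;D,2)\ge\sqrt{D}+\sqrt{\ln(1/\beta)}$, which is exactly where the $(\sqrt2-1)\sqrt{\ln(1/\beta)/\rho}$ slack comes from.

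Your lower bound $\sqrt{D}-\sqrt{D-c}\ge\frac{\sqrt2-1}{\sqrt2}\sqrt{c}$, with $c=2^{k-1}mn$, has the monotonicity backwards. By concavity $\sqrt{D}-\sqrt{D-c}$ decreases in $D$ and tends to $0$. So ``local part $=$ whole'' ($D=2c$) is the best case, not the worst. The bound already fails inside a single call: at level~2, with the sibling not yet reduced, $D=4c$ and the saving is $(2-\sqrt3)\sqrt c\approx0.27\sqrt c<0.29\sqrt c$.

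More fundamentally, charging every accepted merge its full threshold additively cannot close the argument. Take $k_1=4$, $B$ near $\rho/2$ (all the lemma lets you assume), and all checks accepting, and write $D_S=(|S|-1)mn$. Each level's thresholds then sum to $\frac{\sqrt2-1}{2}\sqrt{D_S/\rho}$, for a total charge of about $0.83\sqrt{D_S/\rho}$. The total noise saving is only $(1-2^{-k_1/2})\sqrt{D_S/\rho}=0.75\sqrt{D_S/\rho}$.

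The missing idea, which the paper uses, is to bound the final approximation error directly instead of telescoping over merges.

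\begin{itemize}
\item On the good event, take any terminal piece of the recursion: a call where line~9 fails, or one at depth $k_1+1$. Its breakpoints are its accepted parent's halved set restricted to the piece, and the midpoint is one of those breakpoints.
\item Hence its error is at most the parent's halved error, which is at most the parent's threshold. Since $B\ge\rho/2$, the square of that threshold is at most $(\sqrt2-1)^2/\rho$ times the piece's dimension.
\item The pieces partition $[t_s,t_e]$, so squared errors add. This gives the absolute bound $\dt(\pj\nolimits_{\hat S}(q),q)\le(\sqrt2-1)\sqrt{D_{\hat S}/\rho}$ with $D_{\hat S}=(|\hat S|-1)mn$. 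Note this is not relative to $\dt(\pj\nolimits_{S}(q),q)$, which is simply dropped as nonnegative.
\item Adding the noise term gives $\sqrt{2D_{\hat S}/\rho}\le\sqrt{D_S/\rho}$, because the level-1 acceptance at least halves the number of sub-intervals. The constant $\frac{\sqrt2-1}{\sqrt2}$ in line~9 is chosen exactly so this works.
\end{itemize}

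Your Step 3 then finishes the proof.
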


\subsection{Experiments under CGP}
 \label{sec:app_cgpexp}
In this section, we provide evaluations for the methods in the CGP model, on the same set of experiments as those in Section~\ref{sec:exp}. We observe similar improvements in our CGP mechanisms over the baseline methods in all the experiments, for all parameter settings, as those shown for the GP mechanisms in Section~\ref{sec:exp}.

\begin{figure}[h]
      \centering
    \begin{subfigure}[t]{0.3\linewidth}
        \includegraphics[width = \textwidth]{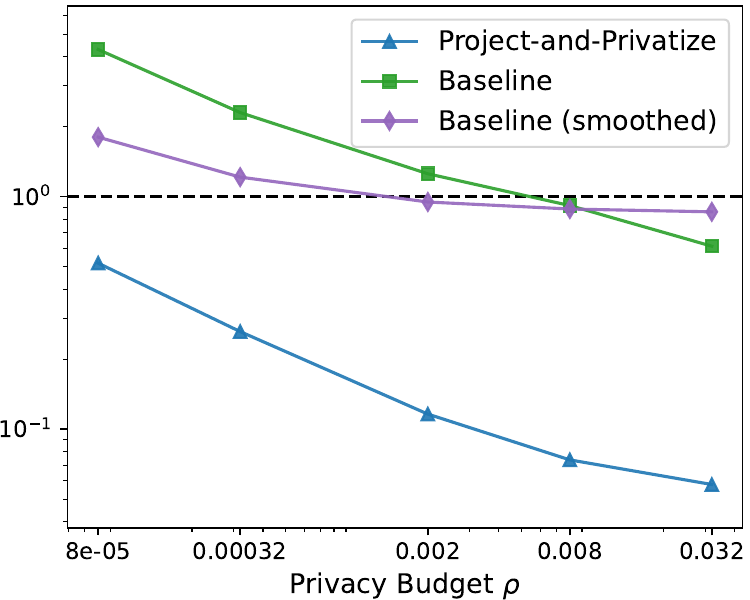}
        \subcaption{$L^2$ error}
        \label{fig:ECG-CGP_avg}
    \end{subfigure}
    \;\;\;\;\;
    \begin{subfigure}[t]{0.3\linewidth}
        \includegraphics[width = \textwidth]{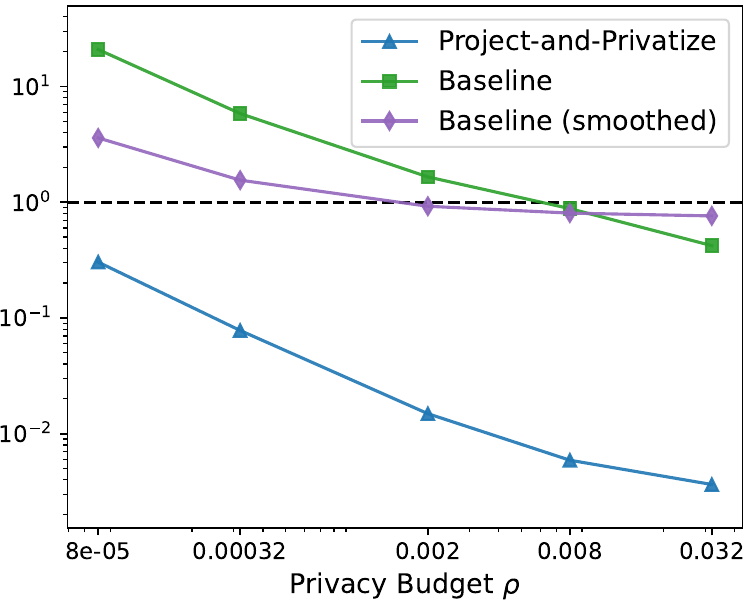} 
        \subcaption{$L^2$-squared error (MSE)}
        \label{fig:ECG-CGP_mse}
    \end{subfigure}
    \vspace{-4pt}
    \caption{CGP results on ECG dataset}
    \label{fig:ECG-CGP}
\end{figure}

\begin{figure}[h]
      \centering
    \begin{subfigure}[t]{0.3\linewidth}
        \includegraphics[width = \textwidth]{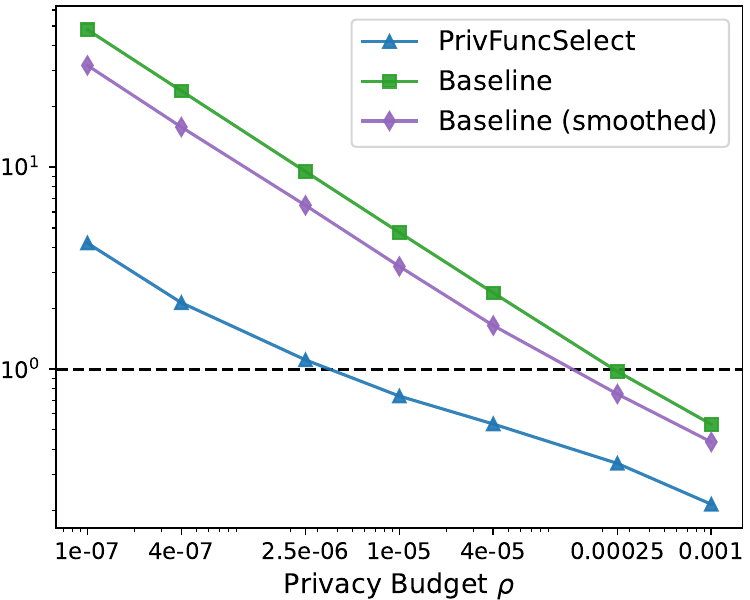} 
        \subcaption{$L^2$ error}
        \label{fig:Synth-Adapt-GCP_avg}
    \end{subfigure}
    \;\;\;\;\;
    \begin{subfigure}[t]{0.3\linewidth}
        \includegraphics[width = \textwidth]{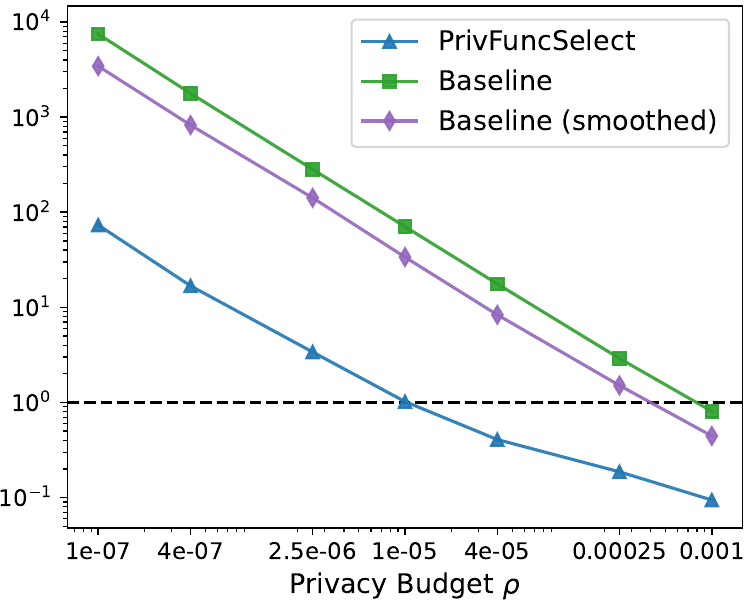}
        \subcaption{$L^2$-squared error (MSE)}
        \label{fig:Synth-Adapt-CGP_mse}
    \end{subfigure}
    \vspace{-4pt}
	\caption{GP results on synthetic Gaussian curves}
	\label{fig:Synth-Adapt-CGP}
\end{figure}

\begin{figure}[h]
      \centering
    \begin{subfigure}[t]{0.3\linewidth}
        \includegraphics[width = \textwidth]{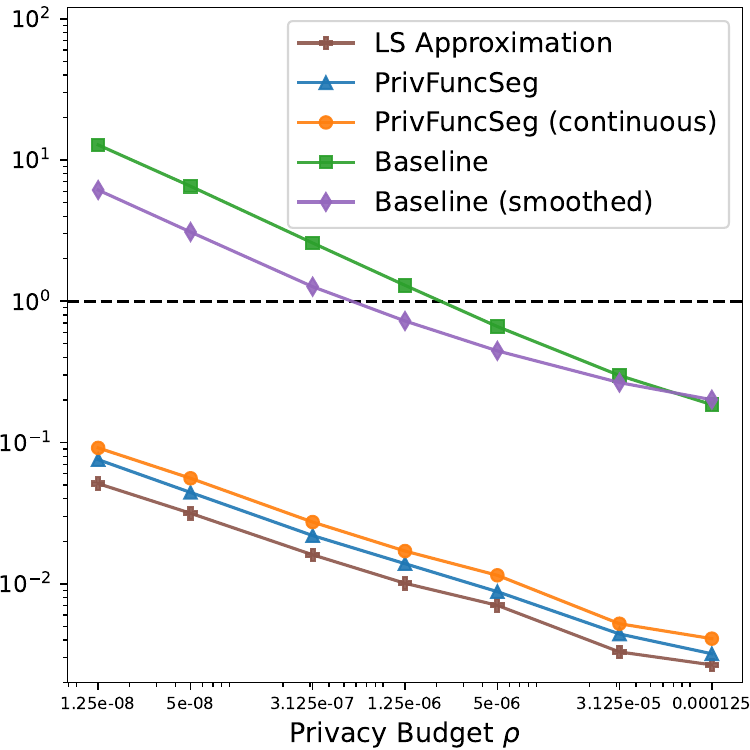}
        \subcaption{$L^2$ error}
        \label{fig:Taxi-CGP_avg}
    \end{subfigure}
   \;\;\;\;\;
    \begin{subfigure}[t]{0.3\linewidth}
        \includegraphics[width = \textwidth]{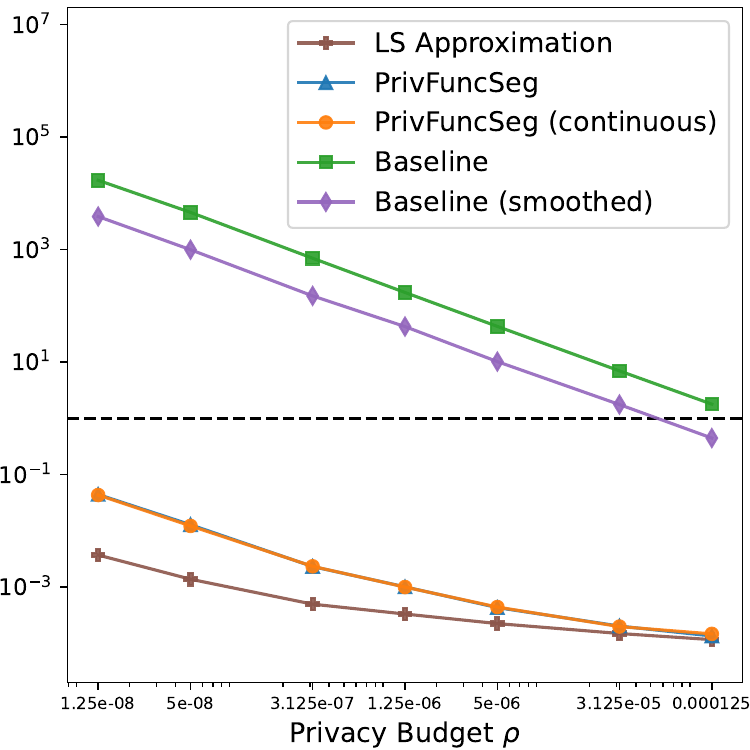} 
        \subcaption{$L^2$-squared error (MSE)}
        \label{fig:Taxi-CGP_mse}
    \end{subfigure}
    \vspace{-4pt}
	\caption{CGP results on the taxi trajectory dataset}
	\label{fig:Taxi-CGP}
\end{figure}
\;
\;
\begin{figure}[H]
    \centering
	\includegraphics[width = 0.85\linewidth]{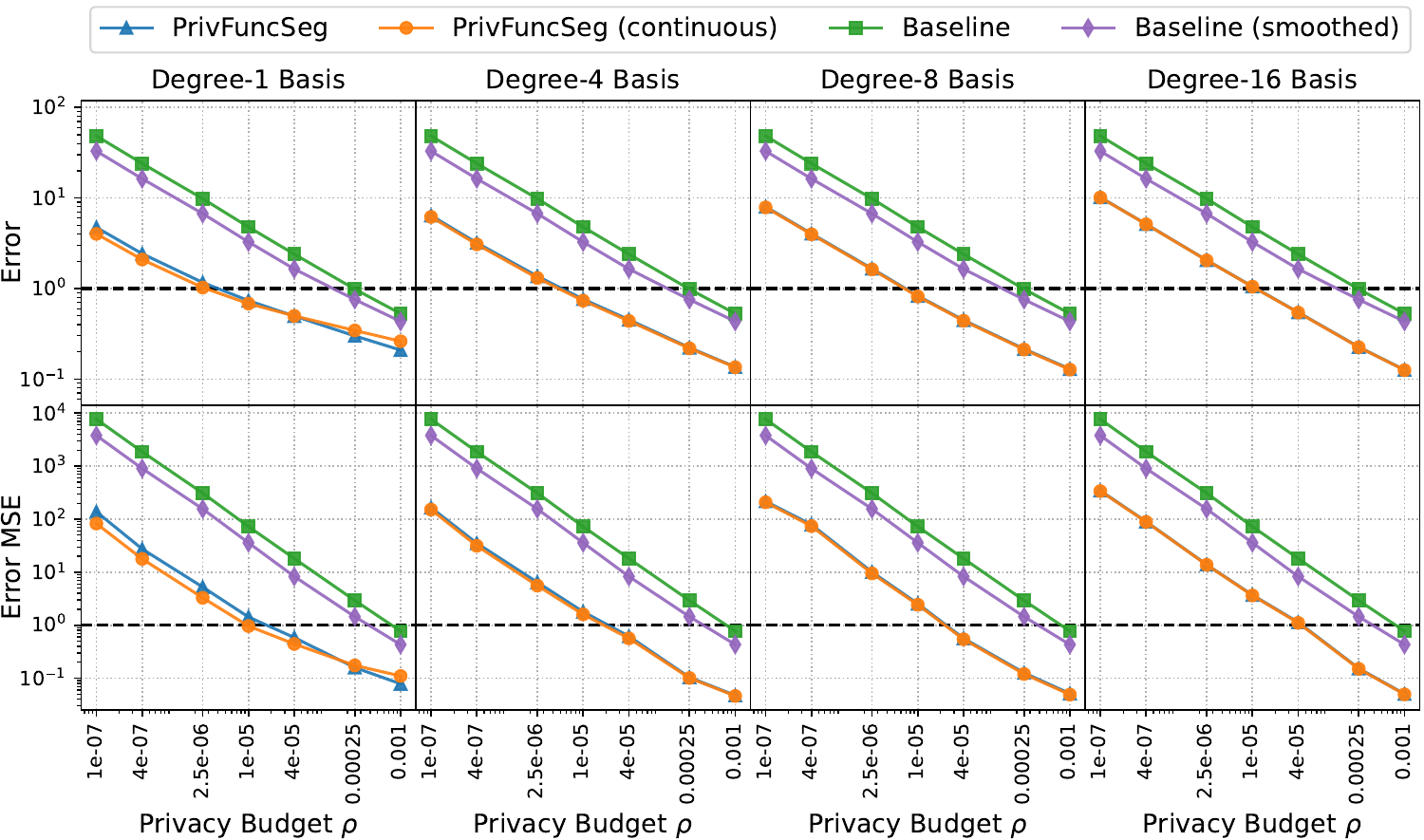}
    \vspace{-4pt}
	\caption{GP results on synthetic Gaussian curves}
	\label{fig:Synth-CGP}
\end{figure}

\newpage
\section{Deferred Proofs}
\subsection{Utility of SVT}
 The analysis on SVT below is mostly the same as that in \cite{dong2023universal}, and re-produced here for completeness. 
\begin{proof}  [Proof of Lemma~\ref{lm:svt_util}]
   We will use the one-sided and two-sided Laplace tail bounds: For $Z\sim \mathrm{Lap}(b)$, $\Pr[Z\le b\ln(1/\beta)]\ge 1-\beta/2$ and $\Pr[|Z|\le b\ln(1/\beta)]\ge 1-\beta$.
    
    Let $W$, $Z_{j}\sim\mathrm{Lap}(3K/\varepsilon)$ denote the Laplace random variables in the SVT call. The analysis requires bounding $|W|$ and $Z_j$ for $1\le j \le t$. Notice that SVT halts at $t$ if $Z_{t} + g_{t}(x) \ge \mathcal{T}+W$, i.e. $Z_{t}-W \ge \mathcal{T}-g_{t}(x)$. We have
    $g_{t}(x)-\mathcal{T} \ge 6K\ln(2/\beta)/{\varepsilon}$ by assumption, and 
    \begin{align}
\label{eqn:Wbound_svt}
    \Pr\left[|W| \le \frac{3K}{\varepsilon}\ln(2/\beta)\right] &\ge 1-\beta/2\\
\label{eqn:Ztbound_svt}
    \Pr\left[Z_{t} \ge -\frac{3K}{\varepsilon}\ln(2/\beta)\right]&= \Pr\left[-Z_{t} \le \frac{3K}{\varepsilon}\ln(2/\beta)\right] \ge 1-\beta/4\\
\nonumber
    \Pr\left[Z_{t}-W \ge -(g_{t}(x)-\mathcal{T})\right]&\ge \Pr\left[Z_{t}-|W| \ge -\frac{6K}{\varepsilon}\ln(2/\beta)\right] \ge 1-3\beta/4.
     \end{align}
     Thus, SVT halts at or before $t$ with probability $1-3\beta/4$ (by a union bound over \eqref{eqn:Wbound_svt}, \eqref{eqn:Ztbound_svt}). We assume this holds in the remaining analysis. 
     $\Pr[Z_j > \frac{3K}{\varepsilon}\ln(2t/\beta)]\le \beta/(4t)$ for each $1\le j \le t$. Then 
     \begin{equation}
     \label{eqn:Zmaxbound_svt}
     Z_{\bar{k}}\le \max_{1\le j\le t} Z_j \le \frac{3K}{\varepsilon}\ln(2t/\beta)
     \end{equation}
     where the last inequality holds with probability at least $1-\beta/4$. Thus, with probability at least $1-\beta$ (by union bound over \eqref{eqn:Wbound_svt}, \eqref{eqn:Ztbound_svt}, \eqref{eqn:Zmaxbound_svt}),
     \[
     g_{\bar{k}}(x)\ge \mathcal{T}+W-Z_{\bar{k}}\ge \mathcal{T}-|W|- Z_{\bar{k}} \ge \mathcal{T}-\frac{3K}{\varepsilon}\ln(2/\beta)-\frac{3K}{\varepsilon}\ln(2t/\beta).
     \]
\end{proof}

\subsection{Lipschitzness of finite-dimensional function space projection}
\begin{proof} [Proof of Lemma~\ref{lm:fidim_lipschitz}]
First, the space $L^2(I)$ of square-integrable functions is a Hilbert space whose inner product induces the $L^2$ metric $\dist_2(f,g):=\sqrt{\int_{I}(f(t)-g(t))^2dt}$.
Let $U\subseteq L^2(I)$ be any subset. 
Next, the span of $m$ linearly independent functions in $L^2(I)$ forms a subspace; let $U_f$ be any such subspace. Then for any pair $q,q'\in U$, by the Hilbert projection theorem\footnote{Technically speaking,  appealing to the Hilbert projection theorem is not necessary, since the constructions in Appendix~\ref{sec:app_ls} provide the projections we need.} there are $f_{a(q)}, f_{a(q')} \in U_f$ such that $f_{a(q)}=\argmin_{z\in U_f}\dist_2(q, z)$ and $f_{a(q')}=\argmin_{z\in U_f}\dist_2(q', z)$; moreover, $q-f_{a(q)}, q'-f_{a(q')}$ are orthogonal to $U_f$. Denote $c_{q,q'}:=f_{a(q)}-f_{a(q')}$, $c_{q,q'}^{\perp}:= (q-f_{a(q)})-(q'-f_{a(q')})$, then $c_{q,q'}\in U_f$ and $c_{q,q'}^{\perp}$ is orthogonal to $U_f$.
\begin{align*}
\dist\nolimits_2(q,q')^2 &=\dist\nolimits_2\left(f_{a(q)}+q-f_{a(q)}, f_{a(q')}+q'-f_{a(q')}\right)^2\\
&= \int_{I} \left(f_{a(q)}(t)+q(t)-f_{a(q)}(t) - (f_{a(q')}(t)+q'(t)-f_{a(q')}(t))\right)^2dt\\
&=\int_{I} (c_{q,q'}(t)^2+c_{q,q'}^{\perp}(t)^2 + 2c_{q,q'}(t)c_{q,q'}^{\perp}(t))dt\\
&= \int_{I} c_{q,q'}(t)^2 dt + \underbrace{\int_{I} c_{q,q'}^{\perp}(t)^2 dt}_{\ge 0} + 2\underbrace{\int_{I}c_{q,q'}(t)c_{q,q'}^{\perp}(t))dt}_{0}\\
&\ge \int_{I} c_{q,q'}(t)^2 dt = \dist\nolimits_2(f_{a(q)},f_{a(q')})^2.
\end{align*}

\end{proof}

\subsection{Utility of $\mathrm{Project}$-$\mathrm{and}$-$\mathrm{Privatize}$}
\begin{lemma} [Lemma~\ref{lm:finapprox_priv_distsquared}]
    Let $U_f$ be the subspace given by the span of $\{\phi_j(\cdot)\}_{j\in [m]}$. For $q\in U$, let $f_a\in U_f$ be a projection of $q$ in $U_f$. For $\tilde{a}:=a+\frac{1}{b}\Sigma^{1/2}Z$ where $Z$ is a random vector with mean zero, we have
        \[
    \mathbb{E}[\dt(q,f_{\tilde{a}})^2] = \dt(q,f_{a})^2 + \frac{1}{b^2}\mathbb{E}[\|Z\|^2].
    \]
\end{lemma}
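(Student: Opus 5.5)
The plan is a Pythagorean decomposition in the Hilbert space $L^2(I)$. We split the error into a component orthogonal to $U_f$ and a component inside $U_f$, and then take expectations.

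First, write
\[
q-f_{\tilde a}=(q-f_a)+(f_a-f_{\tilde a}).
\]
Since $f_a$ is the least squares projection of $q$ onto $U_f$, the Hilbert projection theorem (as used in the proof of Lemma~\ref{lm:fidim_lipschitz}) gives that $q-f_a$ is orthogonal to $U_f$. The difference $f_a-f_{\tilde a}=f_{a-\tilde a}$ lies in $U_f$ by linearity of $c\mapsto f_c$. Expanding the square of the $L^2$ norm, the cross term $2\int_I (q-f_a)(t)\cdot f_{a-\tilde a}(t)\,dt$ vanishes identically, for every realization of $Z$ and not only in expectation. For vector-valued functions the integrand uses the Euclidean inner product of $\mathbb{R}^n$, and the same orthogonality applies. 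This yields the pointwise identity
\[
\dt(q,f_{\tilde a})^2=\dt(q,f_a)^2+\dt(f_a,f_{\tilde a})^2 .
\]

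Second, compute $\dt(f_a,f_{\tilde a})$ using identity \eqref{eqn:func_dist2l2_Rd}, or its block-matrix analogue from Section~\ref{sec:ext_vec_func}. This gives
\[
\dt(f_a,f_{\tilde a})^2=(\tilde a-a)^{\mathsf T}\Sigma^{-1}(\tilde a-a)=\tfrac{1}{b^2}Z^{\mathsf T}\Sigma^{1/2}\Sigma^{-1}\Sigma^{1/2}Z=\tfrac{1}{b^2}\|Z\|^2,
\]
which is exactly the computation already done in the proof of Lemma~\ref{lm:util_funcRm_slap}. Taking expectations of the pointwise identity then gives the claim, since $\dt(q,f_a)$ is deterministic.

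The only delicate point is to justify that the cross term is zero rather than merely zero in expectation. Here the orthogonality from the projection theorem does the work, so the mean-zero hypothesis on $Z$ is not even needed. If one preferred to avoid the orthogonality argument, the fallback would be to expand the cross term as $-\tfrac{2}{b}\langle q-f_a, f_{\Sigma^{1/2}Z}\rangle$, which is linear in $Z$, and kill it by $\mathbb{E}[Z]=0$. Beyond that, one should note that $\mathbb{E}\|Z\|^2$ is finite, for example $m(m+1)$ or $m$ by Corollary~\ref{cor:expZ_slap_gauss}, so that the expectations are well defined.
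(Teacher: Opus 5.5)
Your proof is correct, and it differs from the paper's in where the cross term disappears. The paper expands $\dt(q,f_{\tilde a})^2$ explicitly in coordinates. It does this separately for each of the three cases: real basis with real coefficients, vector-valued basis, and vector-valued coefficients. Each expansion gives $\dt(q,f_a)^2$, a term linear in $Z$ (namely $-\frac{2}{b}\langle q-f_a, f_{\Sigma^{1/2}Z}\rangle$), and $\frac{1}{b^2}Z^{\mathsf T}\Sigma^{1/2}\Sigma^{-1}\Sigma^{1/2}Z$. The paper then kills the linear term only in expectation, using $\mathbb{E}[Z]=0$; this is exactly your ``fallback.'' You instead use orthogonality of the residual $q-f_a$ to $U_f$ to make the cross term vanish for every realization. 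Your argument is coordinate-free, so it covers all three cases at once, since each is just an inner-product space.

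Your route gives a single short argument and shows the mean-zero hypothesis is unnecessary. It also gives an almost-sure identity $\dt(q,f_{\tilde a})^2=\dt(q,f_a)^2+\frac{1}{b^2}\|Z\|^2$. That identity yields high-probability bounds $\dt(q,f_{\tilde a})\le\sqrt{\dt(q,f_a)^2+\|Z\|^2/b^2}$, slightly sharper than the triangle-inequality bound behind $\mathrm{ErrBound}$. The paper's route, in turn, never uses that $f_a$ is the least-squares projection. So the identity in expectation holds there for any $f_a\in U_f$, as long as the noise has mean zero. The two proofs thus rely on complementary hypotheses of the statement.
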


\begin{proof}[Proof of Lemma~\ref{lm:finapprox_priv_distsquared}]
We need to discuss all three cases, though the derivations are similar. \\
1. Real-valued coefficients and basis functions.
\begin{alignat*}{2}
        \dt(q,f_{\tilde{a}})^2 &= &&\int_{I} \left(q(t)-\sum_{j\in [m]}\tilde{a}_j\phi_j(t)\right)^2 dt
        =\int_{I}\left(q(t)-\left(a + \frac{1}{b}\Sigma^{\frac{1}{2}} Z\right)^{\mathsf{T}}\begin{bmatrix}
            \phi_1(t)\\
            \vdots\\
            \phi_m(t)
        \end{bmatrix}\right)^2 dt\\
        &= &&\int_{I} \left(q(t)-\begin{bmatrix}
            a_1 & \dotsb & a_m
        \end{bmatrix}\begin{bmatrix}
            \phi_1(t)\\
            \vdots\\
            \phi_m(t)
        \end{bmatrix} -\frac{1}{b}\begin{bmatrix}
            z_1 & \dotsb z_m
        \end{bmatrix}\Sigma^{\frac{1}{2}}\begin{bmatrix}
            \phi_1(t)\\
            \vdots\\
            \phi_m(t)
        \end{bmatrix} \right)^2 dt\\
        &= &&\int_{I} (q(t)-\sum_{j\in [m]} a_j\phi_j(t))^2 dt - \int_{I} \frac{2}{b} \underbrace{\left(q(t)-\sum_{j\in [m]} a_j\phi_j(t)\right)\begin{bmatrix}
            \phi_1(t)& \dotsb & \phi_m(t)
        \end{bmatrix}\Sigma^{\frac{1}{2}} }_{=:
        \begin{bmatrix}
            v_1(t) & \dotsb & v_m(t)
        \end{bmatrix}
        }\begin{bmatrix}
                z_1\\
                \vdots\\
                z_m
            \end{bmatrix} dt \\
            &\;\;\; &&+ \int_{I}\frac{1}{b^2}\underbrace{\begin{bmatrix}
            z_1 & \dotsb z_m
        \end{bmatrix}\Sigma^{\frac{1}{2}}}_{\in \mathbb{R}^m}\begin{bmatrix}
                \phi_1(t)\\
                \vdots\\
                \phi_m(t)
            \end{bmatrix} \begin{bmatrix}
            \phi_1(t)& \dotsb & \phi_m(t)
        \end{bmatrix}\underbrace{\Sigma^{\frac{1}{2}}\begin{bmatrix}
            z_1 \\
            \vdots\\
            z_m
        \end{bmatrix} }_{\in\mathbb{R}^m}dt\\
        &= &&\dt(q,f_a)^2 - \frac{2}{b}        \begin{bmatrix}
            \int_{I} v_1(t) dt & \dotsb & \int_{I} v_m(t) dt
        \end{bmatrix}\begin{bmatrix}
                z_1\\
                \vdots\\
                z_m
            \end{bmatrix} \\
        &\;\;&&+\frac{1}{b^2}\begin{bmatrix}
            z_1 & \dotsb z_m
        \end{bmatrix}\Sigma^{\frac{1}{2}} \underbrace{\begin{bmatrix}
            \int_{I} \phi_1(t)^2 dt & \dotsb & \int_{I} \phi_1\phi_m(t) dt\\
            \vdots & & \vdots\\
            \int_{I} \phi_m(t)\phi_1(t)dt & \dotsb & \int_{I} \phi_m(t)\phi_m(t)dt
        \end{bmatrix} }_{\Sigma^{-1}, \text{\;by Fact\;}\ref{ft:integ_linear_mat}}\Sigma^{\frac{1}{2}}\begin{bmatrix}
            z_1 \\
            \vdots\\
            z_m
        \end{bmatrix} \\
        &=&&\dt(q,f_a)^2 - \frac{2}{b}        \begin{bmatrix}
            \int_{I} v_1(t) dt & \dotsb & \int_{I} v_m(t) dt
        \end{bmatrix}\begin{bmatrix}
                z_1\\
                \vdots\\
                z_m
            \end{bmatrix} + \frac{1}{b^2} \begin{bmatrix}
            z_1 &\dotsb &
            z_m
        \end{bmatrix}\begin{bmatrix}
            z_1 \\
            \vdots\\
            z_m
        \end{bmatrix},\\
\mathbb{E}[\dt(q,f_{\tilde{a}})^2] &= &&\dt(q,f_a)^2 -\frac{2}{b}\begin{bmatrix}
            \int_{I} v_1(t) dt & \dotsb & \int_{I} v_m(t) dt
        \end{bmatrix}\underbrace{\mathbb{E}\left(\begin{bmatrix}
                z_1\\
                \vdots\\
                z_m
            \end{bmatrix}\right)}_{0}+\frac{1}{b^2}\mathbb{E}\left[\|Z\|^2\right]\\
            &= &&\dt(q,f_a)^2 +\frac{1}{b^2}\mathbb{E}[\|Z\|^2].
    \end{alignat*}
    \\
2. Real-valued coefficients, vector-valued basis. Here $\phi_j = [\phi_{j,1},\dotsb,\phi_{j,n}]^{\mathsf{T}}$ for $j\in [m]$. Let $\Phi(t):=\begin{bmatrix}
         \phi_{1,1}(t) & \dotsb & \phi_{m,1}(t) \\
         \vdots & & \vdots\\
         \phi_{1,n}(t) & \dotsb & \phi_{m,n}(t)
     \end{bmatrix}$. Note that $f_a(t)=\sum_{j\in [m]} a_j [\phi_{j,1}(t),\dotsb,\phi_{j,n}(t)]^{\mathsf{T}} = \Phi(t) a$.
\begin{alignat*}{2}
     \dt(q,f_{\tilde{a}})^2 &=&& \int_{I} \left\|\begin{bmatrix}
         q_1(t) \\
         \vdots\\
         q_n(t)
     \end{bmatrix}-\sum_{j\in [m]}\tilde{a}_j\begin{bmatrix}
         \phi_1(t) \\
         \vdots\\
         \phi_m(t)
     \end{bmatrix}\right\|^2 dt\\
        &=&& \int_{I} \left\|\begin{bmatrix}
         q_1(t) \\
         \vdots\\
         q_n(t)
     \end{bmatrix}-\begin{bmatrix}
         \phi_{1,1}(t) & \dotsb & \phi_{m,1}(t) \\
         \vdots & & \vdots\\
         \phi_{1,n}(t) & \dotsb & \phi_{m,n}(t)
     \end{bmatrix}(a+\frac{1}{b}\Sigma^{\frac{1}{2}}Z)\right\|^2 dt\\
     &=&& \int_{I} \left\|\begin{bmatrix}
         q_1(t) \\
         \vdots\\
         q_n(t)
     \end{bmatrix}-\Phi(t) a\right\|^2 dt + \frac{1}{b^2}\int_I \left\|\Phi(t)\Sigma^{\frac{1}{2}}Z\right\|^2 dt - 2\frac{1}{b}\int_I \underbrace{ (q(t)-\Phi(t)a)^{\mathsf{T}} \Phi(t) \Sigma^{\frac{1}{2}}}_{=:[v_1(t),\dots, v_m(t)]} Zdt \\
     &=&& \dt(q,f_a)^2+\frac{1}{b^2}\int_I Z^{\mathsf{T}}\Sigma^{\frac{1}{2}}\Phi(t)^{\mathsf{T}}\Phi(t)\Sigma^{\frac{1}{2}} Z dt - \frac{2}{b} \begin{bmatrix}
            \int_{I} v_1(t) dt & \dotsb & \int_{I} v_m(t) dt
        \end{bmatrix}[z_1,\dots,z_m]^{\mathsf{T}}\\
        &=&& \dt(q,f_a)^2+\frac{1}{b^2}\int_I \underbrace{Z^{\mathsf{T}}\Sigma^{\frac{1}{2}}}_{\in \mathbb{R}^m}\begin{bmatrix}
            \sum_{r\in[n]}\phi_{1,r}(t)^2 & \dotsb & \sum_{r\in[n]}\phi_{1,r}(t)\phi_{m,r}(t)\\
            \vdots & \dotsb  &\vdots\\
            \sum_{r\in[n]}\phi_{m,r}(t)\phi_{1,r}(t) & \dotsb & \sum_{r\in[n]}\phi_{m,r}(t)^2
        \end{bmatrix}\underbrace{\Sigma^{\frac{1}{2}} Z}_{\in \mathbb{R}^m} dt \\
        &\;\;\; &&- \frac{2}{b} \begin{bmatrix}
            \int_{I} v_1(t) dt \\
            \vdots \\
             \int_{I} v_m(t) dt
        \end{bmatrix}^{\mathsf{T}}\begin{bmatrix}
            z_1\\
            \vdots\\
            z_m
        \end{bmatrix}\\
        &=&& \dt(q,f_a)^2+\frac{1}{b^2} {Z^{\mathsf{T}}\Sigma^{\frac{1}{2}}}\underbrace{\begin{bmatrix}
            \int_I\sum_{r\in[n]}\phi_{1,r}(t)^2dt & \dotsb & \int_I\sum_{r\in[n]}\phi_{1,r}(t)\phi_{m,r}(t)dt\\
            \vdots & \dotsb  &\vdots\\
            \int_I\sum_{r\in[n]}\phi_{m,r}(t)\phi_{1,r}(t)dt & \dotsb & \int_I\sum_{r\in[n]}\phi_{m,r}(t)^2dt
        \end{bmatrix}}_{\Sigma^{-1}, \text{\;by Fact\;}\ref{ft:integ_linear_mat}}{\Sigma^{\frac{1}{2}} Z} \\
        &\;\;\; &&- \frac{2}{b} \begin{bmatrix}
            \int_{I} v_1(t) dt \\
            \vdots \\
             \int_{I} v_m(t) dt
        \end{bmatrix}^{\mathsf{T}}\begin{bmatrix}
            z_1\\
            \vdots\\
            z_m
        \end{bmatrix}\\
        &=&& \dt(q,f_a)^2+\frac{1}{b^2}\|Z\|-\frac{2}{b}\begin{bmatrix}
            \int_{I} v_1(t) dt & \dotsb & \int_{I} v_m(t) dt
        \end{bmatrix}\begin{bmatrix}
            z_1\\
            \vdots\\
            z_m
        \end{bmatrix},\\
     \mathbb{E}[\dt(q,f_{\tilde{a}})^2]&=&&\dt(q,f_a)^2+\frac{1}{b^2}\mathbb{E}[\|Z\|]-\frac{2}{b}\begin{bmatrix}
            \int_{I} v_1(t) dt & \dotsb & \int_{I} v_m(t) dt
        \end{bmatrix}\underbrace{\mathbb{E}\left[\begin{bmatrix}
            z_1 & \dotsb &z_m
        \end{bmatrix}^{\mathsf{T}}\right]}_{0}.
\end{alignat*}
3. Vector-valued coefficients, real-valued basis. Here $a_j = [a_{j,1},\dotsb,a_{j,n}]$ for $j\in [n]$. Write $a_{:r}=[a_{1,r},\dotsb,a_{m,r}]^{\mathsf{T}}$ and $z_{:,r}=[z_{1,r},\dotsb,z_{m,r}]^{\mathsf{T}}$ for $r\in [n]$. We have the coefficient vector $a=[a_{:1},\dotsb,a_{:n}]^{\mathsf{T}}$, noise vector $Z=[z_{:1},\dotsb,z_{:n}]^{\mathsf{T}}$, and privatized coefficient vector $\tilde{a}=a+\frac{1}{b}\Sigma^{1/2}Z$, where $\Sigma^{1/2}$ is a diagonal block matrix with $\Sigma_0^{1/2}$ repeated in the diagonal. Let $\Phi(t):=[\phi_1(t),\dotsb,\phi_m(t)]^{\mathsf{T}}$.
\begin{align*}
    \dt(q,f_{\tilde{a}})&=\int_{I} \left\|\begin{bmatrix}
         q_1(t) \\
         \vdots\\
         q_n(t)
     \end{bmatrix}-\begin{bmatrix}
         \sum_{j\in [m]}\tilde{a}_{j,1}\phi_j(t) \\
         \vdots\\
         \sum_{j\in [m]}\tilde{a}_{j,n}\phi_j(t)
     \end{bmatrix}\right\|^2 dt \\
     &= \int_{I} \left\|\begin{bmatrix}
         q_1(t) \\
         \vdots\\
         q_n(t)
     \end{bmatrix}-\begin{bmatrix}
         [\phi_1(t),\dotsb,\phi_m(t)]\cdot (a_{:1}+\frac{1}{b}\Sigma_0^{1/2}z_{:1}) \\
         \vdots\\
         [\phi_1(t),\dotsb,\phi_m(t)]\cdot (a_{:n}+\frac{1}{b}\Sigma_0^{1/2}z_{:n})
     \end{bmatrix}\right\|^2 dt\\
     &= \int_{I} \left\|\begin{bmatrix}
         q_1(t) \\
         \vdots\\
         q_n(t)
     \end{bmatrix}-\begin{bmatrix}
         \Phi(t)^{\mathsf{T}} a_{:1} \\
         \vdots\\
         \Phi(t)^{\mathsf{T}} a_{:n}
     \end{bmatrix}\right\|^2 dt + \frac{1}{b^2}\int_{I} \left\|\begin{bmatrix}
         \Phi(t)^{\mathsf{T}} \Sigma_0^{1/2}z_{:1} \\
         \vdots\\
         \Phi(t)^{\mathsf{T}} \Sigma_0^{1/2}z_{:n}
     \end{bmatrix}\right\|^2 dt\\
     &{\;\;\;\;}- \frac{2}{b} \int_I [q_1(t),\dotsb,q_m(t)]\begin{bmatrix}
         \Phi(t)^{\mathsf{T}}\cdot \Sigma_0^{1/2}z_{:1}\\
         \vdots\\
        \Phi(t)^{\mathsf{T}}\cdot \Sigma_0^{1/2}z_{:n}
     \end{bmatrix}\\
     &= \dt(q,f_a)^2 + \frac{1}{b^2}\int_I \left[z_{:1}^{\mathsf{T}} \Sigma_0^{1/2}\Phi(t) ,\dotsb,z_{:n}^{\mathsf{T}} \Sigma_0^{1/2}\Phi(t)\right]\begin{bmatrix}
         \Phi(t)^{\mathsf{T}} \Sigma_0^{1/2}z_{:1} \\
         \vdots\\
         \Phi(t)^{\mathsf{T}} \Sigma_0^{1/2}z_{:n}
     \end{bmatrix}dt \\
     &{\;\;\;\;}-  \frac{2}{b} \int_I \underbrace{[q_1(t),\dotsb,q_m(t)]\begin{bmatrix}
         \Phi(t)^{\mathsf{T}} & &\\
         & \ddots &\\
         & & \Phi(t)^{\mathsf{T}}
     \end{bmatrix} \Sigma}_{=:[v_{1,1}(t),\dotsb,v_{m,1}(t),\dotsb,v_{m,n}(t)]} Z dt\\
     &= \dt(q,f_a)^2 + \frac{1}{b^2}\sum_{r\in [n]}\int_I \left(z_{:r}^{\mathsf{T}} \Sigma_0^{1/2}\Phi(t)\Phi(t)^{\mathsf{T}} \Sigma_0^{1/2}z_{:r} \right) dt \\
     &{\;\;\;\;}-\frac{2}{b} \begin{bmatrix}
         \int_I v_{1,1}(t)dt & \dotsb & \int_I v_{m,n}(t)
     \end{bmatrix} [z_{1,1},\dotsb,z_{m,n}]^{\mathsf{T}}\\
     \mathbb{E}[\dt(q,f_{\tilde{a}})&= \dt(q,f_a)^2 +\frac{1}{b^2}\mathbb{E}[\|Z\|^2] - \frac{2}{b} \begin{bmatrix}
         \int_I v_{1,1}(t)dt & \dotsb & \int_I v_{m,n}(t)
     \end{bmatrix} \underbrace{\mathbb{E}\left[[z_{1,1},\dotsb,z_{m,n}]^{\mathsf{T}}\right]}_{0}.
\end{align*}

\end{proof}

\subsection{Utility of $\mathrm{ReduceSeg}$ (Algorithm~\ref{alg:algo_redseg})}
\begin{proof} [Proof of Lemma~\ref{lm:util_redseg}]
    Let $\hat{t}_s, \hat{t}_e$ denote the end points of the running interval in $\mathrm{ReduceSeg}$. There are at most $2^{k_1}$ calls to $\mathrm{ReduceSeg}$, where at most $2^{k_1}$ $\mathrm{Lap}(1)$ random variables are drawn, which are simultaneously lower bounded by $-\ln(2^{k_1}/\beta)$ with probability at least $1-\beta/2$. Assume this holds in the remaining analysis. In each call, we have
    \begin{equation}
    \label{eqn:redseg_subdt_bound}
    \dt(f_{[\hat{t}_s,\hat{t}_e]},q\cdot\mathbb{1}_{[\hat{t}_s,\hat{t}_e]}) = \mathrm{err}-\frac{2^l}{\varepsilon'}Z \le \mathrm{err}+\frac{2^l}{\varepsilon'}\ln(2^{k_1}/\beta),
    \end{equation}
    where $f_{[\hat{t}_s,\hat{t}_e]}$ is computed using $S'$ (i.e. half of the sub-intervals). Thus, if line $9$ is true, then the approximation error $\dt(f_{[\hat{t}_s,\hat{t}_e]},q\cdot\mathbb{1}_{[\hat{t}_s,t_e]})$ is dominated by $\frac{2^{k-1}emn}{2(e-1)B}=|[\hat{t}_s,\hat{t}_e)\cap S'|\frac{emn}{2(e-1)B}$ after removing half of the sub-intervals.
    
    Assume the number of breakpoints is reduced at least once (i.e. at level $l=1$); otherwise $\hat{S}=S$ and there would be nothing to prove. Starting from $C=\emptyset$, add the pair $(\hat{t}_s,\hat{t}_e)$ to $C$ whenever line $9$ is false. Let $\mathrm{par}(\hat{t}_s,\hat{t}_e)$ denote the sub-interval at its parent call, so $[\hat{t}_s,\hat{t}_e]\subset \mathrm{par}(\hat{t}_s,\hat{t}_e)$; there were twice as many sub-intervals in the call on $\mathrm{par}(\hat{t}_s,\hat{t}_e)$, during which line $9$ was true. 
    We have 
    \begin{align*}
\dt(f_{[\hat{t}_s,\hat{t}_e]},q\cdot\mathbb{1}_{[\hat{t}_s,\hat{t}_e]}) &\le \dt(f_{\mathrm{par}(\hat{t}_s,\hat{t}_e)},q\cdot\mathbb{1}_{\mathrm{par}(\hat{t}_s,\hat{t}_e)}) \\
    &\le 2|[\hat{t}_s,\hat{t}_e)\cap \hat{S}|\frac{emn}{2(e-1)B}=|[\hat{t}_s,\hat{t}_e)\cap \hat{S}|\frac{emn}{(e-1)B},
    \end{align*}
    since the non-private approximation on $\mathrm{par}(\hat{t}_s,\hat{t}_e)$ has error bounded by the number of its sub-intervals times $\frac{emn}{2(e-1)B}$,
    and $B\ge \varepsilon/2$ is non-increasing. When the recursive calls terminate, we have that the original interval is equal to the union of sub-intervals defined by the pairs in $C$. Let $\pj_{\hat{S}}(q)$ denote the non-private best approximation of $q$ using intervals in $\hat{S}$. We have
    \begin{align*}
    \dt(\pj\nolimits_{\hat{S}}(q),q) &\le \sum_{(\hat{t}_s,\hat{t}_e)\in C} \dt(\pj\nolimits_{\hat{S}}(q\cdot \mathbb{1}_{[\hat{t}_s,\hat{t}_e]}),q\cdot  \mathbb{1}_{[\hat{t}_s,\hat{t}_e]})\\
    &\le \sum_{(\hat{t}_s,\hat{t}_e)\in C} \frac{|[\hat{t}_s,\hat{t}_e)\cap \hat{S}|emn}{(e-1)\varepsilon/2} = \frac{(|\hat{S}|-1)emn}{(e-1)\varepsilon/2},\\
    \dt(\tilde{q}_{\hat{S}}, q) &\le \dt(\pj\nolimits_{\hat{S}}(q),q)+\dt(\tilde{q}_{\hat{S}},\pj\nolimits_{\hat{S}}(q))\\
    &\le \frac{(|\hat{S}|-1)emn}{(e-1)\varepsilon/2} + \frac{\eta(\beta;(|\hat{S}|-1)m,1)}{\varepsilon/2} \;(\text{w.p. } 1-\beta)\\
    &\le \frac{(|\hat{S}|-1)emn}{(e-1)\varepsilon/2} +\frac{e}{(e-1)\varepsilon/2}\left(|\hat{S}|-1)mn+\ln(1/\beta)\right)\\
    &= \frac{e}{(e-1)\varepsilon/2} {(2|\hat{S}|-2)mn} +\frac{e}{(e-1)\varepsilon/2}\ln(1/\beta)\\
    &\le \frac{\eta(\beta;(|S|-1)mn,1)}{\varepsilon/2} \;\;\;\;\;\;\;\;\;\;\;\;\;\;\;\;(\text{since }2|\hat{S}|\le |S|) \\
    &\le \mathrm{ErrBound}(\tilde{q}_S;q,\varepsilon/2,\beta,1)). \lablast \label{eqn:redseg_errbound}
    \end{align*}
Combining \eqref{eqn:redseg_subdt_bound} and \eqref{eqn:redseg_errbound} above, we have with probability $1-\frac{3}{2}\beta$, $\dt(\tilde{q}_{\hat{S}}, q)\le \mathrm{ErrBound}(\tilde{q}_S;q,\varepsilon/2,\beta,1))$.
\end{proof}

\subsection{Utility of $\mathrm{ReduceSegCGP}$ (Algorithm~\ref{alg:algo_redseg_cgp})}
\begin{proof} [Proof of Lemma~\ref{lm:util_redseg_cgp}]
The proof proceeds similarly to that for Lemma~\ref{lm:util_redseg}.
 The $\mathcal{N}(0,1)$ random variables in the (at most) $2^{k_1}$ calls are simultaneously lower bounded by $-\sqrt{2\ln(2^{k_1}/\beta)}$ with probability at least $1-\beta/2$. Assume this holds. In each call, we have
    \[
    \dt(f_{[\hat{t}_s,\hat{t}_e]},q\cdot\mathbb{1}_{[\hat{t}_s,\hat{t}_e]}) = \mathrm{err}-\frac{\sqrt{2^l}}{\sqrt{2\rho'}}Z \le \mathrm{err}+\frac{\sqrt{2^l}}{\sqrt{2\rho'}}\sqrt{2\ln(2^{k_1}/\beta)},
    \]
    where $f_{[\hat{t}_s,\hat{t}_e]}$ is computed using $S'$. If line $9$ is true, then the approximation error $\dt(f_{[\hat{t}_s,\hat{t}_e]},q\cdot\mathbb{1}_{[\hat{t}_s,t_e]})$ is dominated by $\frac{(\sqrt{2}-1)\sqrt{2^{k-1}mn}}{\sqrt{2}\sqrt{2B}}=\frac{(\sqrt{2}-1)}{\sqrt{2}\sqrt{2B}}\sqrt{|[\hat{t}_s,\hat{t}_e)\cap {S'}|mn}$ after removing half of the sub-intervals .
    
    Assume the number of sub-intervals is reduced at least once. Starting from $C=\emptyset$, add the pair $(\hat{t}_s,\hat{t}_e)$ to $C$ whenever line $9$ is false. Let $\mathrm{par}(\hat{t}_s,\hat{t}_e)$ denote the sub-interval at its parent call.
    Note that we have \[\dt(f_{[\hat{t}_s,\hat{t}_e]},q\cdot\mathbb{1}_{[\hat{t}_s,\hat{t}_e]})\le \dt(f_{\mathrm{par}(\hat{t}_s,\hat{t}_e)},q\cdot\mathbb{1}_{\mathrm{par}(\hat{t}_s,\hat{t}_e)}) \le \frac{(\sqrt{2}-1)}{\sqrt{2}\sqrt{2B}}\sqrt{2|[\hat{t}_s,\hat{t}_e)\cap \hat{S}|mn},\] 
    since the non-private approximation on $\mathrm{par}(\hat{t}_s,\hat{t}_e))$ has error bounded by the square root of the number of its sub-intervals times $\frac{(\sqrt{2}-1)\sqrt{mn}}{\sqrt{2}\sqrt{2B}}$, 
    and $B\ge \rho/2$ is non-increasing. When the recursive calls terminate, we have that the original interval is equal to the union of sub-intervals defined by the pairs in $C$. Let $\pj_{\hat{S}}(q)$ denote the non-private best approximation of $q$ using intervals in $\hat{S}$. Then
    \begin{align*}
    \dt(\pj\nolimits_{\hat{S}}(q),q)^2 &= \sum_{(\hat{t}_s,\hat{t}_e)\in C} \dt(\pj\nolimits_{\hat{S}}(q\cdot \mathbb{1}_{[\hat{t}_s,\hat{t}_e]}),q\cdot  \mathbb{1}_{[\hat{t}_s,\hat{t}_e]})^2\\
    &\le \sum_{(\hat{t}_s,\hat{t}_e)\in C} \frac{(\sqrt{2}-1)^2}{{2\rho}}(2|[\hat{t}_s,\hat{t}_e)\cap \hat{S}|mn) = \frac{(\sqrt{2}-1)^2}{{\rho}}{(|\hat{S}|-1)mn},\\
    \dt(\tilde{q}_{\hat{S}}, q) &\le \dt(\pj\nolimits_{\hat{S}}(q),q)+\dt(\tilde{q}_{\hat{S}},\pj\nolimits_{\hat{S}}(q))\\
    &\le \frac{(\sqrt{2}-1)}{\sqrt{\rho}}\sqrt{(|\hat{S}|-1)mn} + \frac{\eta(\beta;(|\hat{S}|-1)mn,2)}{\sqrt{\rho}} \;\;\;(\text{with prob. } 1-\beta)\\
    &= \frac{(\sqrt{2}-1)}{\sqrt{\rho}}\sqrt{(|\hat{S}|-1)mn} +\frac{1}{\sqrt{\rho}}\underbrace{\sqrt{(|\hat{S}|-1)mn+2\sqrt{(|\hat{S}|-1)mn\ln(1/\beta)}+2\ln(1/\beta)}}_{\le \sqrt{\left(\sqrt{(|\hat{S}|-1)mn}+\sqrt{2\ln(1/\beta)}\right)^2}}\\
    &\le \frac{1}{\sqrt{\rho}}\left((\sqrt{2}-1)\sqrt{(|\hat{S}|-1)mn}+\left(\sqrt{(|\hat{S}|-1)mn}+\sqrt{2\ln(1/\beta)}\right)\right)\\
    &= \frac{1}{\sqrt{\rho}}\left(\sqrt{2}\sqrt{(|\hat{S}|-1)mn}+\sqrt{\ln(1/\beta)}+(\sqrt{2}-1)\sqrt{\ln(1/\beta)}\right)
    \\
    &= \frac{1}{\sqrt{\rho}}\left({\sqrt{2(|\hat{S}|-1)mn+2\sqrt{2(|\hat{S}|-1)mn\ln(1/\beta)}+\ln(1/\beta)}}+(\sqrt{2}-1)\sqrt{\ln(1/\beta)}\right)\\
    &\le
    \frac{1}{\sqrt{\rho}}\left({\sqrt{(|S|-1)mn+2\sqrt{(|S|-1)mn\ln(1/\beta)}+\ln(1/\beta)}}+(\sqrt{2}-1)\sqrt{\ln(1/\beta)}\right) \;\;\;\\
    &\;\;\;\;(\text{since }2|\hat{S}|\le |S|) \\
    &\le \frac{\eta(\beta;(|S|-1)mn,2)}{\sqrt{\rho}}+\frac{(\sqrt{2}-1)\sqrt{\ln(1/\beta)}}{\sqrt{\rho}}\\
    &\le \mathrm{ErrBound}(\tilde{q}_S;q,\rho/2,\beta,2))+\frac{(\sqrt{2}-1)\sqrt{\ln(1/\beta)}}{\sqrt{\rho}}.
    \end{align*}
\end{proof}

\section{Least squares projection}
\label{sec:app_ls}
\subsection{Real-valued basis functions and real-valued coefficients}
\label{sec:app_realbasis_realcoeff}
We discuss how the least squares {projection} can be computed for a $2$-dimensional $U_f$ spanned by $\{\phi_1(\cdot),\phi_2(\cdot)\}$. The extension to $m$-dimensional spaces for $m\ge 2$ follows analogously. In the least squares projection problem, given a function $q\in U$ we want to find $f_a\in U_f$ such that $\dt(q,f_a)^2=\int_I (q(t)-(a_1t+a_2))^2 dt$ is minimized; i.e., we want to solve the unconstrained program
\[
\text{minimize\;\;} \int_I (q(t)-(a_1\phi_1(t)+a_2\phi_2(t)))^2 dt,\;\; a=(a_1,a_2)\in \mathbb{R}^2.
\]
First, we show that the program is (strictly) convex. Denote the objective function as $F(a_1,a_2):=\int_I (q(t)-(a_1t+a_2))^2 dt$. We have
\begin{align*}
    \pdv{F}{a_1} &= \int_I \pdv{a_1}[q(t)-(a_1\phi_1(t)+a_2\phi_2(t)))^2] dt = \int_I -2\phi_1(t)\left(q(t)-(a_1\phi_1(t)+a_2\phi_2(t))\right) dt\\
    &=2\int_I \left(a_1\phi_1(t)^2+a_2\phi_1(t)\phi_2(t)-q(t)\phi_1(t)\right) dt,\\
    \pdv{F}{a_2} &= \int_I \pdv{a_2}[q(t)-(a_1\phi_1(t)+a_2\phi_2(t)))^2] dt = \int_I -2\phi_2(t)\left(q(t)-(a_1\phi_1(t)+a_2\phi_2(t))\right) dt\\
    &=2\int_I \left(a_2\phi_2(t)^2+a_1\phi_1(t)\phi_2(t)-q(t)\phi_2(t)\right) dt,\\
    \pdv[2]{F}{a_1} &= 2\int_I \phi_1(t)^2 dt,\;\;\;\;\; \pdv[2]{F}{a_2} =2\int_I \phi_2(t)^2 dt\\
    \pdv[2]{F}{a_1}{a_2} &= 2\int_I \phi_1(t)\phi_2(t) dt = \pdv[2]{F}{a_2}{a_1}.
\end{align*}
Therefore, the Hessian matrix is $H_F=\begin{bmatrix}
    \pdv[2]{F}{a_1} & \pdv[2]{F}{a_1}{a_2}\\
    \pdv[2]{F}{a_2}{a_1} & \pdv[2]{F}{a_2} 
\end{bmatrix}=2\Sigma^{-1}$, which is positive definite, and the program is strictly convex. Then, the unique minimizer of $F(a_1,a_2)$ is the point that makes $\pdv{F}{a_1}=0=\pdv{F}{a_2}$. That is,
\begin{align}
\nonumber
&\begin{bmatrix}
    2\int_I \left(a_1\phi_1(t)^2+a_2\phi_1(t)\phi_2(t)-q(t)\phi_1(t)\right) dt \\
    2\int_I \left(a_2\phi_2(t)^2+a_1\phi_1(t)\phi_2(t)-q(t)\phi_2(t)\right) dt
\end{bmatrix} = \begin{bmatrix}
    0\\
    0
\end{bmatrix}\\
\nonumber
\iff &\begin{bmatrix}
    a_1 \int_I \phi_1(t)^2 dt + a_2 \int_I \phi_1(t)\phi_2(t) dt\\
    a_1 \int_I \phi_1(t)\phi_2(t) dt + a_2 \int_I \phi_2(t)^2 dt
\end{bmatrix} = \begin{bmatrix}
    \int_I q(t)\phi_1(t) dt\\
    \int_I q(t)\phi_2(t) dt
\end{bmatrix}\\
\iff &\underbrace{\begin{bmatrix}
    \int_I \phi_1(t)^2 dt & \int_I \phi_1(t)\phi_2(t) dt\\
    \int_I \phi_1(t)\phi_2(t) dt & \int_I \phi_2(t)^2 dt
\end{bmatrix}}_{\Sigma^{-1}}\begin{bmatrix}
    a_1\\
    a_2
\end{bmatrix} = \begin{bmatrix}
    \int_I q(t)\phi_1(t) dt\\
    \int_I q(t)\phi_2(t) dt
\end{bmatrix} 
\iff \begin{bmatrix}
    a_1\\
    a_2
\end{bmatrix} = \Sigma \begin{bmatrix}
    \int_I q(t)\phi_1(t) dt\\
    \int_I q(t)\phi_2(t) dt\end{bmatrix}.
    \label{eqn:l2proj_real}
\end{align}
Let $a^*=\begin{bmatrix}
    a^*_1\\
    a^*_2
\end{bmatrix} := \Sigma \begin{bmatrix}
    \int_I q(t)\phi_1(t) dt\\
    \int_I q(t)\phi_2(t) dt\end{bmatrix}$. Then we can verify that 
    \[\int_I (q(t)-(a^*_1\phi_1(t)+a^*_2\phi_2(t))\phi_1(t)dt = 0 = \int_I (q(t)-(a^*_1\phi_1(t)+a^*_2\phi_2(t))\phi_2(t)dt,\] 
    i.e., $q-(a^*_1\phi_1+a^*_2\phi_2)$ is orthogonal to $U_f$.

\subsection{Vector-valued basis functions and real-valued coefficients}
\label{sec:app_vecbasis_realcoeff}
For vector-valued functions $u,v:I \rightarrow \mathbb{R}^n$, consider with inner product 
\[
\langle u, v\rangle = \int_I u(t)^T v(t) dt = \int_I (u_1(t)v_1(t)+\dotsb+u_n(t)v_n(t)) dt
\]
where $u=(u_1,\dotsb,u_n), v=(v_1,\dotsb, v_n)$. Note that the space containing vector-valued functions $u$ such that $\langle u, u\rangle < \infty$ is complete (i.e. a Hilbert space).

We have vector-valued function $q=(q_1,\dotsb,q_n)\in U$, space $U_f$ spanned by the set of vectors $\{\phi_j(\cdot)\}_{j\in [m]}$ where each $\phi_j = (\phi_{j,1},\dotsb,\phi_{j,n})\in L^2(I)$, and wish to minimize
\[
F(a):=\int_I\left\|\begin{bmatrix}
    q_1(t)\\
    \vdots\\
    q_n(t)
\end{bmatrix}-a_1\begin{bmatrix}
    \phi_{1,1}\\
    \vdots\\
    \phi_{1,n}
\end{bmatrix}+\dotsb a_m\begin{bmatrix}
    \phi_{m,1}\\
    \vdots\\
    \phi_{m,n}
\end{bmatrix}\right\|^2 dt
\]
over $a=(a_1,\dotsb,a_m)\in \mathbb{R}^m$.
\begin{align*}
F(a) &= \int_I \sum_{r=1}^n (q_r(t)-(a_{1}\phi_{1,r}(t)+\dotsb +a_{m}\phi_{m,r}(t)))^2 dt\\
\pdv{F}{a_{j}} &= \int_I \sum_{r=1}^n -2\phi_{j,r}(t)\left(q_r(t)-(a_{1}\phi_{1,r}(t)+\dotsb +a_{m}\phi_{m,r}(t))\right) dt\\
\pdv[2]{F}{a_{j}} &= \int_I \sum_{r=1}^n 2\phi_{j,r}(t)^2 dt,\;\;\;\;\; \pdv[2]{F}{a_{j}}{a_{k}} = \int_I \sum_{r=1}^n 2\phi_{j,r}(t)\phi_{k,r}(t) dt \text{\;\;for\;} j\neq k.
\end{align*}
The Hessian matrix
\begin{align*}
    H_F &= \begin{bmatrix}
    \pdv[2]{F}{a_{1}} & \pdv[2]{F}{a_{1}}{a_{2}} &\dotsb &\pdv[2]{F}{a_{1}}{a_{m}}\\
    \pdv[2]{F}{a_{2}}{a_{1}} & \ddots & & \\
    \vdots & & & \vdots\\
    \pdv[2]{F}{a_{m}}{a_{1}} & \dotsb & & \pdv[2]{F}{a_{m}} 
\end{bmatrix} \\
&= \begin{bmatrix}
    \int_I \sum_{r=1}^n 2\phi_{1,r}(t)^2 dt & \int_I \sum_{r=1}^n 2\phi_{1,r}(t)\phi_{2,r}(t) dt & \dotsb &   \int_I \sum_{r=1}^n 2\phi_{1,r}(t)\phi_{m,r}(t) dt\\
    \int_I \sum_{r=1}^n 2\phi_{2,r}(t)\phi_{1,r}(t) dt & \ddots & & \vdots\\
    \vdots & & &\\
     \int_I \sum_{r=1}^n 2\phi_{m,r}(t)\phi_{1,r}(t) dt & \dotsb & & \int_I \sum_{r=1}^n 2\phi_{m,r}(t)^2 dt 
\end{bmatrix}\\
 &= 2\sum_{r=1}^n \underbrace{\begin{bmatrix}
    \int_I \phi_{1,r}(t)^2 dt & \int_I \phi_{1,r}(t)\phi_{2,r}(t) dt & \dotsb &  \int_I \phi_{1,r}(t)\phi_{m,r}(t) dt\\
    \int_I \phi_{2,r}(t)\phi_{1,r}(t) dt & \ddots & & \vdots\\
    \vdots & & & \\
     \int_I \phi_{m,r}(t)\phi_{1,r}(t) dt & \dotsb & & \int_I \phi_{m,r}(t)^2 dt
\end{bmatrix} }_{{\Sigma^{-1}}^{(r)}}.
\end{align*}
Thus, $H_F=2\Sigma^{-1}$ and is positive definite. For all $j\in [m]$
\begin{align*}
    \pdv{F}{a_{j}} = 0 &\iff \sum_{r=1}^n \int_I \phi_{j,r}(t)\left(a_1\phi_{1,r}(t)+\dotsb+a_m\phi_{m,r}(t) dt\right) = \sum_{r=1}^n\int_I \phi_{j,r}(t)q_r(t)dt \\
    &\iff a_1\sum_{r=1}^n  \int_I \phi_{j,r}(t)\phi_{1,r}(t) dt +\dotsb+ a_m\sum_{r=1}^n  \int_I \phi_{j,r}(t)\phi_{n,r}(t) dt = \sum_{r=1}^n\int_I \phi_{j,r}(t)q_r(t)dt\\
    &\iff \begin{bmatrix}
        \sum_{r=1}^n  \int_I \phi_{j,r}(t)\phi_{1,r}(t) dt & \dotsb & \sum_{r=1}^n  \int_I \phi_{j,r}(t)\phi_{m,r}(t) dt
    \end{bmatrix} \begin{bmatrix}
        a_1\\
        \vdots\\
        a_m
    \end{bmatrix} =\sum_{r=1}^n\int_I \phi_{j,r}(t)q_r(t)dt.
\end{align*}
The minimizer is therefore given by
\begin{align}
\nonumber
    \sum_{r=1}^n \begin{bmatrix}
        \int_I \phi_{1,r}(t)^2 dt &\dotsb & \int_I \phi_{1,r}(t)\phi_{m,r}(t) dt\\
        \vdots & &\\
        \int_I \phi_{m,r}(t)\phi_{1,r}(t) dt &\dotsb & \int_I \phi_{m,r}(t)^2 dt
    \end{bmatrix} 
    \begin{bmatrix}
        a_1\\
        \vdots\\
        a_m
    \end{bmatrix} &=  \begin{bmatrix}
        \sum_{r=1}^n \int_I \phi_{1,r}(t)q_r(t)dt\\
        \vdots\\
        \sum_{r=1}^n \int_I \phi_{m,r}(t)q_r(t)dt
    \end{bmatrix}\\
    \iff  \begin{bmatrix}
        a_1\\
        \vdots\\
        a_m
    \end{bmatrix} &= \Sigma \begin{bmatrix}
        \sum_{r=1}^n \int_I \phi_{1,r}(t)q_r(t)dt\\
        \vdots\\
        \sum_{r=1}^n \int_I \phi_{m,r}(t)q_r(t)dt
    \end{bmatrix}.
    \label{eqn:l2proj_vecbasis}
\end{align}

Let $a^*=\begin{bmatrix}
        a^*_1\\
        \vdots\\
        a^*_m
    \end{bmatrix} := \Sigma \begin{bmatrix}
        \sum_{r=1}^n \int_I \phi_{1,r}(t)q_r(t)dt\\
        \vdots\\
        \sum_{r=1}^n \int_I \phi_{m,r}(t)q_r(t)dt
    \end{bmatrix}$. Then $a^*$ is the unique minimizer of $F(a)$ and 
    \[
    \int_I(q(t)-\sum_{j\in [m]} a^*_j \phi_{j}(t))^T \begin{bmatrix}
        \phi_{l,1}(t) \\
        \vdots \\
        \phi_{l,n}(t)
    \end{bmatrix} dt = 0
    \]
    for $l\in [m]$.
    
\subsection{Real-valued basis functions and vector-valued coefficients}
\label{sec:app_realbasis_veccoeff}
We have vector-valued function $q=(q_1,\dotsb,q_n)\in U$, $m$-dimensional space $U_f\subset L^2(I)$ spanned by $\{\phi_j(\cdot)\}_{j\in [m]}$, and wish to minimize \[
F(a):=\int_I\left\|\begin{bmatrix}
    q_1(t)\\
    \vdots\\
    q_n(t)
\end{bmatrix}-\left(\begin{bmatrix}
    a_{1,1}\\
    \vdots\\
    a_{1,n}
\end{bmatrix}\phi_1(t)+\dotsb \begin{bmatrix}
    a_{m,1}\\
    \vdots\\
    a_{m,n}
\end{bmatrix}\phi_m(t)\right)\right\|^2 dt
\]
over ${a=(a_{1,1},\dotsb,a_{m,1},\dotsb,a_{m,n})\in \mathbb{R}^{m\times n}}$. As in Section~\ref{sec:app_realbasis_realcoeff} above, we show that the minimization program with objective function $F(a)$ is convex.
\begin{align*}
    F(a)&= \int_I\sum_{r=1}^n (q_r(t)-(a_{1,r}\phi_1(t)+\dotsb a_{m,r}\phi_m(t)))^2 dt,\\
    \pdv{F}{a_{j,r}} &= \int_I -2\phi_j(t)(q_r(t)-(a_{1,r}\phi_1(t)+\dotsb a_{m,r}\phi_m(t)))dt\\
    \pdv[2]{F}{a_{j,r}} &= 2\int_I \phi_j(t)^2 dt,\;\;\;\;\;\;\pdv[2]{F}{a_{j,r}}{a_{k,r}} = 2\int_I \phi_j(t)\phi_k(t) dt,\;\;\; \;\;\;\pdv[2]{F}{a_{j,r}}{a_{k,s}} = 0 \text{\;for\;} r\neq s.
\end{align*}
The Hessian matrix 
\begin{align*}
H_F&=\begin{bmatrix}
    \pdv[2]{F}{a_{1,1}} &{\dotsb} & \pdv[2]{F}{a_{1,1}}{a_{m,1}} &\pdv[2]{F}{a_{1,1}}{a_{1,2}} &{\dotsb}& \pdv[2]{F}{a_{1,1}}{a_{m,n}}\\
    \pdv[2]{F}{a_{2,1}}{a_{1,1}} & \pdv[2]{F}{a_{2,1}} & {\dotsb} & & & \pdv[2]{F}{a_{2,1}}{a_{m,n}}\\
    \vdots &  & \ddots &  & & \vdots\\
    \pdv[2]{F}{a_{m,n}}{a_{1,1}} & {\dotsb} & {\dotsb} & & & \pdv[2]{F}{a_{m,n}}
\end{bmatrix} 
=
\begin{bmatrix}
  H_F^{(0)} & 0 & \dotsb &\\
  \vdots & \ddots & \vdots\\
   0 &\dotsb & H_F^{(0)} \\
\end{bmatrix}
\end{align*}
where 
\[
 H_F^{(0)} =\begin{bmatrix}
     2\int_I \phi_1(t)^2 dt & {\dotsb} &  2\int_I \phi_1(t)\phi_m(t) dt \\
     \vdots & \ddots & \vdots\\
     2\int_I \phi_m(t)\phi_1(t) dt & \dotsb & 2\int_I \phi_m(t)^2 dt 
 \end{bmatrix}.
\]
Thus, $H_F=2\Sigma^{-1}$ and is positive definite. We find the minimizer of $F(a)$ by setting the gradient vector $[\pdv{F}{a_{1,1}},\dots,\pdv{F}{a_{m,1}},\dots,\pdv{F}{a_{m,n}}]^T=0$. We have for all $j\in [m], r\in [n]$
\begin{align*}
\pdv{F}{a_{j,r}} = 0 &\iff \int_I \phi_j(t)(a_{1,r}\phi_1(t)+\dotsb a_{m,r}\phi_m(t))) dt = \int_I \phi_j(t) q_r(t) dt\\
& \iff \begin{bmatrix}
    \int_I \phi_j(t)\phi_1(t) dt & \dotsb & \int_I \phi_j(t)\phi_m(t) dt
\end{bmatrix}
\begin{bmatrix}
    a_{1,r}\\
    \vdots \\
    a_{m,r}
\end{bmatrix} = \int_I \phi_j(t) q_r(t) dt,
\end{align*}
so the minimizer of $F(a)$ is given by
\[
\underbrace{\begin{bmatrix}
     \int_I \phi_1(t)^2 dt & {\dotsb} &  \int_I \phi_1(t)\phi_m(t) dt \\
     \vdots & \ddots & \vdots\\
     \int_I \phi_m(t)\phi_1(t) dt & \dotsb & \int_I \phi_m(t)^2 dt 
 \end{bmatrix}}_{\Sigma_0^{-1}}
 \begin{bmatrix}
    a_{1,r}\\
    \vdots \\
    a_{m,r}
\end{bmatrix} =\begin{bmatrix}
    \int_I \phi_1(t) q_r(t) dt\\
    \vdots\\
    \int_I \phi_m(t) q_r(t) dt
\end{bmatrix}
\]
for $r\in [n]$. In terms of the matrix $\Sigma$,
\begin{equation}
\label{eqn:l2proj_veccoeff}
\begin{bmatrix}
    a_{1,1}\\
    \vdots\\
    a_{m,1}\\
    a_{1,2}\\
    \vdots\\
    a_{m,n}
\end{bmatrix}= \underbrace{\begin{bmatrix}
    \Sigma_0 &  &\\
     &\ddots &\\
    & & \Sigma_0
\end{bmatrix}}_{\Sigma}\begin{bmatrix}
    \int_I \phi_1(t) q_1(t) dt\\
    \vdots\\
    \int_I \phi_m(t) q_1(t) dt\\
    \int_I \phi_1(t) q_2(t) dt\\
   \vdots\\
    \int_I \phi_m(t) q_n(t) dt
\end{bmatrix}.
\end{equation}

\subsection{Approximations of functions in $C([0,T])$ using basis in $L^2(\mathbb{R})$}
\label{sec:app_bounded_via_unbounded}
In Section~\ref{sec:infbasis_finitefunc}, we discussed how a continuous function $q\in U$ can be approximated using basis functions in $L^2(\mathbb{R})$ by treating $q$ as a function in $L^2(\mathbb{R})$. In particular, we let $L(q)$ denote the extended function that satisfies $L(q)(t)=q(t)$ for $t\in [0,T]$, and $L(q)(t)=0$ otherwise. Let the basis functions be $\{\phi_{j}\}_{j\in [m]} \subset L^2(\mathbb{R})$. By Appendix~\ref{sec:app_realbasis_realcoeff} above, the least squares projection of $L(q)$ onto the span of $\{\phi_{j}(\cdot)\}_{j\in [m]}$ is given by
\[
\begin{bmatrix}
    \int_{-\infty}^{\infty} \phi_{1}(t)^2 dt & \dotsb & \int_{-\infty}^{\infty} \phi_{1}(t)\phi_m(t) dt\\
    \vdots & &\\
    \int_{-\infty}^{\infty} \phi_{m}(t)\phi_1(t) dt &\dotsb & \int_{-\infty}^{\infty} \phi_{m}(t)^2 dt
\end{bmatrix}\begin{bmatrix}
    a_1\\
    \vdots\\
    a_m
\end{bmatrix} = \begin{bmatrix}
    \int_{-\infty}^{\infty} L(q)(t)\phi_{1}(t) dt \\
    \vdots\\
    \int_{-\infty}^{\infty} L(q)(t)\phi_{m}(t) dt 
\end{bmatrix} =\begin{bmatrix}
    \int_{0}^{T} q(t)\phi_{1}(t) dt \\
    \vdots\\
    \int_{0}^{T} q(t)\phi_{m}(t) dt 
\end{bmatrix}.
\]
On the other hand, the least squares projection of $q$ onto the span of $\{\phi_{j}(\cdot)\mathbb{1}_{[0,T]}(\cdot)\}_{j\in [m]}$ is given by
\[
\begin{bmatrix}
    \int_{0}^{T} \phi_{1}(t)^2 dt & \dotsb & \int_{0}^{T} \phi_{1}(t)\phi_m(t) dt\\
    \vdots & &\\
    \int_{0}^{T} \phi_{m}(t)\phi_1(t) dt &\dotsb & \int_{0}^{T} \phi_{m}(t)^2 dt
\end{bmatrix}\begin{bmatrix}
    a_1\\
    \vdots\\
    a_m
\end{bmatrix}  =\begin{bmatrix}
    \int_{0}^{T} q(t)\phi_{1}(t) dt \\
    \vdots\\
    \int_{0}^{T} q(t)\phi_{m}(t) dt 
\end{bmatrix}.
\]
Thus, while using basis functions in  $L^2(\mathbb{R})$ might be more convenient in some cases, the quality of the approximation could be different.

\section{Additional Statistics on Experimental Evaluation}
In this section, we provide additional statistics on our experimental evaluation  presented in Section~\ref{sec:exp} and Appendix~\ref{sec:app_cgpexp}, for the results on the ECG and taxi trajectory datasets. Observe that for the taxi trajectory dataset, the median errors for the baseline methods are noticeably smaller than the mean errors (Fig.~\ref{fig:Taxi-GP_quant} vs. \ref{fig:Taxi-GP_avg}, and Fig.~\ref{fig:Taxi-CGP_quant} vs. \ref{fig:Taxi-CGP_avg}), indicating that there were extreme curves in the taxi trajectory dataset that severely damaged the performance of the baseline methods; on the other hand, the median and mean errors are similar in our methods, so our methods are less impacted by such curves. Overall, the advantage of our methods is clear.
\begin{figure}[h]
      \centering
    \begin{subfigure}[t]{0.3\linewidth}
        \includegraphics[width = \textwidth]{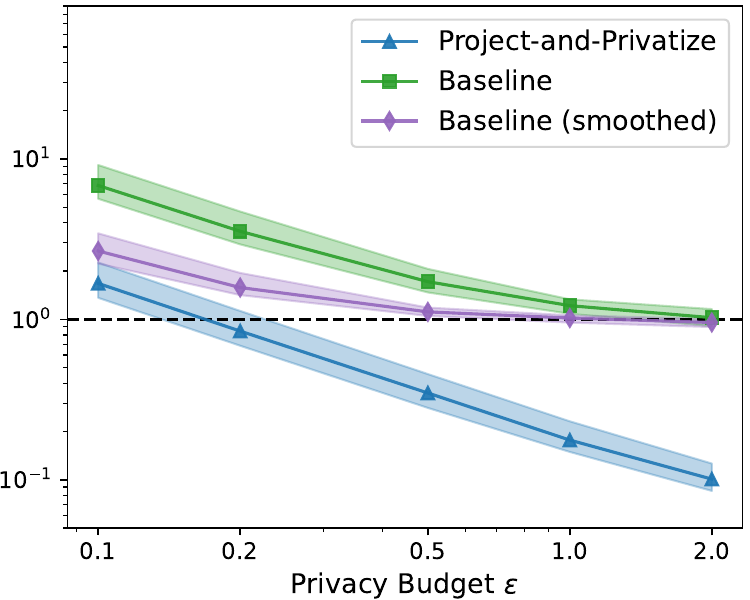}
        \subcaption{GP $L^2$ error}
        \label{fig:ECG-GP_quant}
    \end{subfigure}
    \;\;\;\;\;
    \begin{subfigure}[t]{0.3\linewidth}
        \includegraphics[width = \textwidth]{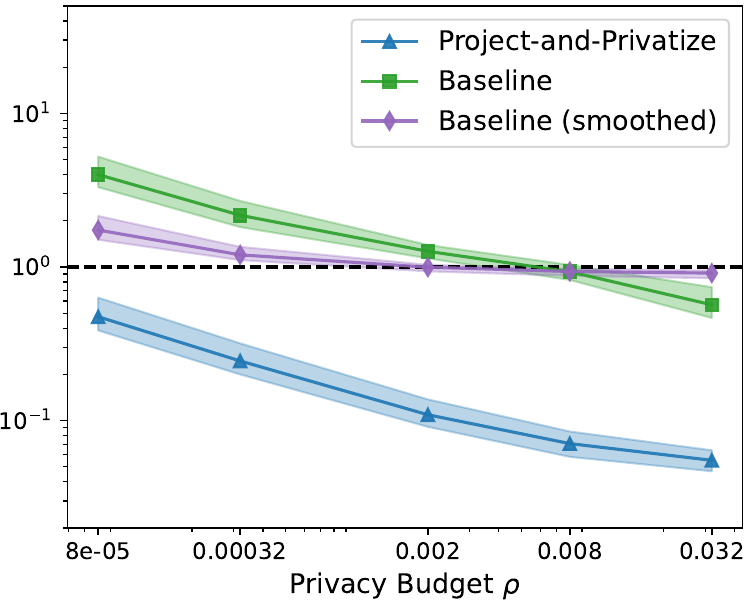} 
        \subcaption{CGP $L^2$ error}
        \label{fig:ECG-CGP_quant}
    \end{subfigure}
    \vspace{-4pt}
    \caption{$L^2$ error on the ECG dataset: median, $25$th and $75$th percentiles.}
    \label{fig:ECG_quant}
\end{figure}
\begin{figure}[h]
      \centering
    \begin{subfigure}[t]{0.3\linewidth}
        \includegraphics[width = \textwidth]{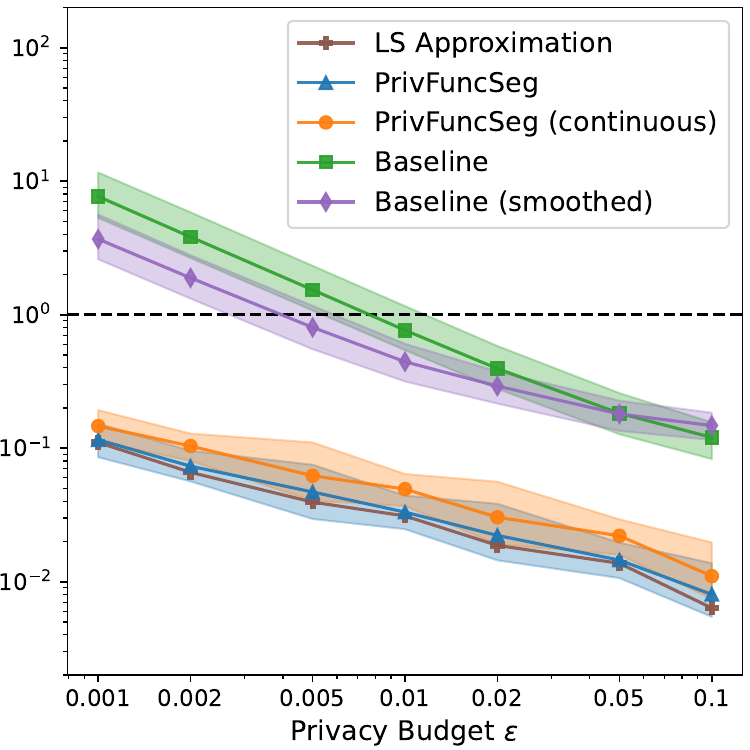}
        \subcaption{GP $L^2$ error}
        \label{fig:Taxi-GP_quant}
    \end{subfigure}
    \;\;\;\;\;
    \begin{subfigure}[t]{0.3\linewidth}
        \includegraphics[width = \textwidth]{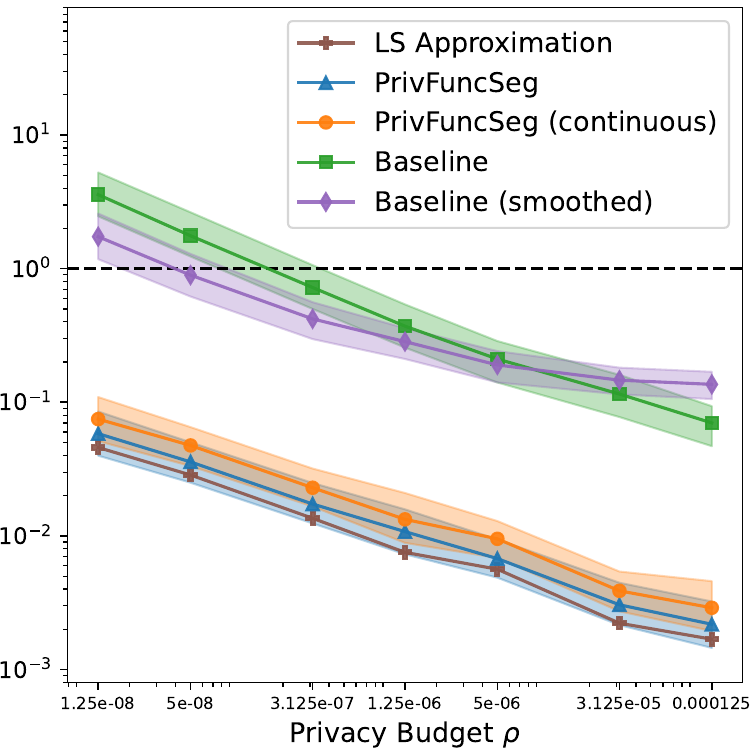} 
        \subcaption{CGP $L^2$ error}
        \label{fig:Taxi-CGP_quant}
    \end{subfigure}
    \vspace{-4pt}
    \caption{$L^2$ error on the taxi trajectory dataset: median, $25$th and $75$th percentiles.}
    \label{fig:Taxi_quant}
\end{figure}
\end{document}